\newtheorem{theorem}{Theorem}
\newtheorem{assumption}{Assumption}
\newtheorem{corollary}{Corollary}
\newtheorem{lemma}{Lemma}
\titlespacing{\section}{0pt}{1.5ex}{0.5ex}
\titlespacing{\subsection}{0pt}{1ex}{0.5ex}
\titlespacing{\subsubsection}{0pt}{0.5ex}{0ex}
\begin{document}

	\title{Discretizing Unobserved Heterogeneity\footnote{We thank Anna Simoni, Manuel Arellano, Neele Balke, Jesus Carro, Gary Chamberlain, Tim Christensen, Alfred Galichon, Chris Hansen, Joe Hotz, Gr\'egory Jolivet, Arthur Lewbel, Anna Mikusheva, Roger Moon, Whitney Newey, Juan Pantano, Philippe Rigollet, Martin Weidner, and seminar audiences at various places for comments. The authors acknowledge support from the NSF grant
			number SES-1658920. The usual disclaimer applies.}
	}
	\author{St\'{e}phane Bonhomme\footnote{University of Chicago, sbonhomme@uchicago.edu} \and Thibaut Lamadon\footnote{University of Chicago, lamadon@uchicago.edu}  \and Elena Manresa\footnote{New York University, elena.manresa@nyu.edu}}
	\date{Revised draft: January 2021}
	\maketitle

	\begin{abstract}

		\begin{spacing}{1.2}
			\noindent We study discrete panel data methods where unobserved heterogeneity is revealed in a first step, in environments where population heterogeneity is not discrete. We focus on \emph{two-step grouped fixed-effects} (GFE) estimators, where individuals are first classified into groups using \emph{kmeans} clustering, and the model is then estimated allowing for group-specific heterogeneity. Our framework relies on two key properties: heterogeneity is a function --- possibly nonlinear and time-varying ---  of a low-dimensional continuous latent type, and informative moments are available for classification. We illustrate the method in a model of wages and labor market participation, and in a probit model with time-varying heterogeneity. We derive asymptotic expansions of two-step GFE estimators as the number of groups grows with the two dimensions of the panel. We propose a data-driven rule for the number of groups, and discuss bias reduction and inference.  
		\end{spacing}
		

		\bigskip
		
		\textbf{JEL codes: }C23, C38.
		
		\textbf{Keywords: }Unobserved heterogeneity, panel data, kmeans clustering, dimension reduction.

	\end{abstract}
	
	\global\long\def\ind{\mathbbm{1}}
	\global\long\def\d{\mathrm{d}}
	\global\long\def\t{\intercal}
	\global\long\def\RR{\mathbb{R}}
	\global\long\def\defeq{=}

	\clearpage

	\section{Introduction\label{Intro_sec}}

	\noindent In both reduced-form and structural work in economics, it is common to model unobserved heterogeneity as a small number of discrete types. Various estimation strategies are available, including discrete-type random-effects (as in Keane and Wolpin, 1997, and many other applications) and grouped fixed-effects (as recently studied by Hahn and Moon, 2010, and Bonhomme and Manresa, 2015). These methods require the researcher to jointly estimate individual heterogeneity and model parameters.\footnote{Also related, nonparametric maximum likelihood methods (e.g., Heckman and Singer, 1984) rely on joint estimation of the distribution of heterogeneity and the parameters.} In addition, little is known about their properties when individual heterogeneity is not discrete in the population.\footnote{In a network context, Gao \textit{et al.} (2015) provide results for stochastic blockmodels under continuous heterogeneity.} In this paper, we study two-step discrete estimators for panel data, and provide conditions for their validity when heterogeneity is continuous.

	We focus on \emph{two-step grouped fixed-effects} (GFE) estimators. In a first step, we classify individuals based on a set of individual-specific moments, using the \emph{kmeans} clustering algorithm. The aim of the kmeans classification is to group together individuals whose latent types are most similar.\footnote{Buchinsky \textit{et al.} (2005) also propose to group individuals in a first step using kmeans.} In a second step, we estimate the model by allowing for group-specific heterogeneity. This second step is similar to fixed-effects (FE) estimation, albeit it involves a smaller number of parameters that are group-specific instead of individual-specific. We analyze the properties of these two-step estimators in panel data models where heterogeneity is continuous. Hence, in contrast with existing theoretical justifications for discrete-type methods, here we use discrete heterogeneity as a dimension reduction device rather than as a substantive assumption about population unobservables.

	Our approach is targeted to environments with two key properties. \emph{First}, unobserved heterogeneity is a function of a low-dimensional latent variable. We do not restrict this latent \emph{type} to be discrete. In many economic models, agents' heterogeneity in preferences or technology is driven by a low-dimensional type, which enters the model nonlinearly and may affect multiple outcomes. As an example, we study a model of participation in the labor market where the worker's utility is a function of her productivity type, which in turn determines her wage. GFE provides a tool to exploit such nonlinear factor structures. 
	
	\emph{Second}, the first-step moments satisfy an injectivity condition, which requires any two individuals with the same population moments to have the same type. The choice of moments is important to ensure good performance. In examples, we show how suitable moments arise naturally. In models with exogenous covariates, we propose and analyze the use of conditional moments to recover latent types.

	Our setup also covers models where heterogeneity varies over time. Unlike additive FE methods and interactive FE methods based on linear factor structures (Bai, 2009), GFE does not require heterogeneity to take an additive or interactive form. As an illustration, we compare GFE and FE estimators in a probit model where heterogeneity is a nonlinear function of a time-invariant factor loading and a time-specific factor.

	Our main results are large-$N,T$ asymptotic expansions of two-step GFE estimators under time-invariant and time-varying continuous heterogeneity. In both settings, GFE is consistent as the number of groups grows with the sample size, under conditions that we provide. We find that, when the population heterogeneity is not discrete, estimating group membership induces an incidental parameter bias, similarly to FE methods. Moreover, since discreteness is an approximation in our setting, GFE is affected by approximation error. We propose a simple data-driven rule for the number of groups that controls the approximation error, and discuss how to reduce incidental parameter bias for inference.

The outline of the paper is as follows. We introduce the setup and two-step GFE estimators in Section \ref{FE_sec}, study their asymptotic properties in Section \ref{Twostep_sec}, and outline several extensions in Section \ref{sec_Extens}. The main proofs may be found in the appendix, and the supplemental material contains additional results.

\section{Two-step grouped fixed-effects (GFE)\label{FE_sec}}

\noindent We consider a panel data setup, where we denote outcome variables and exogenous covariates as $Y_i{=}(Y_{i1}',...,Y_{iT}')'$ and $X_{i}{=}\left(X_{i1}',...,X_{iT}'\right)'$, respectively, for $i{=}1,...,N$. In our theory we cover two models. In the first one, unobserved heterogeneity is time-invariant. In this case, the conditional log-density of $Y_i$ given $X_i$ is given by:\footnote{In models with first-order dependence, we assume that $Y_{i0}$ is observed and we condition on it. Higher-order dependence can be accommodated similarly. In dynamic settings, $Y_{it}$ may contain sequentially exogenous covariates in addition to outcome variables.}
\begin{equation}\ln f_i(\alpha_{i0},\theta_0)=\sum_{t=1}^T\ln f(Y_{it}\,|\, Y_{i,t-1},X_{it},\alpha_{i0},\theta_0),\label{eq_ln_fi}\end{equation}
and the log-density of exogenous covariates $X_i$ takes the form:
$$\ln g_i(\mu_{i0})=\sum_{t=1}^T\ln g(X_{it}\,|\, X_{i,t-1},\mu_{i0}),$$
where $\theta_0$ is a vector of common parameters, and $\alpha_{i0}$ and $\mu_{i0}$ are individual-specific parameters. We leave the form of $g$ unrestricted, and in estimation we will use a conditional likelihood approach based on $f_i$ alone. In other words, in applications the researcher only needs to specify the parametric form of $f_i(\alpha_{i0},\theta_0)$ in (\ref{eq_ln_fi}). However, the heterogeneity $\mu_{i0}$ in covariates plays an important role in our theory. 

In the second model, unobserved heterogeneity varies over time. Such variation in unobservables over calendar time (e.g., business cycle), age (e.g., life cycle), counties, or markets, is of interest in many applications. In the time-varying case, log-densities take the form:
\begin{align*}\ln f_i(\alpha_{i0},\theta_0)&=\sum_{t=1}^T\ln f(Y_{it}\,|\, Y_{i,t-1},X_{it},\alpha_{it0},\theta_0),\\
	\ln g_i(\mu_{i0})&=\sum_{t=1}^T\ln g(X_{it}\,|\, X_{i,t-1},\mu_{it0}),\end{align*}
where $\alpha_{i0}=(\alpha_{i10}',...,\alpha_{iT0}')'$ and $\mu_{i0}=(\mu_{i10}',...,\mu_{iT0}')'$. In both models we are interested in estimating $\theta_0$, as well as average effects depending on $\alpha_{10},...,\alpha_{N0}$.


\subsection{Main assumptions}

\noindent GFE relies on two key assumptions that we now present. We defer the presentation of regularity conditions until Section \ref{Twostep_sec}. \emph{First}, we assume that unobserved heterogeneity is a function of a low-dimensional vector $\xi_{i0}$. 

\begin{assumption}{(heterogeneity)}\label{ass_alpha}
	
	\noindent	(a) Time-invariant heterogeneity: There exist $\xi_{i0}$ of fixed dimension $d$, and two Lipschitz-continuous functions $\alpha$ and $\mu$, such that $\alpha_{i0}={\alpha}(\xi_{i0})$ and $\mu_{i0}={\mu}(\xi_{i0})$. 
	
	\noindent	(b) Time-varying heterogeneity: 
	There exist $\xi_{i0}$ of fixed dimension $d$, $\lambda_{t0}$ of dimension $d_{\lambda}$, and two functions $\alpha$ and $\mu$ that are Lipschitz-continuous in their first argument, such that $\alpha_{it0}={\alpha}(\xi_{i0},\lambda_{t0})$ and $\mu_{it0}={\mu}(\xi_{i0},\lambda_{t0})$.  
\end{assumption}

We will refer to $\xi_{i0}$ as an individual \emph{type}, and to $d$ as the \emph{dimension} of heterogeneity. The researcher does not need to know $d$, $\alpha$, or $\mu$ in applications. In models with time-varying unobserved heterogeneity, Assumption \ref{ass_alpha} requires unobservables to follow a factor structure. The link between $\alpha_{it0}$, $\xi_{i0}$ and $\lambda_{t0}$ may be nonlinear, the linear structure $\alpha_{it0}=\xi_{i0}'\lambda_{t0}$ (Bai, 2009) being covered as a special case. Moreover, the dimension of $\lambda_{t0}$ is unrestricted. Our theory will show that the performance of two-step GFE crucially relies on $\xi_{i0}$ being low-dimensional, a leading case being $d=1$. We provide examples in the next subsection.

\emph{Second}, we rely on individual-specific moment vectors $h_i$ that are informative about the types $\xi_{i0}$. We state this formally as our second main assumption, where $\|\cdot\|$ denotes an Euclidean norm. 

\begin{assumption}{(injective moments)}\label{ass_inj}
	
	\noindent There exist vectors $h_i$ of fixed dimension, and a Lipschitz-continuous function $\varphi$, such that ${\limfunc{plim}}_{T\rightarrow\infty}\, h_i=\varphi(\xi_{i0})$, and $\frac{1}{N}\sum_{i=1}^N\|h_i-\varphi(\xi_{i0})\|^2=O_p\left(1/T\right)$ as $N,T$ tend to infinity. Moreover, there exists a Lipschitz-continuous function $\psi$ such that $\xi_{i0}=\psi(\varphi(\xi_{i0}))$.
\end{assumption}

Assumption \ref{ass_inj} requires the individual moment vector $h_i$ to be informative about $\xi_{i0}$, in the sense that, for large $T$, $\xi_{i0}$ can be uniquely recovered from $h_i$. Neither $\varphi$ nor $\psi$ (which may depend on $\theta_0$) need to be known to the econometrician. Intuitively, injectivity guarantees that one can separate the types of two individuals $\xi_{i0}$ and $\xi_{i'0}$ by comparing their moments $h_i$ and $h_{i'}$. For example, an average $h_i=\frac{1}{T}\sum_{t=1}^Th(Y_{it},X_{it})$ will, under Assumption \ref{ass_alpha} and suitable regularity conditions, converge as $T$ tends to infinity to a function $\varphi(\xi_{i0})$ of the type $\xi_{i0}$. We require $\varphi$ to be injective.

The convergence rate in Assumption \ref{ass_inj} requires appropriate conditions on the serial dependence of $Y_{it}$ and $X_{it}$. In models with time-varying heterogeneity, $\varphi$ will also depend on the $\lambda_{t0}$ process. In such models, Assumption \ref{ass_inj} requires the moments to be informative about $\xi_{i0}$, and not $\lambda_{t0}$. Injectivity is a key requirement for consistency of two-step GFE estimators. More generally, the choice of moments $h_i$ is important for finite-sample performance.

\subsection{Examples}

\noindent 
To illustrate the framework we now describe two examples, for which we will provide illustrative simulations in Subsection \ref{subsec_ex}. First, consider a dynamic model of wages $W_{it}^*$ and labor force participation $Y_{it}$:
\begin{eqnarray}\label{eq_ex_prob2}
\left\{\begin{array}{ccl}
Y_{it}&=&\boldsymbol{1}\left\{u(\alpha_{i0})\geq c(Y_{i,t-1};\theta_0)+U_{it}\right\},\\
W_{it}^*&=&\alpha_{i0}+V_{it},\\ W_{it}& =& Y_{it}W_{it}^*,\end{array}\right.
\end{eqnarray}
where the wage $W_{it}^*$ is only observed when $i$ works, $U_{it}$ are i.i.d. standard normal, independent of the past $Y_{it}$'s and $\alpha_{i0}$, and $V_{it}$ are i.i.d. independent of all $U_{it}$'s, $Y_{i0}$, and $\alpha_{i0}$. Here the same scalar expected payoff $\alpha_{i0}=\xi_{i0}$, unobserved to the econometrician, drives the wage and the decision to work. Individuals have common preferences denoted by the utility function $u$, the cost function $c$ is state-dependent, and both $u$ and $c$ are unknown to the econometrician. 

In this setting, GFE provides a natural approach to exploit the functional link between $\alpha_{i0}$ and $u(\alpha_{i0})$, and to learn about the type $\alpha_{i0}$ using both wages and participation. For instance, when $h_i=(\overline{W}_i,\overline{Y}_i)'$, where $\overline{Z}_{i}=\frac{1}{T}\sum_{t=1}^TZ_{it}$ denotes the individual mean of $Z_{it}$, injectivity is satisfied under mild conditions, provided $ \overline{W}_i=\alpha_{i0}\overline{Y}_i+o_p(1)$ and ${\limfunc{plim}}_{T\rightarrow\infty}\,\overline{Y}_i>0$. 

Fixed-effects (FE) is a possible approach to estimate $\theta_0$ in (\ref{eq_ex_prob2}). However, a conventional FE estimator would treat $\alpha_{i0}$ and $u_{i0}= u(\alpha_{i0})$ as unrelated parameters, so the FE estimate of $\theta_0$ would be solely based on the binary participation decisions. Another strategy would be to rely on discrete-type random-effects methods, which are typically based on joint estimation. In contrast, we implement GFE in two steps with no need for iterative estimation, and we justify the estimator in environments where heterogeneity is not restricted to be discrete.


As a second example, consider the following probit model with time-varying heterogeneity:
\begin{eqnarray}\label{eq_ex_prob}
\left\{\begin{array}{ccl}
Y_{it}&=&\boldsymbol{1}\left\{X_{it}'\theta_0+\alpha_{it0}+U_{it}\geq 0\right\},\\
X_{it}&=&\mu_{it0}+V_{it},\end{array}\right.
\end{eqnarray}
where $U_{it}$ are i.i.d. standard normal, independent of all $V_{it}$'s, $\alpha_{it0}$'s, and $\mu_{it0}$'s, and $V_{it}$ are i.i.d. independent of all $\alpha_{it0}$'s and $\mu_{it0}$'s. Under Assumption \ref{ass_alpha}, $\alpha_{it0}$ and $\mu_{it0}$ depend on a low-dimensional vector $\xi_{i0}$ of factor loadings, so $\alpha_{it0}={\alpha}(\xi_{i0},\lambda_{t0})$ and $\mu_{it0}={\mu}(\xi_{i0},\lambda_{t0})$. Here $d$ is the dimension of the type $\xi_{i0}$ governing both $\alpha_{it0}$ and $\mu_{it0}$. 

To motivate why, in static models with covariates such as (\ref{eq_ex_prob}), $\alpha_{it0}$ and $\mu_{it0}$ may depend on a common low-dimensional type $\xi_{i0}$, suppose that, in every period, agent $i$ chooses $X_{it}$ based on expected utility or profit maximization. She observes $\xi_{i0}$ and $\lambda_{t0}$ --- which enter outcomes through $\alpha_{it0}$ --- and takes her decision before the i.i.d. shock $U_{it}$ is realized. In such a case, $X_{it}$ will be a function of $\xi_{i0}$ and $\lambda_{t0}$, as well as idiosyncratic factors $V_{it}$ in the agent's information set. Here we assume that the agent's information set, and primitives such as preferences or costs, do not include other $i$-specific elements beyond $\xi_{i0}$.\footnote{This example is reminiscent of Mundlak's (1961) classic analysis of farm production functions, where soil quality $\xi_{i0}$ is observed to the farmer but latent to the analyst.}

When $\alpha(\cdot,\cdot)$ is additive or multiplicative in its arguments, model (\ref{eq_ex_prob}) can be estimated using two-way FE (Fern\'andez-Val and Weidner, 2016) or interactive FE (Bai, 2009, Chen \textit{et al.}, 2020), respectively. However, when $\alpha(\cdot,\cdot)$ is unknown, these fixed-effects estimators are inconsistent in general. In contrast, GFE will remain consistent when unobservables are unknown nonlinear functions of factor loadings $\xi_{i0}$ and factors $\lambda_{t0}$, and injectivity holds. Taking ${h}_i=(\overline{Y}_i,\overline{X}_i')'$ as moments in model (\ref{eq_ex_prob}), injectivity is satisfied when types have monotone effects on the heterogeneity components.\footnote{To see this, consider the case where $\alpha_{it0}$ is the only component of heterogeneity (i.e., $\mu_{it0}=0$ in (\ref{eq_ex_prob})), and take $h_i=\overline{Y}_i$. Letting $G$ denote the cdf of $-(V_{it}'\theta_0+U_{it})$, injectivity will hold when $\alpha(\cdot,\cdot)$ is strictly increasing in its first argument and $G$ is strictly increasing, since then $\varphi(\xi)= {\limfunc{plim}}_{T\rightarrow\infty}\,\frac{1}{T}\sum_{t=1}^TG(\alpha(\xi,\lambda_{t0}))$ is strictly increasing.} More generally, in Assumption \ref{ass_inj} we require that the latent type $\xi_{i0}$ can be asymptotically recovered from a moment vector whose dimension is not growing with the sample size.

\subsection{Estimator}

\noindent Two-step GFE consists of a \emph{classification} step and an \emph{estimation} step.

\paragraph{First step: classification.} We rely on the individual-specific moments $h_i$ to learn about the individual types $\xi_{i0}$. Specifically, we partition individuals into $K$ groups, corresponding to group indicators $\widehat{k}_i\in\{1,...,K\}$ , by computing:
\begin{equation}\label{eq_firststep}
\left(\widehat{h}(1),...,\widehat{h}(K),\widehat{k}_1,...,\widehat{k}_N\right)=\underset{\left(\widetilde{h}(1),...,\widetilde{h}(K),k_1,...,k_N\right)}{\limfunc{argmin}}\,\,\,\sum_{i=1}^N\left\|h_i-\widetilde{h}(k_i)\right\|^2,
\end{equation} 
where $\{k_i\}$ are partitions of $\{1,...,N\}$ into $K$ groups, and $\widetilde{h}(k)$ is a vector. Note that $\widehat{h}(k)$ is simply the mean of $h_i$ in group $\widehat{k}_i=k$. 

In the \emph{kmeans} optimization problem (\ref{eq_firststep}), the minimum is taken with respect to all possible partitions $\{k_i\}$. Fast and stable optimization methods such as Lloyd's algorithm are available, although computing a global minimum may be challenging; see Bonhomme and Manresa (2015) for references. Following the literature, we will focus on the asymptotic properties of the global minimum and abstract from optimization error. Lastly, note that the quadratic loss function in (\ref{eq_firststep}) can accommodate weights on different components of $h_i$, although here for simplicity we present the unweighted case. 

\paragraph{Second step: estimation.} We maximize the log-likelihood function with respect to common parameters $\theta$ and group-specific effects $\alpha$, where the groups are given by the $\widehat{k}_i$ estimated in the first step. We define the two-step GFE estimator as:
\begin{equation}\label{eq_secondstep}
\left(\widehat{\theta},\widehat{\alpha}(1),...,\widehat{\alpha}(K)\right)=\underset{\left(\theta,\alpha(1),...,\alpha(K)\right)}{\limfunc{argmax}}\,\,\,\sum_{i=1}^N\ln f_i\left(\alpha\left(\widehat{k}_i\right),\theta\right).
\end{equation} 
Note that, in contrast to fixed-effects (FE) maximum likelihood, this second step involves a maximization with respect to $K$ group-specific parameters instead of $N$ individual-specific ones. In models with time-varying heterogeneity, $\alpha(k)$ will simply be a vector $(\alpha_1(k)',...,\alpha_T(k)')'$.

\paragraph{Choice of $K$.} Two-step GFE estimation requires setting a number of groups $K$. We propose a simple data-driven selection rule based on the first step. The convergence rate of the kmeans estimator (and the rate of the GFE estimator) will be governed by two quantities: the kmeans objective function $\widehat{Q}(K)=\frac{1}{N}\sum_{i=1}^N\|h_i-\widehat{h}(\widehat{k}_i)\|^2$, which decreases as $K$ gets larger and the group approximation becomes more accurate, and the variability $V_h=\mathbb{E}[\|h_i-\varphi(\xi_{i0})\|^2]$ of the moment $h_i$, which does not depend on $K$. We take the smallest $K$ that guarantees that $\widehat{Q}(K)$ is of the same or lower order as $V_h$. That is, letting $\widehat{V}_{h}=V_h+o_p(1/T)$, we suggest setting:
\begin{equation}\label{choice_K_eq}
\widehat{K}=\underset{K\geq 1}{\limfunc{min}}\, \left\{K \, : \, \widehat{Q}(K)\leq  \gamma\widehat{V}_{h}\right\},
\end{equation}
where $\gamma\in(0,1]$ is a user-specified parameter.\footnote{When $h_i=\frac{1}{T}\sum_{t=1}^Th(Y_{it},X_{it})$ and observations are independent over time, one may take $\widehat{V}_{h}=\frac{1}{NT^2}\sum_{i=1}^N \sum_{t=1}^T \|h(Y_{it},X_{it})-h_i\|^2$. With dependent data, one can use trimming or the bootstrap to estimate $V_h$ (Hahn and Kuersteiner, 2011, Arellano and Hahn, 2007).} In the simulations in the next subsection we will set $\gamma=1$, although smaller $\gamma$ values corresponding to larger $K$'s will also be supported by our theory.

\subsection{Illustrative simulations\label{subsec_ex}}

%


\noindent To illustrate the performance of GFE in models where heterogeneity follows a nonlinear factor structure, we present the results of a small-scale simulation study based on our two examples (\ref{eq_ex_prob2}) and (\ref{eq_ex_prob}). In both cases, we assume that the type $\xi_{i0}$ governing heterogeneity is scalar. We compare the bias of GFE to that of FE and interactive FE estimators. In the supplemental material, we provide details on the simulations and report additional results.

\begin{figure}[tb!]
	\caption{Model (\ref{eq_ex_prob2}) of wages and participation\label{Fig_Earnings}}
	\begin{center}
		\includegraphics[width=100mm]{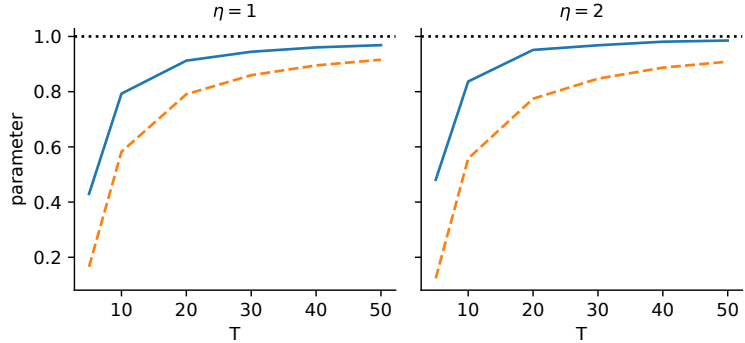}
	\end{center}
	\vspace{-20pt}
	{\footnotesize\textit{Notes: Means of $c(0;
			\widehat{\theta})-c(1;
			\widehat{\theta})$ over 1000 simulations. GFE is indicated in solid, FE is in dashed, and the truth $c(0;\theta_0)-c(1;\theta_0)=1$ is in dotted. $N=1000$, and $T$ is indicated on the x-axis. $\eta$ is the risk aversion parameter in $u(\cdot)$. See the supplemental material for details.}}
\end{figure}

\begin{figure}[b!]
	\caption{Probit model (\ref{eq_ex_prob}) with time-varying heterogeneity\label{Fig_TV}}
	\begin{center}
		\begin{tabular}{cc}
			\hspace*{-0.5cm}  \includegraphics[width=150mm]{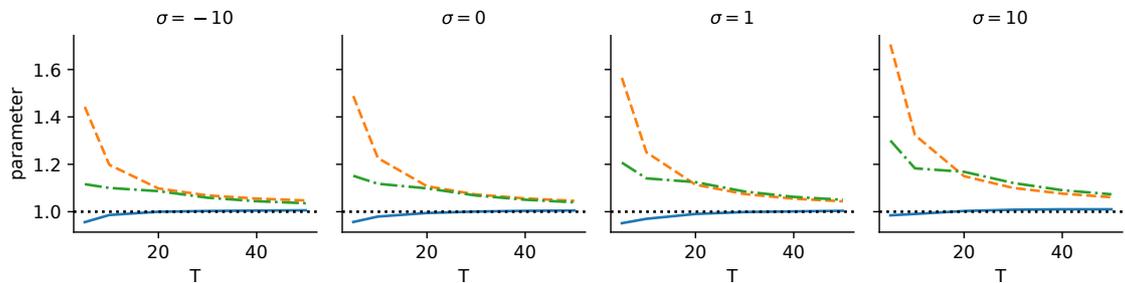}
		\end{tabular}
	\end{center}	
	\vspace{-13pt}
	{\footnotesize\textit{Notes: Means of $\widehat{\theta}$ over 1000 simulations. GFE is indicated in solid, FE is in dashed, interactive FE is in dash-dotted, and the truth $\theta_0=1$ is in dotted. $N=1000$, and $T$ is indicated on the x-axis. $\sigma$ is the substitution parameter in $\alpha(\cdot,\cdot)$. See the supplemental material for details.}}
\end{figure}


In Figure \ref{Fig_Earnings}, we compare GFE and FE in model (\ref{eq_ex_prob2}), using a CRRA functional form: $u(\alpha)=\frac{{e^{\alpha}}^{(1-\eta)}-1}{1-\eta}$, with a risk aversion parameter $\eta\in\{1,2\}$. We focus on the difference in costs $c(0;\theta_0)-c(1;\theta_0)$, which measures the degree of state dependence in participation decisions. We take $h_i=(\overline{W}_i,\overline{Y}_i)'$ as moments for GFE, and report average parameter estimates over 1000 simulations. We set $N{=}1000$ and vary T between 5 and 50. We find that FE is more biased than GFE for both values of risk aversion. This is consistent with wages and participation providing informative moments about the latent type in this setting.


In Figure \ref{Fig_TV} we compare GFE, FE, and interactive FE in model (\ref{eq_ex_prob}) with $X_{it}$ scalar, using a CES specification: $\alpha_{it0}=\left(a\xi_{i0}^\sigma +(1-a)\lambda_{t0}^{\sigma}\right)^{\frac{1}{\sigma}}$,
for $\sigma{\in}\{-10,0,1,10\}$ and $a{=}0.5$, and $\mu_{it0}=\alpha_{it0}$. The factors $\lambda_{t0}$ and the individual loadings $\xi_{i0}$ enter heterogeneity in a nonlinear way. We show estimates of ${\theta}_0$ for various estimators: GFE, FE with additive individual and time effects, and interactive FE with a single multiplicative factor. We use $(\overline{Y}_i,\overline{X}_i)'$ as moments for GFE. Note that both $\overline{Y}_i$ and $\overline{X}_i$ are informative about $\xi_{i0}$ in this data generating process. We report parameter averages over 1000 simulations, for $N{=}1000$. We find that, while GFE, FE, and interactive FE are all biased, the bias of GFE is smaller across all $\sigma$ values.\footnote{Large-$N,T$ theory implies that additive and interactive FE are consistent when $\sigma=1$ and $\sigma=0$, respectively. Figure \ref{Fig_TV} shows that, despite being large-$N,T$ consistent in these specifications, in our simulations, additive and interactive FE have larger biases than GFE for the $N$ and $T$ values we consider.  }

%

\section{Asymptotic properties\label{Twostep_sec}}

\noindent In this section we provide asymptotic expansions for two-step GFE estimators. Our first result is a rate of convergence for kmeans. Let us define the \emph{approximation error} one would make if one were to discretize the latent types $\xi_{i0}$ directly, as:
\begin{equation}
B_{\xi}(K)=\underset{\left(\widetilde{\xi}(1),...,\widetilde{\xi}(K),k_1,...,k_N\right)}{\limfunc{min}}\,\,\, \frac{1}{N}\sum_{i=1}^N \left\|\xi_{i0}-\widetilde{\xi}(k_i)\right\|^2,\label{eq_approx_error}
\end{equation}
where, similarly to (\ref{eq_firststep}), the minimum is taken with respect to all partitions $\{k_i\}$ and vectors $\widetilde{\xi}(k)$. In the asymptotic analysis we let $T=T_N$ and $K=K_N$ tend to infinity jointly with $N$. 


\begin{lemma}\label{theo1}Let Assumption \ref{ass_inj} hold. Let $\widehat{h}(1),...,\widehat{h}(K)$ and $\widehat{k}_1,...,\widehat{k}_N$ given by (\ref{eq_firststep}). Then, as $N,T,K$ tend to infinity we have:
	$$\frac{1}{N}\sum_{i=1}^N \left\|\widehat{h}(\widehat{k}_i)-\varphi(\xi_{i0})\right\|^2=O_p\left(\frac{1}{T}\right)+O_p\left(B_{\xi}(K)\right).$$
	
\end{lemma}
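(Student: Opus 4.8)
The plan is to bound the quantity of interest by the minimized kmeans objective plus the sampling error of the moments, and then to control the kmeans objective by exhibiting a convenient suboptimal partition. Writing $\widehat{Q}(K)=\frac{1}{N}\sum_{i=1}^N\|h_i-\widehat{h}(\widehat{k}_i)\|^2$ for the attained kmeans value, the first step is the elementary decomposition $\widehat{h}(\widehat{k}_i)-\varphi(\xi_{i0})=(\widehat{h}(\widehat{k}_i)-h_i)+(h_i-\varphi(\xi_{i0}))$ together with $\|a+b\|^2\le 2\|a\|^2+2\|b\|^2$, which yields
\[
\frac{1}{N}\sum_{i=1}^N\|\widehat{h}(\widehat{k}_i)-\varphi(\xi_{i0})\|^2\le 2\widehat{Q}(K)+\frac{2}{N}\sum_{i=1}^N\|h_i-\varphi(\xi_{i0})\|^2.
\]
By Assumption \ref{ass_inj} the last average is $O_p(1/T)$, so everything reduces to showing $\widehat{Q}(K)=O_p(1/T)+O_p(B_\xi(K))$.

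The second and main step is to bound $\widehat{Q}(K)$ from above. Because $\widehat{Q}(K)$ is the global minimum of the kmeans criterion over all partitions and all centroids, any feasible choice of partition and centroids gives an upper bound. The natural choice is the partition $\{k_i^*\}$ and type-centroids $\{\widetilde{\xi}^*(k)\}$ that attain the approximation error $B_\xi(K)$ in (\ref{eq_approx_error}); I would feed this partition into the $h$-problem and use $\varphi(\widetilde{\xi}^*(k))$ as the centroid of group $k$, giving $\widehat{Q}(K)\le\frac{1}{N}\sum_{i=1}^N\|h_i-\varphi(\widetilde{\xi}^*(k_i^*))\|^2$. Splitting again with the triangle inequality into $\|h_i-\varphi(\xi_{i0})\|^2$ and $\|\varphi(\xi_{i0})-\varphi(\widetilde{\xi}^*(k_i^*))\|^2$, the first average is once more $O_p(1/T)$ by Assumption \ref{ass_inj}, while the second is handled by the Lipschitz continuity of $\varphi$: with Lipschitz constant $L_\varphi$,
\[
\frac{1}{N}\sum_{i=1}^N\|\varphi(\xi_{i0})-\varphi(\widetilde{\xi}^*(k_i^*))\|^2\le L_\varphi^2\,\frac{1}{N}\sum_{i=1}^N\|\xi_{i0}-\widetilde{\xi}^*(k_i^*)\|^2=L_\varphi^2\,B_\xi(K).
\]
Combining the displays gives $\widehat{Q}(K)=O_p(1/T)+O_p(B_\xi(K))$, and hence the claim.

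The argument is essentially non-asymptotic: each inequality holds pathwise for every $N,T,K$, and the stochastic orders enter only through the single rate $\frac{1}{N}\sum_{i=1}^N\|h_i-\varphi(\xi_{i0})\|^2=O_p(1/T)$ from Assumption \ref{ass_inj}. The only genuinely substantive point --- and the step I would be most careful with --- is the feasibility/transfer argument: recognizing that kmeans minimality converts the task into upper-bounding the criterion at a cleverly chosen partition, and that the right partition is the one that is optimal for discretizing the types directly, whose cost transfers from $\xi$-space to $h$-space at the mere price of the Lipschitz constant of $\varphi$. Note that injectivity of $\varphi$ (existence of $\psi$ with $\xi_{i0}=\psi(\varphi(\xi_{i0}))$) is \emph{not} needed for this lemma, since the conclusion lives in moment space; it will matter only when translating these bounds back into statements about the types $\xi_{i0}$ in the subsequent results.
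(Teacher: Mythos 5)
Your proof is correct and follows essentially the same route as the paper's: a triangle-inequality decomposition into the kmeans objective plus moment noise, the global optimality of kmeans against a feasible comparison solution, and the Lipschitz continuity of $\varphi$ to transfer $B_{\xi}(K)$ into moment space. The only cosmetic difference is that the paper routes the comparison through the optimal kmeans solution for the noiseless moments $\varphi(\xi_{i0})$ (introducing the intermediate quantity $B_{\varphi(\xi)}(K)\leq \tau^2 B_{\xi}(K)$), whereas you plug in the $\varphi$-image of the optimal $\xi$-discretization directly; the two arguments coincide in substance and even in constants, and your remark that injectivity is not needed here also matches the paper.
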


The bound in Lemma \ref{theo1} has two terms: an $O_p(1/T)$ term that depends on the number of periods used to construct the moments $h_i$, and an $O_p\left(B_{\xi}(K)\right)$ term that reflects the presence of an approximation error. The rate at which $B_{\xi}(K)$ tends to zero depends on the dimension of $\xi_{i0}$. Graf and Luschgy (2002, Theorem 5.3) provide explicit characterizations in the case where $\xi_{i0}$ has compact support.\footnote{See Graf and Luschgy (2002, p. 875) for a discussion of the compact support assumption.} For example, the following lemma implies that $B_{\xi}(K)=O_p(K^{-2})$ when $\xi_{i0}$ is one-dimensional, and $B_{\xi}(K)=O_p(K^{-1})$ when $\xi_{i0}$ is two-dimensional.

\begin{lemma}\label{lemma_GL}(Graf and Luschgy, 2002) Let $\xi_{i0}$ be random vectors with compact support in $\mathbb{R}^{d}$. Then, as $N,K$ tend to infinity we have $ B_{\xi}(K)=O_p(K^{-\frac{2}{d}})$.
\end{lemma}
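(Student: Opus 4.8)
The plan is to recognize $B_{\xi}(K)$ as the $K$-point empirical quantization error of the empirical measure $\widehat{P}_N=\frac{1}{N}\sum_{i=1}^N\delta_{\xi_{i0}}$, i.e.\ the minimal average squared distance from the $\xi_{i0}$ to a codebook of at most $K$ points. The asserted rate is exactly the upper-bound (Zador) half of Graf and Luschgy's Theorem~5.3, and since we only need the $O_p$ order and not the sharp limiting constant, there is no need to invoke their full machinery: a direct grid-based quantizer yields a bound that is in fact deterministic and uniform in $N$.

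First I would fix the construction. By compactness the support of $\xi_{i0}$ lies in a cube $[-M,M]^d$ for some $M<\infty$. Given $K$, set $m=\lfloor K^{1/d}\rfloor$, partition $[-M,M]^d$ into $m^d\le K$ congruent subcubes of side $2M/m$, and take $\widetilde{\xi}(1),\dots,\widetilde{\xi}(m^d)$ to be their centers (any remaining codepoints up to $K$ are placed arbitrarily, which can only lower the objective). Assigning each $\xi_{i0}$ to the center of the subcube containing it gives, for every $i$,
\[
\left\|\xi_{i0}-\widetilde{\xi}(k_i)\right\|^2\le d\left(\frac{M}{m}\right)^2,
\]
since each coordinate deviation from the center is at most $M/m$.

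Because $B_{\xi}(K)$ is the minimum over all partitions $\{k_i\}$ and codepoints $\widetilde{\xi}(k)$, it is bounded above by the value attained at this particular feasible choice. Averaging the uniform per-point bound over $i$ removes the dependence on $N$ entirely, and using $m\ge K^{1/d}/2$ for $K$ large yields
\[
B_{\xi}(K)\le d\,M^2 m^{-2}\le 4\,d\,M^2 K^{-2/d},
\]
a deterministic bound, so a fortiori $B_{\xi}(K)=O_p(K^{-2/d})$. I expect no genuine obstacle for this upper-bound direction: the only substantive content of the full Graf--Luschgy result---the matching lower bound establishing tightness, together with the explicit Zador constant---is not required here, since the statement claims only an order of magnitude. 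Correspondingly, $N$ plays no role in the bound because the grid argument controls the error uniformly over the support rather than merely in an average sense, so the empirical-versus-population distinction is immaterial for the upper rate.
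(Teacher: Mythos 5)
Your proof is correct, and it takes a genuinely different route from the paper. The paper does not prove this lemma at all: it simply imports the result from Graf and Luschgy (2002, Theorem 5.3) on rates for the empirical quantization error, which is why the lemma carries their name and why compactness of the support is the stated hypothesis. You instead give a self-contained covering argument: place $m^d\le K$ codepoints at the centers of congruent subcubes of $[-M,M]^d$ with $m=\lfloor K^{1/d}\rfloor$, bound each per-point squared deviation by $d(M/m)^2$, and use feasibility of this particular codebook-and-partition in the minimization defining $B_{\xi}(K)$ to conclude $B_{\xi}(K)\le 4\,d\,M^2K^{-2/d}$ once $K\ge 2^d$ (so that $\lfloor K^{1/d}\rfloor\ge K^{1/d}/2$). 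Every step checks out: empty groups are allowed in the minimization, so using fewer than $K$ distinct codepoints is legitimate, and the bound is deterministic and uniform in $N$, which is strictly stronger than the claimed $O_p(K^{-2/d})$. What your approach buys is elementarity and the observation that only the upper-bound (Zador) half of the quantization theory is needed, with no distributional structure beyond boundedness; what the citation to Graf--Luschgy buys is the matching lower bound and sharp constants, i.e., the knowledge that $K^{-2/d}$ is the exact order and cannot be improved, which is contextually relevant for the paper's rate discussion (e.g., for how large $K$ must be chosen) but is not part of the lemma as stated.
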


We now use these results to study the properties of GFE in models with time-invariant and time-varying heterogeneity, in turn. We use the shorthand notation $\mathbb{E}_{Z}(W)$ and $\mathbb{E}_{Z=z}(W)$ for the conditional expectations of $W$ given $Z$ and $Z=z$, respectively. In the time-varying case, we denote as $\lambda_0$ the process of $\lambda_{t0}$'s, and as $\mathbb{E}_{\lambda_0=\lambda}(W)$ the conditional expectation of $W$ given $\lambda_0=\lambda$. We use a similar notation for variances. Finally, $\|M\|$ denotes the spectral norm of a matrix $M$.

\subsection{Time-invariant heterogeneity}

\noindent To state our first main theorem, where heterogeneity is time-invariant, we make the following assumptions, where $\ell_{it}(\alpha_{i},\theta)=\ln f(Y_{it}\,|\, Y_{i,t-1},X_{it},\alpha_{i},\theta)$, $\ell_i(\alpha_{i},\theta)=\frac{1}{T}\sum_{t=1}^T\ell_{it}(\alpha_{i},\theta)$, and $\overline{\alpha}(\theta,\xi)={\limfunc{argmax}}_{\alpha}\,\mathbb{E}_{\xi_{i0}=\xi}(\ell_{i}(\alpha,\theta))$ for all $\theta,\xi$.

\begin{assumption}{(regularity, time-invariant heterogeneity)} \label{ass_regu}

	\begin{enumerate}[itemsep=-3pt,label=(\roman*),ref=\roman*,topsep=0pt]
		\item $(Y_i',X_i',\xi_{i0}',h_i')'$ are i.i.d.; $(Y_{it}',X_{it}')'$ are stationary for all $i$; $\ell_{it}(\alpha,\theta)$ is three times differentiable in $(\alpha,\theta)$ for all $i,t$;\footnote{That is, $\ln f(y_{it}\,|\, y_{i,t-1},x_{it},\alpha,\theta)$ is three times differentiable in $(\alpha,\theta)$, for almost all $(y_{it},y_{it-1},x_{it})$.} and the parameter space $\Theta$ for $\theta_0$ is compact, the space for $\alpha_{i0}$ is compact, and $\theta_0$ belongs to the interior of $\Theta$.\label{ass_regu_i}

		\item $N,T,K$ tend jointly to infinity; $\sup_{\xi,\alpha,\theta}\, |\mathbb{E}_{\xi_{i0}=\xi}(\ell_{it}(\alpha,\theta))|=O(1)$, and similarly for the first three derivatives of $\ell_{it}$; $ \inf_{\xi,\alpha,\theta}\, \mathbb{E}_{\xi_{i0}=\xi}(-\frac{\partial^2 \ell_{it}(\alpha,\theta)}{\partial\alpha\partial{{\alpha}^\prime}})>0$; and $\max_{i}\,\sup_{\alpha,\theta}\,\left|\ell_{i}(\alpha,\theta)-\mathbb{E}_{\xi_{i0}}\left(\ell_{i}(\alpha,\theta)\right)\right|=o_p\left(1\right)$, and similarly for the first three derivatives of $\ell_{i}$.
		\label{ass_regu_iia}

		\item 
		$ \inf_{\xi,\theta}\, \mathbb{E}_{\xi_{i0}=\xi}(-\frac{\partial^2 \ell_{it}(\overline{\alpha}(\theta,\xi),\theta)}{\partial\alpha\partial{\alpha}^{\prime}})>0$; $\mathbb{E}[ \frac{1}{T}\sum_{t=1}^T\ell_{it}(\overline{\alpha}(\theta,\xi_{i0}),\theta)]$ has a unique maximum at $\theta_0$ on $\Theta$, and its matrix of second derivatives is $-H<0$; and		$\sup_{\theta}\frac{1}{NT}\sum_{i=1}^N\sum_{t=1}^T\|\frac{\partial^2\ell_{it}(\overline{\alpha}(\theta,\xi_{i0}),\theta)}{\partial \theta\partial \alpha'}\|^2=O_p(1)$.

		\label{ass_regu_iii}

		\item
		
		$\sup_{\widetilde{\xi},\alpha} \|\frac{\partial}{\partial\xi'}\big|_{\xi=\widetilde{\xi}}\,\mathbb{E}_{\xi_{i0}=\xi}(\limfunc{vec}\frac{\partial^2 \ell_{it}(\alpha,\theta_0)}{\partial \theta\partial{\alpha}^{\prime}})\|$, $ \sup_{\widetilde{\xi},\alpha} \|\frac{\partial}{\partial\xi'}\big|_{\xi=\widetilde{\xi}}\,\mathbb{E}_{\xi_{i0}=\xi}(\limfunc{vec}\frac{\partial^2 \ell_{it}(\alpha,\theta_0)}{\partial \alpha\partial{\alpha}^{\prime}})\|$, and $\sup_{\widetilde{\xi},\theta}\ \|\frac{\partial}{\partial\xi'}\big|_{\xi=\widetilde{\xi}}\,\mathbb{E}_{\xi_{i0}=\xi}(\frac{\partial \ell_{it}(\overline{\alpha}(\theta,\widetilde{\xi}),\theta)}{\partial\alpha})\|$ are $O(1)$.
		\label{ass_regu_iv}

	\end{enumerate} 
	
\end{assumption}

In part (\ref{ass_regu_i}) in Assumption \ref{ass_regu} we treat heterogeneity as random in order to use Lemma \ref{lemma_GL}, which requires $\xi_{i0}$ to be i.i.d. draws from a distribution. However, note we do not restrict how $\alpha_{i0}$ and $\mu_{i0}$ depend on each other. Moreover, while our results require asymptotic stationarity of the time-series processes, the theorem could be extended to allow for nonstationary initial conditions. 

In part (\ref{ass_regu_iia}) we require strict concavity of the log-likelihood as a function of $\alpha$. Concavity holds in a number of nonlinear panel data models such as probit and logit models, tobit, Poisson, or multinomial logit; see Fern\'andez-Val and Weidner (2016) and Chen \textit{et al.} (2020). One can show that Theorem \ref{theo2} continues to hold without concavity, under an identification condition and an assumption bounding the derivatives of the empirical GFE objective function. Importantly, note that $H^{-1}$ is the asymptotic variance of the FE estimator. As a result, $H$ being positive definite rules out models that are not identified under FE, such as a linear model with a time-invariant covariate and a heterogeneous intercept.

In part (\ref{ass_regu_iii}) we introduce the \emph{target} log-likelihood $\frac{1}{NT}\sum_{i=1}^N\sum_{t=1}^T\ell_{it}(\overline{\alpha}(\theta,\xi_{i0}),\theta)$ (Arellano and Hahn, 2007), which we will show approximates the GFE log-likelihood in large samples under our assumptions (note that $\overline{\alpha}(\theta_0,\xi_{i0}){=}\alpha_{i0}$). In part (\ref{ass_regu_iv}) we require some moments to be bounded asymptotically.

We now state our first main result, where we denote, evaluating all quantities at true values $(\theta_0,\alpha_{i0})$ and omitting the dependence from the notation: 
\begin{align}&s_i=\frac{1}{T}\sum_{t=1}^T\Bigg(\frac{\partial\ell_{it}}{\partial \theta}+\mathbb{E}_{\xi_{i0}}\left(\frac{\partial^2\ell_{it} }{\partial \theta\partial \alpha'}\right)\left[\mathbb{E}_{\xi_{i0}}\left(-\frac{\partial^2\ell_{it}}{\partial \alpha\partial \alpha'}\right)\right]^{-1}\frac{\partial\ell_{it} }{\partial \alpha}\Bigg),\label{eq_score}\\
&H=\underset{N,T\rightarrow \infty}{\limfunc{plim}}\, \frac{1}{NT}\sum_{i=1}^N\sum_{t=1}^T\Bigg(\mathbb{E}_{\xi_{i0}}\left(-\frac{\partial^2 \ell_{it} }{\partial \theta\partial \theta'}\right)\notag\\
&\quad\quad \quad \quad \quad-\mathbb{E}_{\xi_{i0}}\left(\frac{\partial^2\ell_{it} }{\partial \theta\partial \alpha'}\right)\left[\mathbb{E}_{\xi_{i0}}\left(-\frac{\partial^2\ell_{it}}{\partial \alpha\partial \alpha'}\right)\right]^{-1}\mathbb{E}_{\xi_{i0}}\left(\frac{\partial^2\ell_{it} }{\partial \alpha\partial \theta'}\right)\Bigg).\label{eq_hessian}\end{align}

\begin{theorem}\label{theo2}Let the conditions of Lemmas \ref{theo1} and \ref{lemma_GL} and Assumptions \ref{ass_alpha}, \ref{ass_inj} and \ref{ass_regu} hold. Then, as $N,T,K$ tend to infinity we have:
	\begin{eqnarray}\label{2steptheta}
	\widehat{\theta}&=&\theta_0+ H^{-1}\frac{1}{N}\sum_{i=1}^Ns_i +O_p\left(\frac{1}{T}\right)+O_p\left(K^{-\frac{2}{d}}\right)+o_p\left(\frac{1}{\sqrt{NT}}\right).
	\end{eqnarray}

\end{theorem}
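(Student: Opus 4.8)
The plan is to treat two-step GFE as a profiled M-estimator and to expand its first-order condition, using Lemmas \ref{theo1} and \ref{lemma_GL} to control everything the estimated classification injects into the second step. Write the concentrated within-group maximizer $\widehat\alpha(k,\theta)=\limfunc{argmax}_\alpha\sum_{i:\widehat k_i=k}\ell_i(\alpha,\theta)$, and note the group-level score identity $\sum_{i:\widehat k_i=k}\frac{\partial\ell_i}{\partial\alpha}(\widehat\alpha(k,\theta),\theta)=0$. By the envelope theorem this identity annihilates the cross term in the total derivative of the profiled objective, so that $\widehat\theta$ solves the clean condition $\frac1N\sum_i\frac{\partial\ell_i}{\partial\theta}(\widehat\alpha(\widehat k_i,\widehat\theta),\widehat\theta)=0$. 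The strategy is then: (i) establish consistency by showing the GFE objective converges uniformly to the target objective $\frac1{NT}\sum_i\sum_t\ell_{it}(\overline\alpha(\theta,\xi_{i0}),\theta)$, whose unique maximizer is $\theta_0$ by Assumption \ref{ass_regu}(\ref{ass_regu_iii}); and (ii) expand this first-order condition around $(\theta_0,\{\alpha_{i0}\})$ and collect terms.

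First I would convert the first-step bound into a statement about heterogeneity. Lemma \ref{theo1} gives $\frac1N\sum_i\|\widehat h(\widehat k_i)-\varphi(\xi_{i0})\|^2=O_p(1/T)+O_p(B_\xi(K))$, and the Lipschitz continuity of $\psi,\alpha$ (Assumptions \ref{ass_inj}, \ref{ass_alpha}) together with that of $\overline\alpha$ in $\xi$ (Assumption \ref{ass_regu}(\ref{ass_regu_iv})) propagate this rate to the heterogeneity, so that the group-constant reconstruction of $\alpha_{i0}$ implied by the classification satisfies $\frac1N\sum_i\|\alpha(\psi(\widehat h(\widehat k_i)))-\alpha_{i0}\|^2=O_p(1/T)+O_p(K^{-2/d})$, with the $K^{-2/d}$ rate supplied by Lemma \ref{lemma_GL}. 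Next I would establish a uniform-in-$\theta$ expansion of the within-group maximizer, $\widehat\alpha(\widehat k_i,\theta)=\overline\alpha(\theta,\xi_{i0})+\Sigma_{\widehat k_i}(\theta)^{-1}\overline s_{\widehat k_i}(\theta)+r_i(\theta)$, where $\Sigma_k(\theta)$ is the group information matrix, $\overline s_k(\theta)$ is the group-averaged score (conditional mean zero given the types, variance $O(1/(N_kT))$), and the remainder $r_i$ collects within-group type heterogeneity and higher-order effects, controlled in mean square at rate $O_p(1/T)+O_p(K^{-2/d})$ using the previous display and the regularity conditions in Assumption \ref{ass_regu}(\ref{ass_regu_iia},\ref{ass_regu_iv}).

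Then I would Taylor-expand the profiled condition $\frac1N\sum_i\frac{\partial\ell_i}{\partial\theta}(\widehat\alpha(\widehat k_i,\widehat\theta),\widehat\theta)=0$ around $(\theta_0,\alpha_{i0})$. The $\theta$-Hessian together with the cross term $\frac{\partial^2\ell_i}{\partial\theta\partial\alpha'}\frac{\partial\overline\alpha}{\partial\theta}$ (the latter identified by the implicit function theorem as $-[\mathbb{E}_{\xi_{i0}}(-\partial^2\ell/\partial\alpha\partial\alpha')]^{-1}\mathbb{E}_{\xi_{i0}}(\partial^2\ell/\partial\alpha\partial\theta')$) assemble into the matrix $H$ of (\ref{eq_hessian}). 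The cross term acting on the sampling part $\Sigma_{\widehat k_i}^{-1}\overline s_{\widehat k_i}$ produces the second piece of $s_i$ in (\ref{eq_score}): here I would regroup the double sum and use the smoothness of $\xi\mapsto\mathbb{E}_{\xi}(\partial^2\ell/\partial\theta\partial\alpha')$ and $\xi\mapsto\mathbb{E}_{\xi}(-\partial^2\ell/\partial\alpha\partial\alpha')$ from Assumption \ref{ass_regu}(\ref{ass_regu_iv}) to replace group-level weights by individual type-specific ones, so that the group-averaged score collapses back to the individual score $\frac{\partial\ell_i}{\partial\alpha}$ with weight $\mathbb{E}_{\xi_{i0}}(\partial^2\ell/\partial\theta\partial\alpha')[\mathbb{E}_{\xi_{i0}}(-\partial^2\ell/\partial\alpha\partial\alpha')]^{-1}$; the first piece of $s_i$ is simply $\frac1N\sum_i\frac{\partial\ell_i}{\partial\theta}$. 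The remaining contributions are the higher-order (quadratic-in-score and third-derivative) terms and the effect of estimating the classification from the noisy moments $h_i=\varphi(\xi_{i0})+O_p(1/\sqrt T)$, which jointly give the incidental-parameter bias $O_p(1/T)$; the within-group approximation remainder $r_i$, giving $O_p(K^{-2/d})$; and products of these small quantities, absorbed into $o_p(1/\sqrt{NT})$. Inverting $H$ yields (\ref{2steptheta}).

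The main obstacle is the regrouping step in the third paragraph, namely showing that the group-averaged incidental-parameter correction collapses to the FE-type individual influence function $s_i$ at leading order, with all discrepancies---within-group heterogeneity of the weights, sampling of the group Hessian, and the data-dependence of $\widehat k_i$ on the same $Y_{it}$ used to form the likelihood---relegated to the $O_p(1/T)$ and $O_p(K^{-2/d})$ terms. This requires simultaneously (a) the Lipschitz and boundedness conditions of Assumption \ref{ass_regu}(\ref{ass_regu_iv}) to Taylor-expand the conditional-expectation-of-derivatives functions across a group of diameter $O_p(B_\xi(K)^{1/2})$, and (b) uniform control, over all partitions and over $\theta$ near $\theta_0$, of empirical processes such as $\frac1N\sum_i\frac{\partial^2\ell_i}{\partial\theta\partial\alpha'}\Sigma_{\widehat k_i}^{-1}\overline s_{\widehat k_i}$ in which the classification, the group effects, and the scores are all mutually dependent---this is where the i.i.d.\ sampling of $(Y_i',X_i',\xi_{i0}',h_i')'$ in Assumption \ref{ass_regu}(\ref{ass_regu_i}) and the uniform laws in Assumption \ref{ass_regu}(\ref{ass_regu_iia}) do the heavy lifting.
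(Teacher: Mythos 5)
Your proposal follows essentially the same route as the paper's proof: consistency via uniform convergence of the profiled GFE objective to the target log-likelihood, an expansion of the within-group maximizer $\widehat{\alpha}(\widehat{k}_i,\theta)$ around $\overline{\alpha}(\theta,\xi_{i0})$ whose error is controlled at rate $O_p(1/T)+O_p(K^{-2/d})$ using Lemmas \ref{theo1} and \ref{lemma_GL} and the Lipschitz conditions, then a Taylor expansion of the profiled first-order condition in which the implicit-function-theorem identity for $\partial\overline{\alpha}/\partial\theta$ assembles $H$ and the "regrouping" step (replacing group-level weights by type-specific conditional expectations via Assumption \ref{ass_regu}(\ref{ass_regu_iv})) collapses the correction to the individual influence function $s_i$ --- precisely the content of the paper's Lemmas \ref{lem_sup1}--\ref{lem_sup3} and the surrounding argument. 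The obstacle you single out (controlling the group-averaged correction terms given the mutual dependence of $\widehat{k}_i$, the group effects, and the scores) is exactly what the paper's Lemma \ref{lem_sup2} resolves, so the plan is sound and matches the published proof.
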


The first three terms in (\ref{2steptheta}) also appear in large-$N,T$ expansions of FE estimators (e.g., Hahn and Newey, 2004).\footnote{In the supplemental material we provide a similar expansion for GFE estimators of average effects $M_0=\frac{1}{NT}\sum_{i=1}^N \sum_{t=1}^Tm\left(X_{it},\alpha_{i0},\theta_0\right)$, which are functions of both common parameters and individual heterogeneity.} Similarly to FE, GFE is subject to incidental parameter $O_p(1/T)$ bias. This contrasts with the properties of GFE estimators under discrete heterogeneity (e.g., Hahn and Moon, 2010, Bonhomme and Manresa, 2015). Indeed, when heterogeneity is \emph{not} restricted to have a small number of points of support, classification noise affects the properties of second-step estimators in general. This motivates using bias reduction techniques for inference analogous to those used in FE, as we will discuss in the next section.

The $O_p(K^{-\frac{2}{d}})$ term in (\ref{2steptheta}) reflects the approximation error, which depends on the number of groups. Setting $K=\widehat{K}$ according to (\ref{choice_K_eq}) guarantees that the approximation error is $O_p(1/T)$. Formally, we have the following result. 

\begin{corollary}\label{coro_groups}Let the conditions in Theorem \ref{theo2} hold. Let $K=\widehat{K}$ given by (\ref{choice_K_eq}), with $\gamma=O(1)$. Then, as $N,T$ tend to infinity we have:
	\begin{eqnarray}\label{2steptheta_groups}
	\widehat{\theta}&=&\theta_0+ H^{-1}\frac{1}{N}\sum_{i=1}^Ns_i +O_p\left(\frac{1}{T}\right)+o_p\left(\frac{1}{\sqrt{NT}}\right).
	\end{eqnarray}
	
\end{corollary}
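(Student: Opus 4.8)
The plan is to show that the data-driven choice $\widehat{K}$ from (\ref{choice_K_eq}) makes the approximation-error term $O_p(K^{-2/d})$ in Theorem \ref{theo2} negligible relative to the incidental-parameter term $O_p(1/T)$, after which the rest of the expansion carries over verbatim. Since that term enters the expansion only through the first-step classification error controlled in Lemma \ref{theo1}, namely $\frac1N\sum_i\|\widehat h(\widehat k_i)-\varphi(\xi_{i0})\|^2=O_p(1/T)+O_p(B_\xi(K))$, it suffices to establish the single estimate $B_\xi(\widehat K)=O_p(1/T)$ and then re-read Theorem \ref{theo2} at $K=\widehat K$. Note that, pleasantly, this route does not invoke Lemma \ref{lemma_GL}: the rule adapts to the unknown dimension $d$ on its own.

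The core is a geometric bound converting the stopping criterion into a bound on $B_\xi$. By construction $\widehat Q(\widehat K)\le\gamma\widehat V_h$. Assumption \ref{ass_inj} gives $\frac1N\sum_i\|h_i-\varphi(\xi_{i0})\|^2=O_p(1/T)$, hence $V_h=O(1/T)$ and $\widehat V_h=V_h+o_p(1/T)=O_p(1/T)$; with $\gamma=O(1)$ this yields $\widehat Q(\widehat K)=O_p(1/T)$. To pass from $\widehat Q$ to $B_\xi$, write $\varphi_i=\varphi(\xi_{i0})$, let $\mathcal S_K$ be the set of $N$-tuples taking at most $K$ distinct values, and equip tuples with $\|a\|_N=(\frac1N\sum_i\|a_i\|^2)^{1/2}$. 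Then $\sqrt{\widehat Q(K)}=\mathrm{dist}_N(h,\mathcal S_K)$ and $\sqrt{B_\varphi(K)}=\mathrm{dist}_N(\varphi,\mathcal S_K)$, where $B_\varphi(K)$ is the kmeans approximation error for the images $\varphi_i$. The reverse triangle inequality for distance to a fixed set gives, for every $K$ simultaneously, $\bigl|\sqrt{\widehat Q(K)}-\sqrt{B_\varphi(K)}\bigr|\le\|h-\varphi\|_N=O_p(T^{-1/2})$; because this bound is uniform in $K$ it applies at the random $\widehat K$, so $B_\varphi(\widehat K)=O_p(1/T)$. Finally, Lipschitz continuity of $\psi$ (with $\xi_{i0}=\psi(\varphi_i)$ from Assumption \ref{ass_inj}) gives $B_\xi(\widehat K)\le L_\psi^2\,B_\varphi(\widehat K)=O_p(1/T)$, since applying $\psi$ to the $\varphi$-optimal centroids furnishes a feasible discretization of the $\xi_{i0}$.

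Next I would confirm we are in the regime of Theorem \ref{theo2} and substitute. First, $\widehat K\to\infty$ in probability: for any fixed $K$, $\widehat Q(K)$ is bounded below by the positive quantization error of the limiting distribution of $\varphi_i$, whereas $\gamma\widehat V_h\to 0$, so the stopping rule forces $\widehat K$ to diverge. Substituting $B_\xi(\widehat K)=O_p(1/T)$ into Lemma \ref{theo1} collapses the classification error to $\frac1N\sum_i\|\widehat h(\widehat k_i)-\varphi_i\|^2=O_p(1/T)$; the score aggregate $H^{-1}\frac1N\sum_i s_i$ does not depend on $K$, and the incidental-parameter $O_p(1/T)$ and remainder $o_p(1/\sqrt{NT})$ terms are unaffected, delivering (\ref{2steptheta_groups}).

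The main obstacle is that $\widehat K$ is random and data-dependent, while Theorem \ref{theo2} is stated along deterministic sequences $K=K_N$. Making the plug-in rigorous requires one of two routes: (a) strengthen Theorem \ref{theo2}'s second-step remainders to hold uniformly over an admissible window $[\underline K_N,\overline K_N]$ of group counts, together with $\mathbb P(\widehat K\in[\underline K_N,\overline K_N])\to1$; or (b) reformulate Theorem \ref{theo2}'s conclusion conditionally on the realized first-step partition, so that only the realized classification error enters and the substitution becomes automatic. Under route (a) one must also verify $\widehat K$ does not grow too fast, i.e. group sizes diverge and $\widehat K/N\to0$; since $\widehat Q(K)\asymp V_h+B_\varphi(K)$, the criterion forces $\widehat K$ to be of polynomial-in-$T$ order, which for $\gamma=O(1)$ stays within the admissible range under the joint growth conditions inherited from Theorem \ref{theo2}. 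Verifying this uniformity (or carrying out the conditional reformulation) is where the real work lies; the implication $\widehat Q(\widehat K)=O_p(1/T)\Rightarrow B_\xi(\widehat K)=O_p(1/T)$ is the clean, essentially geometric, heart of the argument.
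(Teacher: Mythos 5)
Your proposal is correct and follows the same overall strategy as the paper --- use the stopping rule (\ref{choice_K_eq}) to force the first-step error down to $O_p(1/T)$, then rerun the proof of Theorem \ref{theo2} with the approximation error absorbed --- but your intermediate route is more roundabout than the paper's, and the obstacle you flag at the end is already resolved by the structure of the paper's argument. On the first point: the paper never passes through $B_\varphi$ or $B_\xi$ at all. It bounds the realized classification error in one line,
\begin{equation*}
\frac{1}{N}\sum_{i=1}^N\left\|\widehat{h}(\widehat{k}_i)-\varphi(\xi_{i0})\right\|^2\leq \frac{2}{N}\sum_{i=1}^N\left\|\widehat{h}(\widehat{k}_i)-h_i\right\|^2+\frac{2}{N}\sum_{i=1}^N\left\|h_i-\varphi(\xi_{i0})\right\|^2=2\widehat{Q}(\widehat{K})+O_p\left(\frac{1}{T}\right),
\end{equation*}
which is $O_p(1/T)$ because $\widehat{Q}(\widehat{K})\leq\gamma\widehat{V}_h=O_p(1/T)$ by the selection rule. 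This single pathwise inequality replaces your reverse-triangle-inequality step, the Lipschitz-$\psi$ step, and the re-application of Lemma \ref{theo1}; like your route, it makes Lemma \ref{lemma_GL} unnecessary. On the second point: the substitution at the random $\widehat{K}$ is exactly your route (b), and it is essentially automatic rather than ``where the real work lies.'' In the time-invariant case, the proof of Theorem \ref{theo2} depends on the first step only through the realized quantity $\frac{1}{N}\sum_{i}\|\widehat{h}(\widehat{k}_i)-\varphi(\xi_{i0})\|^2$, which drives (\ref{Rate_a1}) and hence Lemmas \ref{lem_sup1}--\ref{lem_sup3}; those estimates are pathwise inequalities given the realized partition, and the remaining $\delta$-bounds such as (\ref{rate_vbar}) hold for an arbitrary grouping, since group averaging can only reduce the average squared norm of the within-group score means, leaving a $K$-free $O_p(1/T)$ bound. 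Hence the paper's entire proof is the display above plus the observation that the proof of Theorem \ref{theo2} goes through verbatim with $\delta$ redefined as $1/T$; no uniformity over a window $[\underline{K}_N,\overline{K}_N]$ is required, and your verification that $\widehat{K}\to\infty$ (which implicitly assumes $\varphi(\xi_{i0})$ has infinitely many support points) can be dropped, because nothing in the $\delta=1/T$ version of the argument uses $K\to\infty$.
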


Under Corollary \ref{coro_groups}, the biases of FE and GFE have the same order of magnitude. However, the required value of $K$ depends on the dimension $d$ of individual heterogeneity. Specifically, when $\xi_{i0}$ follows a continuous distribution of dimension $d$, setting $K$ proportional to or greater than $\min(T^{\frac{d}{2}},N)$ will ensure that the approximation error is $O_p(1/T)$. For small $d$ (e.g., when $d=1$) this will typically require a small number of groups (of the order of $T^{\frac{1}{2}}$). 

GFE can have advantages compared to FE, for two reasons. First, the two-step method can allow researchers to select moments that are particularly informative about the unobserved heterogeneity. To provide intuition, consider a setting where the number of groups is sufficiently large for the approximation error to be of smaller order compared to $1/T$, yet $K/N$ tends to zero. We have the following. 
\begin{corollary}\label{coro_bias}Let the conditions in Theorem \ref{theo2} hold. Let $K=\widehat{K}$ given by (\ref{choice_K_eq}), with $\gamma=o(1)$. Suppose that $K/N$ tends to zero, and that Assumption \ref{ass_coro_bias} in the appendix holds. Then the $O_p(1/T)$ term in (\ref{2steptheta_groups}) takes the explicit form $C/T+o_p(1/T)$, where:
	\begin{eqnarray}\label{1storder_GFE}
\frac{C}{T}=	H^{-1}\frac{\partial}{\partial\theta}\bigg|_{\theta_0}\mathbb{E} \left[\frac{1}{2}\left\|\widehat{\alpha}_i(\theta){-}\mathbb{E}_{\xi_{i0}}(\widehat{\alpha}_i(\theta))\right\|_{\Omega_i(\theta)}^2-\frac{1}{2}\left\|\widehat{\alpha}_i(\theta){-}\mathbb{E}_{h_i}(\widehat{\alpha}_i(\theta))\right\|^2_{\Omega_i(\theta)}\right],
	\end{eqnarray}
	with $\widehat{\alpha}_i(\theta)=\limfunc{argmax}_{\alpha} \ell_i(\theta,\alpha)$, $\Omega_i(\theta)=\mathbb{E}_{\xi_{i0}}(-\frac{\partial^2 \ell_{it}(\overline{\alpha}(\theta,\xi_{i0}),\theta)}{\partial\alpha\partial{{\alpha}^\prime}})$, and $\|V\|_{\Omega}^2=V'\Omega V$.
	
\end{corollary}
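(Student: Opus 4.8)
\emph{Strategy.} The plan is to show that the GFE profile objective $\widehat{L}(\theta)=\frac{1}{N}\sum_{i=1}^N\ell_i(\widehat{\alpha}(\widehat{k}_i;\theta),\theta)$, where $\widehat{\alpha}(k;\theta)=\limfunc{argmax}_{\alpha}\sum_{j:\widehat{k}_j=k}\ell_j(\alpha,\theta)$ is the second-step group effect at a fixed $\theta$, admits the expansion $\widehat{L}(\theta)=L(\theta)+\beta(\theta)+R_N(\theta)$, where $L(\theta)=\mathbb{E}[\ell_i(\overline{\alpha}(\theta,\xi_{i0}),\theta)]$ is the target of Assumption \ref{ass_regu}(\ref{ass_regu_iii}), $\beta(\theta)$ is a deterministic $O(1/T)$ bias, and $R_N(\theta)$ collects a mean-zero fluctuation --- which reproduces the score term $H^{-1}\frac1N\sum_i s_i$ already present in Theorem \ref{theo2} --- together with remainders that are $o_p(1/T)$ uniformly in a neighborhood of $\theta_0$. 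Given such an expansion, the M-estimation argument underlying Theorem \ref{theo2} yields $\widehat{\theta}-\theta_0=H^{-1}\frac1N\sum_i s_i+H^{-1}\frac{\partial\beta}{\partial\theta}\big|_{\theta_0}+o_p(1/T)$, so that $C/T=H^{-1}\partial_\theta\beta(\theta_0)$, and the task reduces to identifying $\beta(\theta)$ with the bracketed expectation in (\ref{1storder_GFE}).

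\emph{Decomposition.} First I would expand $\ell_i(\cdot,\theta)$ to second order around the individual MLE $\widehat{\alpha}_i(\theta)=\limfunc{argmax}_\alpha\ell_i(\alpha,\theta)$; since the first-order term vanishes at the argmax, this gives $\ell_i(\widehat{\alpha}(\widehat{k}_i;\theta),\theta)=\ell_i(\widehat{\alpha}_i(\theta),\theta)-\frac12\|\widehat{\alpha}(\widehat{k}_i;\theta)-\widehat{\alpha}_i(\theta)\|^2_{\Omega_i(\theta)}+(\text{cubic remainder})$, where the empirical curvature is replaced by $\Omega_i(\theta)$ at leading order using Assumption \ref{ass_regu}. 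Averaging over $i$ decomposes $\widehat{L}(\theta)$ into the fixed-effects profile objective $\frac1N\sum_i\ell_i(\widehat{\alpha}_i(\theta),\theta)$ minus a within-group dispersion penalty $\frac1{2N}\sum_i\|\widehat{\alpha}(\widehat{k}_i;\theta)-\widehat{\alpha}_i(\theta)\|^2_{\Omega_i(\theta)}$, and these two pieces will supply the two terms in (\ref{1storder_GFE}).

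\emph{Identifying the two terms.} For the FE piece I would expand $\widehat{\alpha}_i(\theta)$ around $\overline{\alpha}(\theta,\xi_{i0})$, using $\widehat{\alpha}_i(\theta)-\overline{\alpha}(\theta,\xi_{i0})=\Omega_i(\theta)^{-1}\frac{\partial\ell_i}{\partial\alpha}+o_p(1/\sqrt T)$ and $\mathbb{E}_{\xi_{i0}}(\widehat{\alpha}_i(\theta))=\overline{\alpha}(\theta,\xi_{i0})+o_p(1/\sqrt T)$ (the conditional score mean vanishes by the definition of $\overline{\alpha}$), so that the classical incidental-parameter bias of the profile likelihood equals $\mathbb{E}[\frac12\|\widehat{\alpha}_i(\theta)-\mathbb{E}_{\xi_{i0}}(\widehat{\alpha}_i(\theta))\|^2_{\Omega_i(\theta)}]$ at leading order, an $O(1/T)$ quantity: this is the first term. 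For the penalty, the quadratic approximation makes $\widehat{\alpha}(k;\theta)$ the $\Omega$-weighted average of $\{\widehat{\alpha}_j(\theta):\widehat{k}_j=k\}$; since $\gamma=o(1)$ forces $B_{\xi}(K)=O_p(K^{-2/d})=o_p(1/T)$ (so the kmeans cells are asymptotically negligible in $h$-space, by Lemma \ref{theo1}) while $K/N\to0$ keeps each cell populated by a diverging number of individuals, a within-group law of large numbers gives $\widehat{\alpha}(\widehat{k}_i;\theta)=\mathbb{E}_{h_i}(\widehat{\alpha}_i(\theta))+o_p(1/\sqrt T)$. Substituting, the penalty converges to $\mathbb{E}[\frac12\|\widehat{\alpha}_i(\theta)-\mathbb{E}_{h_i}(\widehat{\alpha}_i(\theta))\|^2_{\Omega_i(\theta)}]$, the second term, and $\beta(\theta)$ is the difference of the two; intuitively, replacing conditioning on the true type $\xi_{i0}$ by conditioning on the chosen moment $h_i$ is what encodes the informativeness of the moments and can shrink the net bias relative to FE.

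\emph{Main obstacle.} The delicate step is the data-dependent kmeans partition in the penalty: I must show that the group effect $\widehat{\alpha}(\widehat{k}_i;\theta)$ concentrates on the conditional expectation $\mathbb{E}_{h_i}(\widehat{\alpha}_i(\theta))$ uniformly over groups and over $\theta$ near $\theta_0$, even though the partition is estimated, $K$ grows, and both $h_i$ and $\widehat{\alpha}_i(\theta)$ carry correlated $O_p(1/\sqrt T)$ noise from the same sample. This will require Assumption \ref{ass_coro_bias} to control (i) the smoothness of $h\mapsto\mathbb{E}[\widehat{\alpha}_i(\theta)\mid h_i=h]$, (ii) the rate at which group means approximate this conditional expectation given the cell diameters implied by $B_{\xi}(K)$, and (iii) that the cubic and curvature-replacement remainders, as well as the fluctuation $R_N(\theta)$, are $o_p(1/T)$ uniformly in $\theta$, so that differentiating the expansion in $\theta$ at $\theta_0$ is legitimate and produces exactly $C/T$.
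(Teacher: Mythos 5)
Your proposal follows essentially the same route as the paper: its proof likewise decomposes the GFE profile likelihood into the FE profile likelihood plus a within-group quadratic penalty (the difference $\Delta L(\theta)$), identifies the first term in (\ref{1storder_GFE}) with the standard Arellano--Hahn FE bias, and uses $\gamma=o(1)$ together with $K/N\to 0$ and Assumption \ref{ass_coro_bias} to show that the Hessian-weighted group means concentrate on conditional expectations given $h_i$, which is exactly Lemma \ref{lem_coro_bias}. The one detail you gloss over --- that the group effects concentrate on the $\widehat{w}_i$-weighted conditional mean $\gamma(h_i)=\{\mathbb{E}_{h_i}(\widehat{w}_i)\}^{-1}\mathbb{E}_{h_i}(\widehat{w}_i\widehat{\alpha}_i)$ rather than on $\mathbb{E}_{h_i}(\widehat{\alpha}_i(\theta))$ itself --- is resolved in the paper's final step, which shows the two differ by an amount whose contribution to the quadratic form is $o_p(1/T)$.
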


Corollary \ref{coro_bias} shows that the first-order asymptotic bias of GFE is the difference between two terms. The bias is zero when $h_i$ is an injective function of $\xi_{i0}$; i.e., when $\varepsilon_i=h_i-\varphi(\xi_{i0})=0$. More generally, the bias can be expanded in $\varepsilon_i$, and it is small when moments provide accurate estimates of the latent types. Moreover, the first term on the right-hand side of (\ref{1storder_GFE}) coincides with the bias of FE (e.g., Arellano and Hahn, 2007). The form of (\ref{1storder_GFE}) implies that the biases of FE and GFE are equal when the moments are the FE estimates $h_i=\widehat{\alpha}_i(\theta_0)$, however other moment choices can lead to smaller biases. From this perspective, GFE provides flexibility to use well-suited proxies of the latent types. As an example, our simulations of the labor force participation model (\ref{eq_ex_prob2}) show that, by jointly exploiting wages and participation to construct moments that are informative about the latent type, GFE can have smaller bias than FE (and smaller mean squared error as well, as shown in the supplemental material).

A second advantage of GFE comes from the use of grouping, and from the resulting regularization. Indeed, individual FE estimates can be highly variable whenever the number of parameters per individual is large. In such cases, reducing the number of parameters through grouping can improve performance. For instance, the ability to handle multiple components of heterogeneity is central to the performance of GFE in models with time-varying unobserved heterogeneity. This is the case we focus on next.

\subsection{Time-varying heterogeneity}

\noindent To state our second main theorem, where heterogeneity is time-varying, we make the following assumptions, where $\ell_{it}(\alpha_{it},\theta)=\ln f(Y_{it}\,|\, Y_{i,t-1},X_{it},\alpha_{it},\theta)$, $\ell_i(\alpha_{i},\theta)=\frac{1}{T}\sum_{t=1}^T\ell_{it}(\alpha_{it},\theta)$, and $\overline{\alpha}^t(\theta,\xi)={\limfunc{argmax}}_{\alpha}\,\mathbb{E}_{\xi_{i0}=\xi,\lambda_0=\lambda}(\ell_{it}(\alpha,\theta))$.\footnote{Note that $\overline{\alpha}^t(\theta,\xi_{i0})$ depends on the process $\lambda_{0}$ in addition to the type $\xi_{i0}$, although we leave the dependence on $\lambda_0$ implicit in the notation. In a static model, $\overline{\alpha}^t(\theta,\xi_{i0})$ is a function of $\xi_{i0}$ and $\lambda_{t0}$, while in a dynamic model it also depends on the history of the time effects $(\lambda_{t0},\lambda_{t-1,0},...)$.}

\begin{assumption}{(regularity, time-varying heterogeneity)} \label{ass_regu_TV}

	\begin{enumerate}[itemsep=-3pt,label=(\roman*),ref=\roman*,topsep=0pt]
		\item $(Y_i',X_i',\xi_{i0}',h_i')'$ are i.i.d. across $i$ conditional on $\lambda_{0}$; $(Y_{it}',X_{it}',\lambda_{t0}')'$ are stationary for all $i$; $\ell_{it}(\alpha_{it},\theta)$ is three times differentiable, for all $i,t$; and $\Theta$ and the space for $\alpha_{it0}$ are compact, and $\theta_0$ belongs to the interior of $\Theta$.\label{ass_regu_ib}

		\item $N,T,K$ tend jointly to infinity; $\max_{t} \,\sup_{\xi,\lambda,\alpha,\theta}\, |\mathbb{E}_{\xi_{i0}=\xi,\lambda_0=\lambda}(\ell_{it}(\alpha,\theta))|=O(1)$, and similarly for the first three derivatives of $\ell_{it}$; the minimum (respectively, maximum) eigenvalue of $(-\frac{\partial^2 \ell_{it}(\alpha,\theta)}{\partial\alpha\partial{\alpha}^{\prime}})$ is bounded away from zero (resp., infinity) with probability one, uniformly in $i,t,\alpha,\theta$; the third derivatives of $\ell_{it}(\alpha,\theta)$ are $O_p(1)$, uniformly in $i,t,\alpha,\theta$; and $\frac{1}{NT}\sum_{i=1}^N\sum_{t=1}^T[\ell_{it}(\alpha_{it0},\theta_0)-\mathbb{E}_{\xi_{i0},\lambda_0}(\ell_{it}(\alpha_{{i}t0},\theta_0))]^2=O_p(1)$, and similarly for the first three derivatives.\label{ass_regu_iib}

		\item 
		$ \min_{t} \,\inf_{\xi,\lambda,\theta}\, \mathbb{E}_{\xi_{i0}=\xi,\lambda_0=\lambda}(-\frac{\partial^2 \ell_{it}(\overline{\alpha}^t(\theta,\xi),\theta)}{\partial\alpha\partial{\alpha}^{\prime}})>0$; $\mathbb{E}[ \frac{1}{T}\sum_{t=1}^T\ell_{it}(\overline{\alpha}^t(\theta,\xi_{i0}),\theta)]$ has a unique maximum at $\theta_0$ on $\Theta$, and its matrix of second derivatives is $-H<0$; and $\sup_{\theta}\frac{1}{NT}\sum_{i=1}^N\sum_{t=1}^T\|\frac{\partial^2\ell_{it}(\overline{\alpha}^t(\theta,\xi_{i0}),\theta)}{\partial \theta\partial \alpha'}\|^2=O_p(1)$.

		\label{ass_regu_iiib}

		\item
		
		$ \|\frac{\partial}{\partial\xi'}\big|_{\xi=\widetilde{\xi}}\,\mathbb{E}_{\xi_{i0}=\xi,\lambda_0=\lambda}(\limfunc{vec}\frac{\partial^2 \ell_{it}(\alpha,\theta_0)}{\partial \theta\partial{\alpha}^{\prime}})\|$, $\|\frac{\partial}{\partial\xi'}\big|_{\xi=\widetilde{\xi}}\,\mathbb{E}_{\xi_{i0}=\xi,\lambda_0=\lambda}(\limfunc{vec}\frac{\partial^2 \ell_{it}(\alpha,\theta_0)}{\partial \alpha\partial{\alpha}^{\prime}})\|$, and \\$ \|\frac{\partial}{\partial\xi'}\big|_{\xi=\widetilde{\xi}}\,\mathbb{E}_{\xi_{i0}=\xi,\lambda_0=\lambda}(\frac{\partial \ell_{it}(\overline{\alpha}^t(\theta,\widetilde{\xi}),\theta)}{\partial\alpha})\|$ are $O(1)$, uniformly in $t$, $\widetilde{\xi}$, $\lambda$, $\alpha$, and $\theta$.
		\label{ass_regu_ivb}

		\item 
		
	$\mathbb{E}_{h_i=h,\xi_{i0}=\xi,\lambda_0=\lambda}(\frac{\partial \ell_{it}(\overline{\alpha}^{t}(\theta,\xi),\theta)}{\partial\alpha})$ and $\mathbb{E}_{h_i=h,\xi_{i0}=\xi,\lambda_0=\lambda}(\limfunc{vec}\frac{\partial}{\partial \theta'}\big|_{\theta_0}\frac{\partial \ell_{it}(\overline{\alpha}^t(\theta,\xi),\theta)}{\partial\alpha})$ are twice differentiable with respect to $h$, with first and second derivatives that are uniformly bounded in $t$, $\xi$, $\lambda$, $h$ in the support of $h_i$ given $\lambda_0=\lambda$, and $\theta\in \Theta$; and  $\|{\limfunc{Var}}_{h_i=h,\xi_{i0}=\xi,\lambda_0=\lambda}(\frac{\partial\ell_{it} (\overline{\alpha}^{t}(\theta,\xi),\theta)}{\partial\alpha})\|$ and $\|{\limfunc{Var}}_{h_i=h,\xi_{i0}=\xi,\lambda_0=\lambda}(\limfunc{vec}\frac{\partial}{\partial \theta'}\big|_{\theta_0}\frac{\partial \ell_{it}(\overline{\alpha}^t(\theta,\xi),\theta)}{\partial\alpha})\|$ are $O(1)$, uniformly in $t$, $\xi$, $\lambda$, $h$, and $\theta$.\label{ass_regu_vb}

	\end{enumerate} 
	
\end{assumption}

In part (\ref{ass_regu_iib}) in Assumption \ref{ass_regu_TV}, we impose a stronger concavity condition than in Assumption \ref{ass_regu}.\footnote{In particular, we use part (\ref{ass_regu_iib}) in Assumption \ref{ass_regu_TV} to establish consistency. Note that this condition can be restrictive in models with time-varying random coefficients.} The other parts are similar to Assumption \ref{ass_regu}, except part (\ref{ass_regu_vb}) where we require regularity of certain conditional expectations and variances.

We next state our second main result, where, differently from Theorem \ref{theo2},  $s_i$ in (\ref{eq_score}) and $H$ in (\ref{eq_hessian}) are now evaluated at $(\theta_0,\alpha_{it0})$, and expectations are conditional on $(\xi_{i0},\lambda_0)$.

\begin{theorem}\label{theo2_TV}Let the conditions of Lemmas \ref{theo1} and \ref{lemma_GL} and Assumptions \ref{ass_alpha}, \ref{ass_inj} and \ref{ass_regu_TV} hold. Then, as $N,T,K$ tend to infinity such that $K/N$ tends to zero, we have:
	\begin{equation}\label{2steptheta_TV}
	\widehat{\theta}=\theta_0+ H^{-1}\frac{1}{N}\sum_{i=1}^Ns_i +O_p\left(\frac{1}{T}\right)+O_p\left(\frac{K}{N}\right)+O_p\left(K^{-\frac{2}{d}}\right)+o_p\left(\frac{1}{\sqrt{NT}}\right).
	\end{equation}

\end{theorem}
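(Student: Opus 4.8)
The plan is to follow the architecture of the proof of Theorem \ref{theo2} --- profile out the group effects, invoke an envelope identity, and expand the profiled score --- while isolating the one genuinely new feature of the time-varying model: each of the $KT$ group-time effects $\alpha_t(k)$ is now informed by only the $N_k\approx N/K$ cross-sectional observations in group $k$ at date $t$, rather than by $N_kT$ observations as in the time-invariant case. First I would combine Lemmas \ref{theo1} and \ref{lemma_GL} with the Lipschitz inverse $\psi$ of Assumption \ref{ass_inj}: setting $\widehat\xi(\widehat k_i)=\psi(\widehat h(\widehat k_i))$ gives $\frac1N\sum_{i=1}^N\|\widehat\xi(\widehat k_i)-\xi_{i0}\|^2=O_p(1/T)+O_p(K^{-2/d})$, so the kmeans partition reproduces the latent types in mean square up to the two reported rates.

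Next, using the uniform strong concavity and the law-of-large-numbers conditions in parts (\ref{ass_regu_ib})--(\ref{ass_regu_iiib}) of Assumption \ref{ass_regu_TV}, I would show $\widehat\theta\to_p\theta_0$ and, because $\ell_i=\frac1T\sum_t\ell_{it}(\alpha_{it},\theta)$ is separable across $t$, solve the group first-order conditions date by date for $\widehat\alpha_t(k,\theta)=\limfunc{argmax}_a\sum_{i:\widehat k_i=k}\ell_{it}(a,\theta)$. The stronger concavity relative to Assumption \ref{ass_regu} is needed precisely here, since each effect rests on only $N_k$ observations; the restriction $K/N\to0$ is what forces $N_k\to\infty$. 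A one-term expansion of these conditions yields $\widehat\alpha_t(k,\theta)-a_t(k,\theta)=\Omega_t(k,\theta)^{-1}\frac1{N_k}\sum_{j:\widehat k_j=k}\frac{\partial\ell_{jt}}{\partial\alpha}(a_t(k,\theta),\theta)+O_p(K/N)$, where $a_t(k,\theta)$ is the population within-group maximizer and $\Omega_t(k,\theta)$ the corresponding average curvature; the leading noise is $O_p(\sqrt{K/N})$ per group-time.

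Since $\partial\widehat\alpha_t(k,\theta)/\partial\theta$ is constant within a group, the group conditions annihilate it and give the envelope identity $\frac{\partial\widehat L}{\partial\theta}(\theta)=\frac1{NT}\sum_{i,t}\frac{\partial\ell_{it}}{\partial\theta}(\widehat\alpha_t(\widehat k_i,\theta),\theta)$ for the profiled objective $\widehat L(\theta)=\frac1{NT}\sum_{i,t}\ell_{it}(\widehat\alpha_t(\widehat k_i,\theta),\theta)$. Expanding at $\theta_0$ around $\alpha_{it0}=\overline\alpha^t(\theta_0,\xi_{i0})$ and substituting $\widehat\alpha_t-\alpha_{it0}=(\widehat\alpha_t-a_t(\widehat k_i,\theta_0))+(a_t(\widehat k_i,\theta_0)-\alpha_{it0})$ from Steps~1--2, I would (i) reduce the sampling-noise piece --- after replacing the group-average curvature and score by their $(\xi_{i0},\lambda_0)$-conditional expectations --- to the efficient-score correction carried by $s_i$ in (\ref{eq_score}); (ii) collect the diagonal $i=j$ self-correlation as an $O_p(K/N)$ incidental-parameter bias; and (iii) bound the within-group approximation piece by the mean-square type variation of Step~1, producing $O_p(1/T)+O_p(K^{-2/d})$. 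The quadratic and third-order remainders are of these same orders, with the residual cross terms $o_p(1/\sqrt{NT})$ under the bounded-variance and third-derivative conditions, and inverting the Hessian (which converges to $-H$ by part (\ref{ass_regu_iiib}) and (\ref{eq_hessian})) gives (\ref{2steptheta_TV}).

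The hard part is step (3)(i)--(ii): cleanly splitting the cross-derivative-times-effect-error term into a mean-zero efficient-score contribution that must aggregate \emph{exactly} to $H^{-1}\frac1N\sum_i s_i$, with expectations conditional on $(\xi_{i0},\lambda_0)$ rather than on the noisy moment $h_i$ that actually determines group membership, and a self-correlation bias that must be shown to be no larger than $O_p(K/N)$. Controlling this requires handling the $KT$ group-time effects uniformly with random group sizes as $K\to\infty$, together with a Taylor expansion in $h$ of the relevant conditional expectations and variances around $h_i=\varphi(\xi_{i0})$; this is exactly where the differentiability and bounded-variance requirements of part (\ref{ass_regu_vb}), and the restriction $K/N\to0$, are used, playing the role that the injectivity-driven bias cancellation of Corollary \ref{coro_bias} plays in the time-invariant expansion.
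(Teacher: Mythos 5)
Your proposal is correct and follows essentially the same route as the paper's proof: the paper likewise combines the kmeans rate (Lemmas \ref{theo1} and \ref{lemma_GL}) with Lipschitz continuity of $\psi$, establishes consistency and the mean-square rate for the group-time effects (Lemma \ref{lem_sup1}, where strong concavity and $K/N\to 0$ enter exactly as you say), expands the profiled score so that it equals the target score $\frac{1}{N}\sum_i s_i$ plus $O_p(\delta)$ with $\delta=1/T+K/N+K^{-2/d}$ (Lemma \ref{lem_sup2}), proves Hessian convergence via the derivatives of the profiled effects (Lemma \ref{lem_sup3}), and inverts. Your identification of the hard part --- controlling products of group-averaged noise with individual scores, conditioning on $(\xi_{i0},\lambda_0)$ while group membership depends on $h_i$, and using the differentiability/bounded-variance conditions of Assumption \ref{ass_regu_TV}(\ref{ass_regu_vb}) via a Taylor expansion in $h$ around $\varphi(\xi_{i0})$ --- is precisely how the paper obtains the $O_p(K/N)$ term.
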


Theorem \ref{theo2_TV} shows that GFE is consistent as $N,T,K$ tend to infinity and $K/N$ tends to zero. This requires no parametric assumption about how $\xi_{i0}$ and $\lambda_{t0}$ affect individual and time heterogeneity, unlike additive or interactive FE methods. 

To give intuition, consider the probit model (\ref{eq_ex_prob}) with time-varying unobservables. Under Assumption \ref{ass_inj}, in the first step, GFE consistently estimates an injective function $\varphi_{i0}=\varphi(\xi_{i0})$ of the type. One can then rewrite the outcome equation in (\ref{eq_ex_prob}) as $Y_{it}=\boldsymbol{1}\left\{X_{it}'\theta_0+\alpha(\psi(\varphi_{i0}),\lambda_{t0})+U_{it}\geq 0\right\}$, where $\psi$ is the function introduced in Assumption \ref{ass_inj}, and $\alpha_{it0}=\alpha(\psi(\varphi_{i0}),\lambda_{t0})$ is simply a time-varying function of $\varphi_{i0}$. In the second step, GFE estimates this function by including group-time indicators in the probit regression.

As in Theorem \ref{theo2}, the expansion in Theorem \ref{theo2_TV} features a combination of incidental parameter bias and approximation error. When using the rule (\ref{choice_K_eq}) for $K$, the approximation error is of the same or lower order compared to 1/T. However, the $O_p(K/N)$ term is a new contribution relative to the time-invariant case, which reflects the estimation of $KT$ group-specific parameters using $NT$ observations. As an example, when $d=1$ and $K$ is chosen of the order of $T^{\frac{1}{2}}$, the $O_p$ terms in (\ref{2steptheta_TV}) are $O_p(1/T+T^{\frac{1}{2}}/N)$.\footnote{When $N/T^{\frac{3}{2}}\rightarrow 0$, one could obtain a faster rate in (\ref{2steptheta_TV}) by choosing another rule for $K$.} Although this rate of convergence can be fast when $N$ is sufficiently large relative to $T$, it is too slow to apply conventional bias-reduction methods for inference. In the next section, under the additional assumption that time heterogeneity $\lambda_{t0}$ is low-dimensional, we describe how to obtain a faster convergence rate by grouping both individuals and time periods.

\section{Complements and extensions\label{sec_Extens}}

\subsection{Bias reduction and inference\label{subsec_2nd_ext}}

\noindent In models with time-invariant heterogeneity, Corollary \ref{coro_groups} can be used to characterize the asymptotic distribution of GFE estimators. However, as in FE, the presence of the $O_p(1/T)$ term in (\ref{2steptheta_groups}) shifts the distribution of $\widehat{\theta}$ away from $\theta_0$ whenever $T$ is not large relative to $N$. A variety of methods are available to bias-correct FE estimators and construct asymptotically valid confidence intervals; see Arellano and Hahn (2007) for a review. Consider the setup of Corollary \ref{coro_groups}, under the additional assumption that the $O_p(1/T)$ term in (\ref{2steptheta_groups}) is equal to $C/T+o_p(1/T)$ for some constant $C$. In this case, one can show that half-panel jackknife (Dhaene and Jochmans, 2015) gives asymptotically valid inference based on GFE as $N$ and $T$ tend to infinity at the same rate.\footnote{In particular, half-panel jackknife is valid under the conditions of Corollary \ref{coro_bias}, which requires taking $\gamma=o(1)$ in our rule (\ref{choice_K_eq}) for $K$ in order for the approximation error to be of small order. Deriving primitive conditions for the validity of half-panel jackknife and other bias-reduction methods for other choices of $K$ is left for future work.} The distribution of the bias-corrected GFE estimator is then asymptotically normal centered at the truth, and the asymptotic variance $H^{-1}$ can be consistently estimated by replacing the expectations in (\ref{eq_score}) and (\ref{eq_hessian}) by group-specific means. 

 In settings where heterogeneity varies over time, it can be desirable to group not only individuals as in (\ref{eq_firststep}), but also time periods (or alternatively counties or markets, depending on the application). We now describe such a method, and discuss its potential for performing inference in models with time-varying heterogeneity. In the \emph{two-way GFE} approach, we classify time periods based on cross-sectional moments $w_t=\frac{1}{N}\sum_{i=1}^N w(Y_{it},X_{it})$, and compute:
\begin{equation}\label{eq_firststep2}
\left(\widehat{w}(1),...,\widehat{w}(L),\widehat{l}_1,...,\widehat{l}_T\right)=\underset{\left(\widetilde{w}(1),...,\widetilde{w}(L),l_1,...,l_T\right)}{\limfunc{argmin}}\,\,\,\sum_{t=1}^T\big\|w_t-\widetilde{w}(l_t)\big\|^2,
\end{equation} 
where $\{l_t\}$ are partitions of $\{1,...,T\}$ into $L$ groups. Given the group indicators $\widehat{k}_i$ and $\widehat{l}_t$, we then maximize $\sum_{i=1}^N\sum_{t=1}^T\ln f(Y_{it}\,|\, X_{it},\alpha(\widehat{k}_i,\widehat{l}_t),\theta)$, with respect to $\theta$ and the $KL$ group-specific parameters $\alpha(k,l)$. 

Two-way GFE estimators can be expanded similarly to Theorem \ref{theo2_TV}, under two main additional assumptions: the model is \emph{static} and observations are independent across $i$ and $t$, and the dimensions $d_{\lambda}$ of time heterogeneity $\lambda_{t0}$ and $d$ of individual heterogeneity $\xi_{i0}$ are \emph{both} small. Then, for $s_{i}$ and $H$ as in Theorem \ref{theo2_TV}, we show in the supplemental material that: 
\begin{align*}
\widehat{\theta}=\theta_0+ H^{-1}\frac{1}{N}\sum_{i=1}^Ns_{i} {+}O_p\left(\frac{1}{T}{+}\frac{1}{N}{+}\frac{KL}{NT}\right){+}O_p\left(K^{-\frac{2}{d}}+L^{-\frac{2}{d_{\lambda}}}\right){+}o_p\left(\frac{1}{\sqrt{NT}}\right).
\end{align*} 
Suppose $d=d_{\lambda}=1$, and $K$ is given by (\ref{choice_K_eq}) with $\gamma$ asymptotically constant, with an analogous choice for $L$. Then the $O_p$ term in this expansion can be shown to be $O_p(1/T+1/N)$. We leave to future work the formal study of the validity of bias reduction methods for inference, such as two-way split panel jackknife (Fern\'andez-Val and Weidner, 2016), as $N$ and $T$ tend to infinity at the same rate. 
\subsection{GFE with conditional moments\label{subsec_1st_ext}}

\noindent Our theory shows that the dimension $d$ of heterogeneity plays a key role in the properties of GFE. While models with scalar latent types $\xi_{i0}$, such as model (\ref{eq_ex_prob2}) of wages and labor force participation, are not uncommon in economics, many applications involve conditioning covariates. Under Assumptions \ref{ass_alpha} and \ref{ass_inj}, the moments $h_i$ should, asymptotically, be injective functions of all the heterogeneity coming from both $Y_i$ and $X_i$. However, when $X_i$ depends on multiple components of heterogeneity, this might lead to a large dimension $d$.  

We now show that GFE can still perform well under a weaker form of injectivity. Consider the case where Assumption \ref{ass_alpha} is replaced by $\alpha_{i0}={\alpha}(\xi_{i0})$ and $\mu_{i0}={\mu}(\xi_{i0},\nu_{i0})$, where $\nu_{i0}$ is another latent component that affects covariates. Moreover, instead of requiring injectivity for both $\xi_{i0}$ and $\nu_{i0}$, let us maintain Assumption \ref{ass_inj}, which only requires $h_i$ to be injective for $\xi_{i0}$. In other words, $h_i$ needs to be directly informative about the unobserved heterogeneity component $\xi_{i0}$ that appears in the \emph{conditional distribution} of $Y_i$ given $X_i$. We show in the supplemental material that, under regularity conditions otherwise similar to those of Corollary \ref{coro_groups}, the convergence rate of GFE is unaffected by the dimension of $\nu_{i0}$. Specifically, when $K=\widehat{K}$ is given by (\ref{choice_K_eq}) with $\gamma=O(1)$ (which adapts to the dimension of $\xi_{i0}$ and not the one of $\nu_{i0}$), we have:
\begin{equation}\label{eq_condGFE_rate}
\widehat{\theta}=\theta_0 +O_p\left(\frac{1}{T}\right)+O_p\left(\frac{1}{\sqrt{NT}}\right).
\end{equation} 
To prove (\ref{eq_condGFE_rate}) we assume that the rate condition $T^{1+\frac{d}{2}}=O(N)$ holds, where $d$ is the (small) dimension of $\xi_{i0}$.\footnote{In the supplemental material, we provide an asymptotic expansion for GFE in a linear homoskedastic model under a small approximation error, as in Corollary \ref{coro_bias}. The argument requires no restriction on the relative rates of $N$ and $T$. Interestingly, in this case the asymptotic variances of GFE and FE differ, since the within-group variation in $\nu_{i0}$ tends to decrease the variance, yet the expansion features an additional score term compared to Theorem \ref{theo2}.}

In models with time-varying conditioning covariates, a simple way to target moments to $\xi_{i0}$ is to construct $h_i$ using the {conditional distribution} of $Y_{i}$ given $X_{i}$. To see this, consider a static model $f(Y_{it}\,|\, X_{it},\alpha_{i0},\theta_0)$ where $X_{it}$ has finite support. In this case, we have under appropriate conditions:
$$\underset{=h_i(x)}{\underbrace{\frac{\sum_{t=1}^T\boldsymbol{1}\{X_{it}=x\}h(Y_{it},X_{it})}{\sum_{t=1}^T\boldsymbol{1}\{X_{it}=x\}}}}=\underset{=\varphi(x,\xi_{i0})}{\underbrace{\mathbb{E}_{X_{it}=x,\xi_{i0}}[h(Y_{it},X_{it})]}}+o_p\left(1\right),$$
where $h_i(x)$ is only defined when $\sum_{t=1}^T\boldsymbol{1}\{X_{it}=x\}\neq 0$, and, importantly, $\varphi(x,\xi_{i0})$ does not depend on $\nu_{i0}$. In the supplemental material we discuss implementation, and we report simulation results in a probit model with binary covariates. We find that using conditional moments can enhance the performance of GFE in such settings. We leave the analysis of conditional moments in the presence of continuous covariates to future work. 

\section{Conclusion\label{ConcSec}}

In this paper, we analyze some properties of two-step grouped fixed-effects (GFE) methods in settings where population heterogeneity is not discrete. Our framework relies on two main assumptions: low-dimensional individual heterogeneity, and the availability of moments to approximate the latent types. In many economic models, individual types are low-dimensional. By taking advantage of this feature, GFE can allow for flexible forms of heterogeneity across individuals and over time. 

GFE methods are of interest in various applied settings. In a previous version of this paper, we used two-step GFE to estimate a dynamic structural model of location choice in the spirit of Kennan and Walker (2011), and we analyzed the performance of the discrete estimator of Bonhomme \textit{et al.} (2019) for matched employer-employee data in the presence of continuous firm heterogeneity. Other potential applications include nonlinear factor models, nonparametric and semi-parametric panel data models such as quantile regression with individual effects, and network models.

\begin{singlespace}

\end{singlespace}

\appendix

\setcounter{section}{0}\renewcommand{\thesection}{A\arabic{section}}

\setcounter{figure}{0}\renewcommand{\thefigure}{A\arabic{figure}}

\setcounter{table}{0}\renewcommand{\thetable}{A\arabic{table}}

\setcounter{footnote}{0}\renewcommand{\thefootnote}{\arabic{footnote}}

\setcounter{assumption}{0}\renewcommand{\theassumption}{A\arabic{assumption}}

\setcounter{equation}{0}\renewcommand{\theequation}{A\arabic{equation}}

\setcounter{lemma}{0}\renewcommand{\thelemma}{A\arabic{lemma}}

\setcounter{proposition}{0}\renewcommand{\theproposition}{A\arabic{proposition}}

\setcounter{corollary}{0}\renewcommand{\thecorollary}{A\arabic{corollary}}

\setcounter{theorem}{0}\renewcommand{\thetheorem}{A\arabic{theorem}}

\vskip 1cm

\begin{center}
	{ {\LARGE APPENDIX} }
\end{center}


\paragraph{\underline{Proof of Lemma \ref{theo1}.}} Define $B_{\varphi(\xi)}(K)={\limfunc{min}}_{\left(\widetilde{h},\{k_i\}\right)} \frac{1}{N}\sum_{i=1}^N \|\varphi(\xi_{i0})-\widetilde{h}(k_i)\|^2$, similarly to (\ref{eq_approx_error}), and denote: 
$(\underline{h},\{\underline{k}_i\})={\limfunc{argmin}}_{\left(\widetilde{h},\{k_i\}\right)} \sum_{i=1}^N\|\varphi(\xi_{i0})-\widetilde{h}(k_i)\|^2$. By definition of $(\widehat{h},\{\widehat{k}_i\})$, we have: $\sum_{i=1}^N \|h_i-\widehat{h}(\widehat{k}_i)\|^2\leq \sum_{i=1}^N \left\|h_i-\underline{h}(\underline{k}_i)\right\|^2$ (almost surely). Letting ${\varepsilon}_{i}=h_i-\varphi(\xi_{i0})$, we thus have, using the triangle inequality twice:
\begin{align*}
&\frac{1}{N}\sum_{i=1}^N \left\|\varphi(\xi_{i0})-\widehat{h}(\widehat{k}_i)\right\|^2\leq
\frac{2}{N}\sum_{i=1}^N\left\|h_i-\widehat{h}(\widehat{k}_i)\right\|^2+\frac{2}{N}\sum_{i=1}^N \left\|h_i-\varphi(\xi_{i0})\right\|^2\\
&{\leq} 	\frac{2}{N}\sum_{i=1}^N\left\|h_i-\underline{h}(\underline{k}_i)\right\|^2+\frac{2}{N}\sum_{i=1}^N \left\|\varepsilon_i\right\|^2{\leq} 		
4\underset{= B_{\varphi(\xi)}(K)}{\underbrace{\left(\frac{1}{N} \sum_{i=1}^N\left\|\varphi(\xi_{i0})-\underline{h}(\underline{k}_i)\right\|^2\right)}}{+}\frac{6}{N}\sum_{i=1}^N\|{\varepsilon}_{i}\|^2.
\end{align*}
By Assumption \ref{ass_inj}, $\frac{1}{N}\sum_{i=1}^N\|{\varepsilon}_{i}\|^2=O_p(1/T)$. In addition, since $\varphi$ is Lipschitz-continuous, there exists a constant $\tau$ such that $\|\varphi(\xi')-\varphi(\xi)\|\leq \tau \|\xi'-\xi\|$ for all $(\xi,\xi')$. This implies that $B_{\varphi(\xi)}(K)\leq {\tau^2} B_{\xi}(K)$, and Lemma \ref{theo1} follows.

\paragraph{\underline{Proofs of Theorems \ref{theo2} and \ref{theo2_TV}.}} It is convenient to use a common notation for Theorems \ref{theo2} and \ref{theo2_TV}. Let $p$ denote the number of individual-specific vectors $\alpha_i^j$, $j\in\{1,...,p\}$. In the time-invariant case: $p=1$, $j=1$, and $\alpha_{i}^j=\alpha_i$. In the time-varying case: $p=T$, $j\in\{1,...,T\}$, and $\alpha_{i}^j=\alpha_{it}$. Denote $\ell_{ij}= \ell_{i}$ in the time-invariant case, and $\ell_{ij}= \ell_{it}$ in the time-varying case. Let $v_{ij}=\frac{\partial\ell_{ij}}{\partial \alpha}$, $v_{ij}^{\alpha}=\frac{\partial^2\ell_{ij}}{\partial \alpha\partial \alpha'}$, $v_{ij}^{\theta}=\frac{\partial^2\ell_{ij}}{\partial \theta\partial \alpha'}$, and $v_{ij}^{\alpha\alpha}=\frac{\partial^3\ell_{ij}}{\partial \alpha\partial \alpha'\otimes \partial \alpha'}$ (which is a $\limfunc{dim}\alpha_{i0}^j\times (\limfunc{dim}\alpha_{i0}^j)^2$ matrix). Let, for all $\theta\in\Theta$, $j\in \{1,...,p\}$, and $k\in\{1,...,K\}$, $\widehat{\alpha}^j(k,\theta){=}{\limfunc{argmax}}_{\alpha}\sum_{i=1}^N\boldsymbol{1}\{\widehat{k}_i=k\}\ell_{ij}\left(\alpha,\theta\right)$. Likewise, denote $\overline{\alpha}^j(\theta,\xi){=}{\limfunc{argmax}}_{\alpha}\mathbb{E}_{\xi_{i0}=\xi,\lambda_{0}=\lambda}(\ell_{ij}(\alpha,\theta))$. We will index expectations by $\xi_{i0}$ and $\lambda_0$, although the conditioning on $\lambda_0$ is not needed in the time-invariant case of Theorem \ref{theo2}. Finally, let $\delta=\frac{1}{T}+K^{-\frac{2}{d}}$ in the time-invariant case, and let $\delta=\frac{1}{T}+\frac{K}{N}+K^{-\frac{2}{d}}$ in the time-varying case. 

\underline{To show consistency} of $\widehat{\theta}$, we first establish the next technical lemma (see the supplemental material for the proof):
\begin{lemma}\label{lem_sup1}
	Under the conditions of either Theorem \ref{theo2} or Theorem \ref{theo2_TV} we have:
	\begin{align}\label{Rate_alpha_hat_theta} &\frac{1}{Np}\sum_{i=1}^N\sum_{j=1}^p\left\|\widehat{\alpha}^j(\widehat{k}_i,\theta)-\overline{\alpha}^{j}(\theta,\xi_{i0})\right\|^2=O_p(\delta), \quad \forall \theta\in\Theta, \\
	&\sup_{\theta\in\Theta}\,\frac{1}{Np}\sum_{i=1}^N\sum_{j=1}^p\left\|\widehat{\alpha}^j(\widehat{k}_i,\theta)-\overline{\alpha}^{j}(\theta,\xi_{i0})\right\|^2=o_p\left(1\right).\label{eq_alpha_hat_sup}
	\end{align}	
\end{lemma}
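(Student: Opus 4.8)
The plan is to compare each group estimate $\widehat{\alpha}^j(\widehat{k}_i,\theta)$ with the individual target $\overline{\alpha}^j(\theta,\xi_{i0})$ through the first-order conditions of the group problems, exploiting the strict concavity of $\ell_{ij}$ in $\alpha$ guaranteed by Assumptions \ref{ass_regu}(\ref{ass_regu_iia}) and \ref{ass_regu_TV}(\ref{ass_regu_iib}). A preliminary step is to record that $\overline{\alpha}^j(\theta,\cdot)$ is Lipschitz in $\xi$, uniformly in $(\theta,t)$: this follows from the implicit function theorem applied to $\mathbb{E}_{\xi_{i0}=\xi,\lambda_0}(v_{ij}(\overline{\alpha}^j(\theta,\xi),\theta))=0$, using that the conditional expected Hessian is bounded away from singularity and that its $\xi$-derivative is $O(1)$ by Assumptions \ref{ass_regu}(\ref{ass_regu_iv}) and \ref{ass_regu_TV}(\ref{ass_regu_ivb}).

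I would then introduce the infeasible group effect $\alpha^{\ast j}(k,\theta)$ solving $\frac{1}{N_k}\sum_{i:\widehat{k}_i=k}\mathbb{E}_{\xi_{i0},\lambda_0}(v_{ij}(\alpha,\theta))=0$, with $N_k=\#\{i:\widehat{k}_i=k\}$, and split
$$\widehat{\alpha}^j(\widehat{k}_i,\theta)-\overline{\alpha}^j(\theta,\xi_{i0})=\big[\widehat{\alpha}^j(\widehat{k}_i,\theta)-\alpha^{\ast j}(\widehat{k}_i,\theta)\big]+\big[\alpha^{\ast j}(\widehat{k}_i,\theta)-\overline{\alpha}^j(\theta,\xi_{i0})\big].$$
The second, \emph{approximation}, bracket is controlled by the within-group dispersion of the types: since $\alpha^{\ast j}(k,\theta)$ is Lipschitz-close to the individual targets in the group, its average square is bounded by a constant times $\frac{1}{N}\sum_i\|\xi_{i0}-\widetilde\xi(\widehat{k}_i)\|^2$ for the group means $\widetilde\xi$. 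Because $\xi_{i0}=\psi(\varphi(\xi_{i0}))$ with $\psi$ Lipschitz, this is at most $L_\psi^2\,\frac{1}{N}\sum_i\|\varphi(\xi_{i0})-\widehat{h}(\widehat{k}_i)\|^2$, which Lemma \ref{theo1} bounds by $O_p(1/T)+O_p(B_\xi(K))=O_p(1/T+K^{-2/d})$ (Lemma \ref{lemma_GL}). This bracket therefore contributes $O_p(\delta)$ in both models.

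For the first, \emph{noise}, bracket I would linearize the group first-order condition $\frac{1}{N_k}\sum_{i:\widehat{k}_i=k}v_{ij}(\widehat{\alpha}^j(k,\theta),\theta)=0$ around $\alpha^{\ast j}(k,\theta)$, using uniform invertibility of the group Hessian and the bounded third derivatives (Assumptions \ref{ass_regu}(\ref{ass_regu_iia}), \ref{ass_regu_TV}(\ref{ass_regu_iib})) to absorb the remainder, obtaining $\widehat{\alpha}^j(k,\theta)-\alpha^{\ast j}(k,\theta)=H_k^{-1}\frac{1}{N_k}\sum_{i:\widehat{k}_i=k}\widetilde{e}_{ij}+(\text{smaller})$, where $\widetilde{e}_{ij}=v_{ij}(\alpha^{\ast j}(k,\theta),\theta)-\mathbb{E}_{\xi_{i0},\lambda_0}(v_{ij}(\alpha^{\ast j}(k,\theta),\theta))$ is centered given the type. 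The quantity to bound is then $\frac{1}{N}\sum_k\frac{1}{N_k}\|\sum_{i:\widehat{k}_i=k}\widetilde{e}_{ij}\|^2$. In the time-invariant case $\widetilde{e}_{ij}$ is a time average, so $\frac{1}{N}\sum_i\|\widetilde{e}_{ij}\|^2=O_p(1/T)$, and the crude Cauchy--Schwarz bound $\|\sum_{i\in k}\widetilde{e}_{ij}\|^2\le N_k\sum_{i\in k}\|\widetilde{e}_{ij}\|^2$ already yields $O_p(1/T)=O_p(\delta)$. In the time-varying case $\widetilde{e}_{it}$ is a single-period score with $\mathbb{E}(\|\widetilde{e}_{it}\|^2\,|\,\xi_{i0},\lambda_0)=O(1)$, so Cauchy--Schwarz only gives $O_p(1)$ and one must instead exploit cancellation: conditional on $\lambda_0$ and the types the scores are mean-zero and independent across $i$, which would give $\frac{1}{N}\sum_k\frac{1}{N_k}\cdot N_k\cdot O(1)=O_p(K/N)=O_p(\delta)$.

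The main obstacle is precisely this last step, because the partition $\{\widehat{k}_i\}$ is \emph{data-dependent}: conditioning on the kmeans groups conditions on $\{h_i\}$, which is correlated with $\widetilde{e}_{ij}$, so the scores are not conditionally mean-zero given the realized groups and the naive variance computation is invalid. I would resolve this either (i) by a maximal inequality over the finite class of Voronoi partitions induced by $K$ centers in moment space --- the assignment in (\ref{eq_firststep}) is a nearest-center rule, so this class has controlled complexity --- or (ii) by replacing $\{\widehat{k}_i\}$ with the infeasible partition obtained by running kmeans on $\varphi(\xi_{i0})$, which is measurable with respect to $\{\xi_{i0}\}$ alone so that the conditional-independence argument applies directly, and then bounding the contribution of misclassified units through $\frac{1}{N}\sum_i\|h_i-\varphi(\xi_{i0})\|^2=O_p(1/T)$. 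Combining the two brackets delivers the pointwise bound (\ref{Rate_alpha_hat_theta}). For the uniform statement (\ref{eq_alpha_hat_sup}) I would upgrade the pointwise $O_p(\delta)=o_p(1)$ conclusion to hold uniformly on the compact set $\Theta$ by stochastic equicontinuity in $\theta$, which follows from the uniform bounds on the derivatives of $\ell_{ij}$ and the continuity of $\overline{\alpha}^j(\theta,\xi)$ and $\alpha^{\ast j}(k,\theta)$ in $\theta$.
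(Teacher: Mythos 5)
Your decomposition, your treatment of the approximation bracket, and your handling of the time-invariant case are all sound and essentially parallel the paper's argument (the paper avoids linearizing the group first-order condition and instead uses the maximization inequality $\sum_{i,j}\ell_{ij}(a^j(\widehat{k}_i,\theta),\theta)\leq\sum_{i,j}\ell_{ij}(\widehat{\alpha}^j(\widehat{k}_i,\theta),\theta)$ with the pseudo-center $a^j(k,\theta)=\overline{\alpha}^j(\theta,\psi(\widehat{h}(k)))$, but both routes reduce the lemma to bounding within-group means of population scores). The genuine gap is in how you resolve the difficulty you yourself correctly diagnose in the time-varying case. The paper's fix is neither your route (i) nor your route (ii): it re-centers the score at its conditional expectation \emph{given the moment vector itself}. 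Writing $\rho_{j}(h,\xi,\lambda,\theta)=\mathbb{E}_{h_i=h,\xi_{i0}=\xi,\lambda_0=\lambda}\left(v_{ij}(\overline{\alpha}^{j}(\theta,\xi),\theta)\right)$ and $\zeta_{ij}(\theta)=v_{ij}(\overline{\alpha}^{j}(\theta,\xi_{i0}),\theta)-\rho_{j}(h_i,\xi_{i0},\lambda_0,\theta)$, the residuals $\zeta_{ij}$ \emph{are} mean-zero and independent across $i$ conditional on $(h_1,...,h_N,\xi_{10},...,\xi_{N0},\lambda_0)$; since the kmeans partition is a function of the $h$'s alone, the naive variance computation becomes valid and gives $\mathbb{E}\big[\frac{1}{Np}\sum_{i,j}\|\overline{\zeta}_j(\widehat{k}_i,\theta)\|^2\big]=O(K/N)$. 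The price is the conditional-mean term $\rho_j(h_i,\cdot)$, which is no longer zero; the paper kills it with a second-order Taylor expansion in $h$ around $\varphi(\xi_{i0})$, exploiting $\mathbb{E}_{\xi_{i0},\lambda_0}[\rho_{j}(h_i,\xi_{i0},\lambda_0,\theta)]=0$ and part (\ref{ass_regu_vb}) of Assumption \ref{ass_regu_TV} --- the smoothness-in-$h$ and bounded-conditional-variance conditions, which your proof never invokes, a sign that your route departs from the one the assumptions were designed for --- yielding $\frac{1}{Np}\sum_{i,j}\|\rho_{j}(h_i,\xi_{i0},\lambda_0,\theta)\|^2=O_p(1/T)$.

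By contrast, both of your proposed fixes are deficient. Route (ii) fails as stated: in the time-varying case the individual scores $\widetilde{e}_{it}$ are $O_p(1)$, not $O_p(1/\sqrt{T})$, so the contribution of units assigned differently under the feasible and infeasible partitions cannot be controlled by $\frac{1}{N}\sum_i\|h_i-\varphi(\xi_{i0})\|^2=O_p(1/T)$ alone; the symmetric difference of the two partitions is a boundary set whose size is governed by the mass of types near group boundaries (a margin condition nowhere assumed, and one that genuinely bites when $\xi_{i0}$ is continuously distributed), its composition is correlated with the scores, and the two partitions' centers differ as well. Route (i) is more promising in spirit, but a union or chaining bound over the Voronoi class pays an entropy factor of order $K\log N$, delivering at best $O_p(K\log N/N)$ rather than the $O_p(K/N)$ inside $\delta$ that the lemma asserts, and the required concentration for the normalized within-group quadratic forms is left entirely unproven. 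The single missing idea, then, is the conditioning-on-$h_i$ decomposition $v_{ij}=\rho_j+\zeta_{ij}$, which makes the data-dependence of the partition harmless without any uniformity over partitions or misclassification accounting.
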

\noindent From (\ref{eq_alpha_hat_sup}) we then verify using a Taylor expansion that:
\begin{equation*}
\sup_{\theta\in\Theta}\,\left|\frac{1}{Np}\sum_{i=1}^N\sum_{j=1}^p\ell_{ij}\left(\widehat{\alpha}^j(\widehat{k}_i,\theta),\theta\right)-\frac{1}{Np}\sum_{i=1}^N\sum_{j=1}^p\ell_{ij}\left(\overline{\alpha}^{j}\left(\theta,\xi_{i0}\right),\theta\right)\right|=o_p(1).
\end{equation*}
Consistency of $\widehat{\theta}$ then follows by standard arguments.

Next, the \underline{two key steps} in the proof consist in showing the following two expansions:
\begin{align}
&\frac{1}{Np}\sum_{i=1}^N\sum_{j=1}^p\frac{\partial \ell_{ij}(\widehat{\alpha}^j(\widehat{k}_i,\theta_0),\theta_0) }{\partial \theta}{=}\frac{1}{Np}\sum_{i=1}^N\sum_{j=1}^p\frac{\partial}{\partial\theta}\bigg|_{\theta_0}\, \ell_{ij}\left(\overline{\alpha}^{j}(\theta,\xi_{i0}),\theta\right){+}O_p\left(\delta\right),\label{maineq1} \\
&\frac{1}{Np}\sum_{i=1}^N\sum_{j=1}^p\frac{\partial^2}{\partial \theta\partial \theta'}\bigg|_{\theta_0}\,  \left(\ell_{ij}\left(\widehat{\alpha}^j(\widehat{k}_i,\theta),\theta\right)-\ell_{ij}\left(\overline{\alpha}^j(\theta,\xi_{i0}),\theta\right)\right)=o_p(1).\label{maineq2}
\end{align}

\underline{To show (\ref{maineq1})}, we show the following technical lemma, where we omit references to the evaluation points $\theta_0$ and $\alpha_{i0}^j$ for conciseness:
\begin{lemma}\label{lem_sup2}
	Under the conditions of either Theorem \ref{theo2} or Theorem \ref{theo2_TV} we have:
	\begin{align*}
	&\frac{1}{Np}\sum_{i=1}^N\sum_{j=1}^p\mathbb{E}_{\xi_{i0},\lambda_0}\left(v_{ij}^{\theta}\right)\left[\mathbb{E}_{\xi_{i0},\lambda_0}\left(v_{ij}^{\alpha}\right)\right]^{-1}v_{ij}^{\alpha}\left(\widehat{\alpha}^j(\widehat{k}_i,\theta_0)-\alpha_{i0}^j+(v_{ij}^{\alpha})^{-1}v_{ij}\right)=O_p(\delta),\\
	&\frac{1}{Np}\sum_{i=1}^N\sum_{j=1}^p\left(v_{ij}^{\theta}\left(v_{ij}^{\alpha}\right)^{-1}{-}\mathbb{E}_{\xi_{i0},\lambda_0}\left(v_{ij}^{\theta}\right)\left[\mathbb{E}_{\xi_{i0},\lambda_0}\left(v_{ij}^{\alpha}\right)\right]^{-1}\right)v_{ij}^{\alpha}\left(\widehat{\alpha}^j(\widehat{k}_i,\theta_0){-}\alpha_{i0}^j\right){=}O_p(\delta).
	\end{align*}
\end{lemma}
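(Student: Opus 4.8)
The plan is to reduce both displays to weighted averages of a conditionally mean-zero quantity against a coefficient that is Lipschitz in $\xi_{i0}$ and has mean square $O_p(\delta)$, and then to read off the orders from Lemma \ref{theo1} and \eqref{Rate_alpha_hat_theta}. Write $k=\widehat k_i$, $A_{ij}=\mathbb E_{\xi_{i0},\lambda_0}(v_{ij}^\theta)[\mathbb E_{\xi_{i0},\lambda_0}(v_{ij}^\alpha)]^{-1}$, $r_{ij}=\widehat\alpha^j(\widehat k_i,\theta_0)-\alpha_{i0}^j+(v_{ij}^\alpha)^{-1}v_{ij}$, and $d_{ij}=v_{ij}^\theta-A_{ij}v_{ij}^\alpha$. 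Two elementary facts drive everything. First, the defining identity $A_{ij}\mathbb E_{\xi_{i0},\lambda_0}(v_{ij}^\alpha)=\mathbb E_{\xi_{i0},\lambda_0}(v_{ij}^\theta)$ yields $\mathbb E_{\xi_{i0},\lambda_0}(d_{ij})=0$; since the second display equals $\frac1{Np}\sum_{ij}d_{ij}(\widehat\alpha^j(\widehat k_i,\theta_0)-\alpha_{i0}^j)$, its weight is conditionally mean-zero. Second, a third-order Taylor expansion of the score at the group estimate gives $v_{ij}^\alpha r_{ij}=v_{ij}(\widehat\alpha^j(\widehat k_i,\theta_0))-Q_{ij}$ with $\|Q_{ij}\|\le C\|\widehat\alpha^j(\widehat k_i,\theta_0)-\alpha_{i0}^j\|^2$, so $\frac1{Np}\sum_{ij}\|Q_{ij}\|=O_p(\delta)$ by \eqref{Rate_alpha_hat_theta} and the bounded-third-derivative conditions.

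For the first display I would discard the $Q_{ij}$ term and then use the group first-order condition $\sum_{i:\widehat k_i=k}v_{ij}(\widehat\alpha^j(k,\theta_0))=0$. Writing $A_{ij}=a_j(\xi_{i0})$ with $a_j$ Lipschitz (Assumption \ref{ass_regu}(\ref{ass_regu_iv}), resp.\ \ref{ass_regu_TV}(\ref{ass_regu_ivb})) and replacing $A_{ij}$ by the group-representative value $a_j(\psi(\widehat h(\widehat k_i)))$ annihilates the within-group score sum, so the display reduces to $\frac1{Np}\sum_{ij}(A_{ij}-a_j(\psi(\widehat h(\widehat k_i))))v_{ij}(\widehat\alpha^j(\widehat k_i,\theta_0))$. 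Since $\xi_{i0}=\psi(\varphi(\xi_{i0}))$ with $\psi$ Lipschitz (Assumption \ref{ass_inj}), the coefficient is bounded by $L\|\varphi(\xi_{i0})-\widehat h(\widehat k_i)\|$, whose mean square is $O_p(\delta)$ by Lemma \ref{theo1}. Splitting $v_{ij}(\widehat\alpha^j(\widehat k_i,\theta_0))=v_{ij}+[v_{ij}(\widehat\alpha^j(\widehat k_i,\theta_0))-v_{ij}]$, the bracket is $O(\|\widehat\alpha^j(\widehat k_i,\theta_0)-\alpha_{i0}^j\|)$ and its contribution is $O_p(\delta)$ by Cauchy--Schwarz against \eqref{Rate_alpha_hat_theta}, leaving the common form $\frac1{Np}\sum_{ij}c_{ij}v_{ij}$.

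For the second display I would split $\widehat\alpha^j(\widehat k_i,\theta_0)-\alpha_{i0}^j$ into a group-common part $\widehat\alpha^j(\widehat k_i,\theta_0)-\overline\alpha^j(\theta_0,\psi(\widehat h(\widehat k_i)))$ and a within-group part $\overline\alpha^j(\theta_0,\psi(\widehat h(\widehat k_i)))-\alpha_{i0}^j$. Against the group-common part, summing $d_{ij}$ over each group produces $(\sum_{i\in k}d_{ij})$ times a group error, which is $O_p(\delta)$ using $\mathbb E_{\xi_{i0},\lambda_0}(d_{ij})=0$ and \eqref{Rate_alpha_hat_theta}; the within-group part is Lipschitz-small in $\|\varphi(\xi_{i0})-\widehat h(\widehat k_i)\|$ and again leaves the common form $\frac1{Np}\sum_{ij}c_{ij}d_{ij}$. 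Both displays are thus reduced to $\frac1{Np}\sum_{ij}c_{ij}u_{ij}$ with $\frac1{Np}\sum_{ij}\|c_{ij}\|^2=O_p(\delta)$ and $u_{ij}\in\{v_{ij},d_{ij}\}$ conditionally mean-zero given $(\xi_{i0},\lambda_0)$.

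The final and main obstacle is bounding this common form. In the time-invariant case $p=1$ and $u_i$ is a $T$-average, so $\frac1N\sum_i\|u_i\|^2=O_p(1/T)\le O_p(\delta)$ and Cauchy--Schwarz gives $O_p(\delta)$ directly. In the time-varying case $u_{it}$ is $O_p(1)$ per cell and the naive bound is only $O_p(\sqrt\delta)$, so cancellation across $i$ must be exploited; the difficulty is that $\widehat k_i$ and the centroids $\widehat h(k)$ are functions of $\{h_i\}$ and therefore correlated with the scores, so $u_{it}$ is not conditionally mean-zero given the grouping. I would resolve this by writing $u_{it}=\mathbb E_{h_i,\xi_{i0},\lambda_0}(u_{it})+\widetilde u_{it}$, where $\widetilde u_{it}$ is conditionally mean-zero given $(h_i,\xi_{i0},\lambda_0)$ and independent across $i$ given $\lambda_0$: a conditional-variance computation bounds its contribution by $O_p(\sqrt{\delta/(NT)})\le O_p(\delta)$. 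The conditional-mean part is expanded to second order in $\varepsilon_i=h_i-\varphi(\xi_{i0})$ using the twice-differentiability and bounded-moment conditions of Assumption \ref{ass_regu_TV}(\ref{ass_regu_vb}) and controlled through $\frac1N\sum_i\|\varepsilon_i\|^2=O_p(1/T)$ and the mean-square bound on $c_{ij}$. Making this decoupling uniform over the data-driven partition, and checking that the value of these conditional expectations at $h_i=\varphi(\xi_{i0})$ is itself negligible, is where Assumption \ref{ass_regu_TV}(\ref{ass_regu_vb}) and the $K/N$ contribution to $\delta$ are essential.
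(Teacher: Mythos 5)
Your proposal is correct, and it rests on exactly the same three pillars as the paper's proof: (i) within-group annihilation via the group first-order condition, (ii) Lipschitz continuity of the type-dependent coefficients combined with the kmeans rate of Lemmas \ref{theo1} and \ref{lemma_GL}, and (iii) in the time-varying case, the decomposition of scores into a conditional mean given $(h_i,\xi_{i0},\lambda_0)$ (controlled by a second-order expansion in $\varepsilon_i$ together with the identity $\mathbb{E}_{\xi_{i0},\lambda_0}[\rho_j(h_i,\cdot)]=0$) plus a conditionally independent remainder, which is where Assumption \ref{ass_regu_TV}(\ref{ass_regu_vb}) and the $K/N$ term enter. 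The bookkeeping, however, is genuinely different. The paper's pivot is an exact second-order expansion of the group first-order condition, yielding $\widehat{\alpha}^j(\widehat{k}_i)=\widetilde{\alpha}_j(\widehat{k}_i)+\widetilde{v}_j(\widehat{k}_i)+\widetilde{w}_j(\widehat{k}_i)$ in terms of Hessian-weighted group averages, after which each display splits into three pieces bounded by annihilating against weighted group averages of the coefficients ($\widetilde{z}_j$, $z_j^*$, $\widetilde{\pi}_j$, $\alpha^*_j$) and exploiting the associated projection identities. You instead keep the score at the group estimate, $v_{ij}(\widehat{\alpha}^j(\widehat{k}_i,\theta_0))$, annihilate it against the group-representative coefficient $a_j(\psi(\widehat{h}(\widehat{k}_i)))$, and reduce both displays to a single common form $\frac{1}{Np}\sum_{ij}c_{ij}u_{ij}$ with mean-square-$O_p(\delta)$ coefficients against conditionally centered scores; your treatment of the second display (group-common versus within-group split, with the group-common part controlled by group means of $d_{ij}$) is the analogue of the paper's $\widetilde{\pi}_j$/$\alpha^*_j$ arguments. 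Your version is somewhat more streamlined since it avoids the weighted-average algebra, at the cost of having to carry the quadratic remainder $Q_{ij}$ explicitly; the paper's weighted averages make the exact orthogonality identities available for free. Two small points: the group-mean-square bound $\frac{1}{Np}\sum_{i,j}\|\overline{d}_j(\widehat{k}_i)\|^2=O_p(\delta)$ that your second display needs is not implied by $\mathbb{E}_{\xi_{i0},\lambda_0}(d_{ij})=0$ and \eqref{Rate_alpha_hat_theta} alone in the time-varying case — it requires precisely the decoupling machinery of your last paragraph (this is the paper's \eqref{rate_vbar}-type argument applied to $\tau_{ij}$, and Assumption \ref{ass_regu_TV}(\ref{ass_regu_vb}) is stated exactly for this object, since $d_{it}'=\frac{\partial}{\partial\theta'}\big|_{\theta_0}\frac{\partial\ell_{it}(\overline{\alpha}^t(\theta,\xi_{i0}),\theta)}{\partial\alpha}$); and your claimed rate $O_p(\sqrt{\delta/(NT)})$ for the conditionally independent part should be $O_p(\sqrt{\delta p/(NT)})=O_p(\sqrt{\delta/N})$ when $p=T$, which is still $\leq O_p(\delta)$ because $\delta\geq K/N\geq 1/N$, so the conclusion is unaffected.
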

\noindent Now, expanding $v_{ij}^{\theta}(\widehat{\alpha}_j(\widehat{k}_i,\theta_0),\theta_0)$ around $\overline{\alpha}^{j}(\theta_0,\xi_{i0}){=}\alpha_{i0}^j$, and using the identity $\frac{\partial \overline{\alpha}^{j}(\theta_0,\xi_{i0})}{\partial \theta'}{=}\left[\mathbb{E}_{\xi_{i0},\lambda_0}\left(-v_{ij}^{\alpha}\right)\right]^{-1}\mathbb{E}_{\xi_{i0},\lambda_0}\left(v_{ij}^{\theta}\right)'$, we obtain:
\begin{align*}
&\frac{1}{Np}\sum_{i=1}^N\sum_{j=1}^p\frac{\partial \ell_{ij}(\widehat{\alpha}^j(\widehat{k}_i,\theta_0),\theta_0) }{\partial \theta}-\frac{1}{Np}\sum_{i=1}^N\sum_{j=1}^p\frac{\partial}{\partial\theta}\bigg|_{\theta_0}\, \ell_{ij}\left(\overline{\alpha}^{j}(\theta,\xi_{i0}),\theta\right)\\
& {=}\frac{1}{Np}\sum_{i=1}^N\sum_{j=1}^p\left\{v_{ij}^{\theta}\left(\widehat{\alpha}^j(\widehat{k}_i,\theta_0){-}\alpha_{i0}^j\right){+}\mathbb{E}_{\xi_{i0},\lambda_0}\left(v_{ij}^{\theta}\right)\left[\mathbb{E}_{\xi_{i0},\lambda_0}\left(v_{ij}^{\alpha}\right)\right]^{-1}v_{ij}\right\}+O_p(\delta),
\end{align*}and summing the two parts in Lemma \ref{lem_sup2} shows that the last expression is $O_p(\delta)$. It follows that (\ref{maineq1}) is satisfied.

\underline{To show (\ref{maineq2})}, we show the next technical lemma:
\begin{lemma}\label{lem_sup3}
	Under the conditions of either Theorem \ref{theo2} or Theorem \ref{theo2_TV} we have:
	\begin{equation}\frac{1}{Np}\sum_{i=1}^N\sum_{j=1}^p\left\|\frac{\partial \widehat{\alpha}^j(\widehat{k}_i,\theta_0)}{\partial \theta'}-\frac{\partial \overline{\alpha}^j(\theta_0,\xi_{i0})}{\partial \theta'}\right\|^2=o_p\left(1\right).\label{eq_der_alpha}\end{equation}
\end{lemma}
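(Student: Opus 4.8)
The plan is to represent both derivatives through the first-order conditions defining the argmax quantities and then to bound the mean-square gap between the resulting matrix ratios. Differentiating the group FOC $\sum_{i':\widehat{k}_{i'}=k}v_{i'j}(\widehat{\alpha}^j(k,\theta),\theta)=0$ in $\theta$ and applying the implicit function theorem gives, with $N_k=\#\{i:\widehat{k}_i=k\}$,
\[
\frac{\partial\widehat{\alpha}^j(k,\theta_0)}{\partial\theta'}=-\Big[\tfrac1{N_k}\sum_{i':\widehat{k}_{i'}=k} v_{i'j}^{\alpha}\Big]^{-1}\Big[\tfrac1{N_k}\sum_{i':\widehat{k}_{i'}=k}(v_{i'j}^{\theta})'\Big],
\]
with all second derivatives evaluated at the common group value $(\widehat{\alpha}^j(k,\theta_0),\theta_0)$; the identity already used above for $\overline{\alpha}$ gives $\partial_{\theta'}\overline{\alpha}^j(\theta_0,\xi_{i0})=-[\mathbb{E}_{\xi_{i0},\lambda_0}(-v_{ij}^{\alpha})]^{-1}\mathbb{E}_{\xi_{i0},\lambda_0}(v_{ij}^{\theta})'$, evaluated at $(\alpha_{i0}^j,\theta_0)=(\overline{\alpha}^j(\theta_0,\xi_{i0}),\theta_0)$. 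So the object to control is the mean-square difference between two ratios of Hessian blocks.

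Next I would use that $(A,B)\mapsto A^{-1}B$ is Lipschitz on the set where $A$ has eigenvalues bounded away from zero and both arguments are bounded in spectral norm. The concavity / bounded-eigenvalue conditions make $\mathbb{E}_{\xi_{i0},\lambda_0}(-v_{ij}^{\alpha})$ uniformly positive definite and give the group average $\tfrac1{N_k}\sum_{i':\widehat{k}_{i'}=k}(-v_{i'j}^{\alpha})$ the same uniform lower bound, while the bounded-derivative conditions keep all blocks bounded. It therefore suffices to prove
\[
\frac1{Np}\sum_{i=1}^N\sum_{j=1}^p\Big\|\tfrac1{N_{\widehat{k}_i}}\sum_{i':\widehat{k}_{i'}=\widehat{k}_i} v_{i'j}^{\alpha}(\widehat{\alpha}^j(\widehat{k}_i,\theta_0),\theta_0)-\mathbb{E}_{\xi_{i0},\lambda_0}(v_{ij}^{\alpha})\Big\|^2=o_p(1),
\]
together with the analogous statement for $v^{\theta}$.

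For each such block I would split the deviation by the triangle inequality into: (I) the effect of evaluating at the common estimated value $\widehat{\alpha}^j(\widehat{k}_i,\theta_0)$ rather than at each member's truth $\alpha_{i'0}^j$; (II) the within-group sampling deviation $\tfrac1{N_k}\sum_{i':\widehat{k}_{i'}=k}[v_{i'j}^{\alpha}(\alpha_{i'0}^j,\theta_0)-\mathbb{E}_{\xi_{i'0},\lambda_0}(v_{i'j}^{\alpha})]$; and (III) the within-group type heterogeneity $\tfrac1{N_k}\sum_{i':\widehat{k}_{i'}=k}[\mathbb{E}_{\xi_{i'0},\lambda_0}(v_{i'j}^{\alpha})-\mathbb{E}_{\xi_{i0},\lambda_0}(v_{ij}^{\alpha})]$. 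Piece (I) is at most $\tfrac{C}{N_k}\sum_{i':\widehat{k}_{i'}=k}\|\widehat{\alpha}^j(\widehat{k}_i,\theta_0)-\alpha_{i'0}^j\|$ by the bounded third derivatives (Lipschitz-in-$\alpha$ of $v^{\alpha}$), so after squaring, summing over $i$, and applying Cauchy--Schwarz it telescopes into $\tfrac1{Np}\sum_{i',j}\|\widehat{\alpha}^j(\widehat{k}_{i'},\theta_0)-\alpha_{i'0}^j\|^2=O_p(\delta)=o_p(1)$ by Lemma \ref{lem_sup1}. Piece (III) is at most $\tfrac{C}{N_k}\sum_{i':\widehat{k}_{i'}=k}\|\xi_{i'0}-\xi_{i0}\|$, since $\xi\mapsto\mathbb{E}_{\xi_{i0}=\xi,\lambda_0}(v_{ij}^{\alpha})$ has an $O(1)$ derivative; writing $\xi=\psi(\varphi(\xi))$ with $\psi$ Lipschitz bounds this by the within-group dispersion of $\varphi(\xi_{i'0})$ about $\widehat{h}(\widehat{k}_i)$, which is $o_p(1)$ in mean square by Lemma \ref{theo1}.

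The main obstacle is piece (II): the partition $\{\widehat{k}_i\}$ is data-dependent, so one cannot treat each group as a fixed index set and apply a law of large numbers directly. In the time-invariant case this is cheap, since $v_{ij}^{\alpha}$ is itself a time-average and the uniform control $\max_i\sup_{\alpha,\theta}\|v_i^{\alpha}-\mathbb{E}_{\xi_{i0}}(v_i^{\alpha})\|=o_p(1)$ makes every summand in (II) uniformly $o_p(1)$. In the time-varying case $v_{ij}^{\alpha}$ is a single-period object, and I would instead condition on $h_i$ (which determines the partition) and invoke the $O(1)$ conditional-variance bounds together with $K/N\to0$ (so that typical group sizes diverge), using a maximal inequality over the $K$ groups to drive the conditional variance of each group average to zero, hence (II)$=o_p(1)$ in mean square. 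Collecting the three pieces gives the Hessian-block convergence, and the Lipschitz-ratio step then yields (\ref{eq_der_alpha}).
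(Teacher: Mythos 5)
Your overall architecture is the same as the paper's: differentiate the group first-order condition to represent $\frac{\partial \widehat{\alpha}^j(k,\theta_0)}{\partial \theta'}$ as a ratio of group-averaged Hessian blocks, then split the gap to $\frac{\partial \overline{\alpha}^j(\theta_0,\xi_{i0})}{\partial \theta'}$ into an evaluation-point effect, a sampling effect, and a within-group heterogeneity effect. Your pieces (I) and (III) are handled essentially as in the paper (piece (I) via Lemma \ref{lem_sup1} with the bounded-Hessian/eigenvalue controls (\ref{eq_bounded_hessian}) and (\ref{eq_mineig}); piece (III) via the bounded $\xi$-derivatives in part (\ref{ass_regu_ivb}) of Assumption \ref{ass_regu_TV}, Lipschitz continuity of $\psi$, and the kmeans rate from Lemmas \ref{theo1} and \ref{lemma_GL}, exactly as in (\ref{z_tilde_dist})), and your time-invariant treatment of piece (II) via the uniform-in-$i$ deviation bound in part (\ref{ass_regu_iia}) of Assumption \ref{ass_regu} is also fine. (There is a sign slip in your formula for $\partial_{\theta'}\overline{\alpha}^j$ --- it should be $[\mathbb{E}_{\xi_{i0},\lambda_0}(-v_{ij}^{\alpha})]^{-1}\mathbb{E}_{\xi_{i0},\lambda_0}(v_{ij}^{\theta})'$ without the leading minus --- but that is cosmetic.)

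The genuine gap is piece (II) in the time-varying case, and it stems from your decision to control the blocks $v_{ij}^{\alpha}$ and $v_{ij}^{\theta}$ \emph{separately}. First, after conditioning on the $h_i$'s to fix the partition, your summands $v_{i'j}^{\alpha}-\mathbb{E}_{\xi_{i'0},\lambda_0}(v_{i'j}^{\alpha})$ are \emph{not} mean zero: their conditional mean is $\rho(h_{i'},\xi_{i'0},\lambda_0)=\mathbb{E}_{h_{i'},\xi_{i'0},\lambda_0}(v_{i'j}^{\alpha})-\mathbb{E}_{\xi_{i'0},\lambda_0}(v_{i'j}^{\alpha})$, so your variance argument only controls fluctuations around this conditional bias and leaves the bias itself unaddressed. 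In the paper's analogous argument (the proof of (\ref{rate_vbar})), this bias term is killed by a second-order Taylor expansion of the conditional mean in $h$ around $\varphi(\xi_{i0})$, combined with $\mathbb{E}_{\xi_{i0},\lambda_0}[\rho]=0$ and $\frac{1}{N}\sum_i\|h_i-\varphi(\xi_{i0})\|^2=O_p(1/T)$ --- and that step needs assumed smoothness in $h$ of the conditional mean. Second, that smoothness (and the $O(1)$ conditional-variance bound you invoke) is assumed in part (\ref{ass_regu_vb}) of Assumption \ref{ass_regu_TV} \emph{only} for the profiled score $\frac{\partial \ell_{it}(\overline{\alpha}^t(\theta,\xi),\theta)}{\partial\alpha}$ and for its total $\theta$-derivative, which is exactly the combination $\tau_{ij}'=(v_{ij}^{\theta})'-(-v_{ij}^{\alpha})\left[\mathbb{E}_{\xi_{i0},\lambda_0}(-v_{ij}^{\alpha})\right]^{-1}\mathbb{E}_{\xi_{i0},\lambda_0}(v_{ij}^{\theta})'$ used in the paper's proof; nothing is assumed about the $(h_i,\xi_{i0},\lambda_0)$-conditional mean or variance of $v_{ij}^{\theta}$ on its own (part (\ref{ass_regu_iib}) only gives bounded $(\xi,\lambda)$-conditional means and $O_p(1)$ average squared deviations). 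This is precisely why the paper never splits the ratio into blocks: it writes the sampling-error term as the inverse group-average Hessian times the group average of $\tau_{ij}$, so the (\ref{rate_vbar})-type machinery applies verbatim to an object the assumptions actually cover. To repair your proof you would either need to strengthen the assumptions (conditional-mean smoothness in $h$ and conditional-variance bounds for $v^{\alpha}$ and $v^{\theta}$ individually), or replace the block-by-block step with the single zero-conditional-mean combination $\tau_{ij}$.
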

\noindent Using (\ref{Rate_alpha_hat_theta}) and the identity $\frac{\partial \overline{\alpha}^{j}(\theta_0,\xi_{i0})}{\partial \theta'}{=}\left[\mathbb{E}_{\xi_{i0},\lambda_0}\left(-v_{ij}^{\alpha}\right)\right]^{-1}\mathbb{E}_{\xi_{i0},\lambda_0}\left(v_{ij}^{\theta}\right)'$, we thus have, under the conditions of either Theorem \ref{theo2} or \ref{theo2_TV}:
\begin{align*}
&	\frac{1}{Np}\sum_{i=1}^N\sum_{j=1}^p\frac{\partial^2}{\partial \theta\partial \theta'}\bigg|_{\theta_0}  \ell_{ij}\left(\widehat{\alpha}^j(\widehat{k}_i,\theta),\theta\right)-\frac{1}{Np}\sum_{i=1}^N\sum_{j=1}^p\frac{\partial^2}{\partial \theta\partial \theta'}\bigg|_{\theta_0}\ell_{ij}\left(\overline{\alpha}^{j}(\theta,\xi_{i0}),\theta\right)\\&=\frac{1}{Np}\sum_{i=1}^N \sum_{j=1}^pv_{ij}^{\theta}\left(\frac{\partial \widehat{\alpha}^j(\widehat{k}_i,\theta_0)}{\partial \theta'}-\frac{\partial \overline{\alpha}^{j}(\theta_0,\xi_{i0})}{\partial \theta'}\right)+o_p\left(1\right)=o_p(1),
\end{align*}
where we have used Lemma \ref{lem_sup3} in the last equality.

Finally, \underline{to show Theorems \ref{theo2} and \ref{theo2_TV}} we expand the GFE score as:
\begin{align*}
\frac{1}{Np}\sum_{i=1}^N\sum_{j=1}^p\frac{\partial \ell_{ij}(\widehat{\alpha}^j(\widehat{k}_i,\theta_0),\theta_0) }{\partial \theta}{+}\left(\frac{\partial }{\partial\theta'}\Big|_{\widetilde{\theta}} \frac{1}{Np}\sum_{i=1}^N\sum_{j=1}^p \frac{\partial \ell_{ij}(\widehat{\alpha}^j(\widehat{k}_i,\theta),{\theta})}{\partial \theta}\right)\left(\widehat{\theta}{-}\theta_0\right){=}0,
\end{align*}
where $\widetilde{\theta}$ lies between $\theta_0$ and $\widehat{\theta}$, and further expand $\frac{\partial }{\partial\theta'}\big|_{\widetilde{\theta}} \frac{1}{Np}\sum_{i=1}^N \sum_{j=1}^p\frac{\partial \ell_{ij}(\widehat{\alpha}^j(\widehat{k}_i,\theta),{\theta})}{\partial \theta}$ around $\theta_0$ using that $\widetilde{\theta}$ is consistent. Lastly, we use (\ref{maineq1}) and (\ref{maineq2}), and note that, if $\overline{\ell}_i(\theta)=\frac{1}{p}\sum_{j=1}^p\ell_{ij}\left(\overline{\alpha}^{j}\left(\theta,\xi_{i0}\right),\theta\right)$ denotes the individual target log-likelihood, then $s_i=\frac{\partial \overline{\ell}_i(\theta_0) }{\partial\theta}$ and $H={\limfunc{plim}}_{N,T\rightarrow \infty} \frac{1}{N}\sum_{i=1}^N\mathbb{E}_{\xi_{i0},\lambda_0}(-\frac{\partial^2 \overline{\ell}_i(\theta_0) }{\partial\theta\partial\theta'})$.

\paragraph{\underline{Proof of Corollary \ref{coro_groups}.}} By the triangle inequality:
$\frac{1}{N}\sum_{i=1}^N  \|\widehat{h}(\widehat{k}_i)-\varphi(\xi_{i0})\|^2\leq 2\widehat{Q}(K)+O_p(\frac{1}{T})=O_p(\frac{1}{T})$. The proof of Theorem \ref{theo2} is then unchanged, simply redefining $\delta{=}1/T$ (since heterogeneity is time-invariant here). This shows (\ref{2steptheta_groups}).

\paragraph{\underline{Proof of Corollary \ref{coro_bias}.}} To prove Corollary \ref{coro_bias}, we follow a likelihood approach (see Arellano and Hahn, 2007). Consider the difference between the GFE and FE profile log-likelihoods:
$\Delta L(\theta)=\frac{1}{N}\sum_{i=1}^N\ell_i(\widehat{\alpha}(\widehat{k}_i,\theta),\theta)-\frac{1}{N}\sum_{i=1}^N\ell_i(\widehat{\alpha}_i(\theta),\theta)$. 
\begin{assumption}{(regularity)} \label{ass_coro_bias} Let $\widehat{w}_i=-\frac{\partial^2\ell_i(\widehat{\alpha}_i(\theta_0),\theta_0)}{\partial\alpha\partial\alpha'}$, and $\widehat{g}_i=\frac{\partial^2\ell_i(\widehat{\alpha}_i(\theta_0),\theta_0)}{\partial\theta\partial\alpha'}\widehat{w}_i^{-1}$.
	\begin{enumerate}[itemsep=-3pt,label=(\roman*),ref=\roman*,topsep=0pt]
		\item $\ell_{it}(\alpha_i,\theta)$ is four times differentiable, and its fourth derivatives satisfy similar properties to the first three.\label{corotheo2i}
		\item $\gamma(h){=}\{\mathbb{E}_{h_i=h}\left(\widehat{w}_i\right)\}^{-1}\mathbb{E}_{h_i=h}\left(\widehat{w}_i\widehat{\alpha}_i(\theta_0)\right)$ and $\lambda(h){=}\mathbb{E}_{h_i=h}\left(\widehat{g}_i\widehat{w}_i\right)\{\mathbb{E}_{h_i=h}\left(\widehat{w}_i\right)\}^{-1}$ are Lipschitz-continuous in $h$; and $\limfunc{Var}_{h_i=h}\left(\widehat{w}_i(\widehat{\alpha}_i(\theta_0)-\gamma(h_i))\right)=O(\frac{1}{T})$ and $\limfunc{Var}_{h_i=h}\left((\widehat{g}_i-\lambda(h_i))\widehat{w}_i\right)=O(\frac{1}{T})$, uniformly in $h$.\label{corotheo2iii}
	\end{enumerate}

\end{assumption}

\begin{lemma}\label{lem_coro_bias}Let the conditions of Corollary \ref{coro_bias} hold, and let $\nu_i(\theta){=}\widehat{\alpha}_i(\theta){-}{\mathbb{E}}_{h_i}(\widehat{\alpha}_i(\theta))$. We have: 
	\begin{equation}
	\frac{\partial}{\partial\theta}\Big|_{\theta_0}\, \Delta L(\theta){=}-\frac{\partial}{\partial\theta}\Big|_{\theta_0}\, \frac{1}{2N}\sum_{i=1}^N\nu_i(\theta)'\mathbb{E}_{\xi_{i0}}\left[-v_i^\alpha \left(\overline{\alpha}(\theta,\xi_{i0}),\theta\right)\right] \nu_i(\theta)+o_p\left(\frac{1}{T}\right).\label{eq_Delta_L}
	\end{equation}
\end{lemma}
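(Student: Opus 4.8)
The plan is to follow the profile-likelihood approach of Arellano and Hahn (2007): expand $\Delta L(\theta)$ to second order around the fixed-effects estimates $\widehat{\alpha}_i(\theta)$, and then replace the finite-$K$ group estimates by population conditional means. Throughout I would reuse the rate bounds already established in the proof of Theorem \ref{theo2} (Lemmas \ref{lem_sup1} and \ref{lem_sup3}), which under the conditions of Corollary \ref{coro_bias} give $\frac{1}{N}\sum_i\|\widehat{\alpha}(\widehat{k}_i,\theta)-\overline{\alpha}(\theta,\xi_{i0})\|^2=O_p(1/T)$ together with the analogous control on $\theta$-derivatives; combined with the standard fixed-effects rate $\widehat{\alpha}_i(\theta)-\overline{\alpha}(\theta,\xi_{i0})=O_p(T^{-1/2})$, this yields $\widehat{\alpha}(\widehat{k}_i,\theta)-\widehat{\alpha}_i(\theta)=O_p(T^{-1/2})$ in mean square.

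First I would Taylor-expand each $\ell_i(\widehat{\alpha}(\widehat{k}_i,\theta),\theta)$ about $\widehat{\alpha}_i(\theta)$. The first-order term vanishes because $\partial\ell_i(\widehat{\alpha}_i(\theta),\theta)/\partial\alpha=0$ by definition of the FE estimate, so
\[
\Delta L(\theta)=-\frac{1}{2N}\sum_{i=1}^N\big(\widehat{\alpha}(\widehat{k}_i,\theta)-\widehat{\alpha}_i(\theta)\big)'\widehat{w}_i(\theta)\big(\widehat{\alpha}(\widehat{k}_i,\theta)-\widehat{\alpha}_i(\theta)\big)+\frac{1}{N}\sum_{i=1}^N R_i(\theta),
\]
with $\widehat{w}_i(\theta)=-\partial^2\ell_i(\widehat{\alpha}_i(\theta),\theta)/\partial\alpha\partial\alpha'$ and $R_i$ the third-order remainder. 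Since the within-group deviation is $O_p(T^{-1/2})$, its cube gives $R_i=O_p(T^{-3/2})$, so $\frac{1}{N}\sum_iR_i=o_p(1/T)$; the four-times differentiability in part (\ref{corotheo2i}) of Assumption \ref{ass_coro_bias} is what keeps $\frac{\partial}{\partial\theta}\frac{1}{N}\sum_iR_i=o_p(1/T)$ as well.

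Next I would identify the group estimate through its first-order condition $\sum_{i':\widehat{k}_{i'}=k}\partial\ell_{i'}(\widehat{\alpha}(k,\theta),\theta)/\partial\alpha=0$; expanding each summand about $\widehat{\alpha}_{i'}(\theta)$ gives the precision-weighted within-group mean $\widehat{\alpha}(k,\theta)=[\sum_{i'\in k}\widehat{w}_{i'}]^{-1}\sum_{i'\in k}\widehat{w}_{i'}\widehat{\alpha}_{i'}(\theta)$ up to higher-order terms. Because the clusters concentrate around $\varphi(\xi_{i0})$ (Lemma \ref{theo1}, with $\gamma=o(1)$ ensuring negligible approximation error) and group sizes diverge ($K/N\to0$), this average converges to the precision-weighted conditional mean $\gamma_\theta(h_i)$, with $\gamma_{\theta_0}=\gamma$ as defined in Assumption \ref{ass_coro_bias}. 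The group-mean error $r_i=\widehat{\alpha}(\widehat{k}_i,\theta)-\gamma_\theta(h_i)$ then satisfies $\frac{1}{N}\sum_ir_i'\widehat{w}_ir_i=O_p(K/(NT))=o_p(1/T)$, using the Lipschitz continuity of $\gamma$ and $\lambda$ and the $O(1/T)$ conditional-variance bounds in part (\ref{corotheo2iii}), with cross terms of smaller order.

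It remains to replace $\widehat{w}_i(\theta)$ by $\Omega_i(\theta)=\mathbb{E}_{\xi_{i0}}[-v_i^\alpha(\overline{\alpha}(\theta,\xi_{i0}),\theta)]$ and to pass from the weighted centering $\nu_i^w=\widehat{\alpha}_i-\gamma_\theta(h_i)$ to the unweighted $\nu_i=\widehat{\alpha}_i-\mathbb{E}_{h_i}(\widehat{\alpha}_i)$ of the statement. The first substitution is harmless: $\widehat{w}_i-\Omega_i=O_p(T^{-1/2})$ enters a form cubic in mean-zero $O_p(T^{-1/2})$ factors, whose expectation is $O(T^{-2})$ with negligible fluctuation. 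The second is the delicate point, and I expect it to be the main obstacle: the gap $\gamma_\theta(h_i)-\mathbb{E}_{h_i}(\widehat{\alpha}_i)=\{\mathbb{E}_{h_i}\widehat{w}_i\}^{-1}\mathrm{Cov}_{h_i}(\widehat{w}_i,\widehat{\alpha}_i)$ is itself of order exactly $1/T$—the very order being computed—so it can be discarded only because it enters contracted against the residual $\nu_i$, whose conditional covariance with the weights is again $O(1/T)$, making the cross term $\frac{1}{N}\sum_i\nu_i'\widehat{w}_i(\gamma_\theta(h_i)-\mathbb{E}_{h_i}\widehat{\alpha}_i)$ of order $o_p(1/T)$. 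The real work is verifying that these cancellations survive differentiation in $\theta$, where the derivative of the conditional mean is governed by $\lambda(h)$; the smoothness and variance conditions in Assumption \ref{ass_coro_bias} are tailored precisely to this. Collecting the terms yields $\frac{\partial}{\partial\theta}|_{\theta_0}\Delta L(\theta)=-\frac{\partial}{\partial\theta}|_{\theta_0}\frac{1}{2N}\sum_i\nu_i(\theta)'\Omega_i(\theta)\nu_i(\theta)+o_p(1/T)$, which is the claim.
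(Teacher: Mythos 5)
Your overall strategy coincides with the paper's: both arguments compare the two profile likelihoods in the spirit of Arellano and Hahn (2007), identify the group estimate with the precision-weighted within-group mean $\widetilde{\alpha}(k)=\bigl(\sum_{i:\widehat{k}_i=k}\widehat{w}_i\bigr)^{-1}\sum_{i:\widehat{k}_i=k}\widehat{w}_i\widehat{\alpha}_i$ through the group first-order condition, let that mean concentrate on $\gamma(h_i)$ by combining the conditional-mean-zero structure (giving the $O_p(K/(NT))$ noise term) with Lipschitz continuity of $\gamma$ against the kmeans error (which is $o_p(1/T)$ because $\gamma=o(1)$ in the rule for $K$), and finally trade the weighted centering for the unweighted $\nu_i(\theta)$ using that the gap $\gamma(h_i)-\mathbb{E}_{h_i}(\widehat{\alpha}_i)$ is $O(1/T)$ and enters against conditionally mean-zero residuals. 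The one real organizational difference is the order of operations: you expand $\Delta L(\theta)$ as a process in $\theta$, where the linear term vanishes by the FE first-order condition, and differentiate only at the end; the paper differentiates first --- the envelope theorem applied at the group level gives $\frac{\partial}{\partial\theta}\big|_{\theta_0}\Delta L(\theta)=\frac{1}{N}\sum_{i}\frac{\partial \ell_i(\widehat{\alpha}(\widehat{k}_i))}{\partial \theta}-\frac{1}{N}\sum_{i}\frac{\partial \ell_i(\widehat{\alpha}_{i})}{\partial \theta}$ --- and then expands this score difference at $\theta_0$, in which the group-FOC remainder survives \emph{linearly} (the paper's $A_3$ term) and is recombined with $A_2$ into the derivative of $\Omega_i(\theta)$.

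The genuine gap is exactly the step you label ``the real work'' and then assert rather than prove: in your ordering, every $o_p(1/T)$ remainder in the pointwise expansion of $\Delta L(\theta)$ must also have a $\theta$-derivative at $\theta_0$ that is $o_p(1/T)$, and this does not follow from the pointwise bounds by any generic argument, since an $o_p(1/T)$ process can have an $O_p(1)$ derivative. Concretely, differentiating your group-mean-error term produces $\frac{2}{N}\sum_{i}\bigl(\frac{\partial r_i}{\partial\theta}\bigr)'\widehat{w}_i r_i$ plus a term with $\frac{\partial \widehat{w}_i}{\partial\theta}$; Cauchy--Schwarz with only $\frac{1}{N}\sum_{i}\|\frac{\partial r_i}{\partial\theta}\|^2=o_p(1)$ (which is what Lemma \ref{lem_sup3} delivers) yields merely $o_p(T^{-1/2})$, which is not enough. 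To close this you must redo the entire concentration argument at the level of derivatives: $\frac{\partial \widehat{\alpha}(k,\theta)}{\partial\theta'}$ is itself a precision-weighted group mean of $\widehat{g}_i$-type objects (cf. (\ref{eq_dalpha})), and you need $\frac{1}{N}\sum_{i}\|\frac{\partial r_i}{\partial\theta}\|^2=o_p(1/T)$, which is where the Lipschitz continuity of $\lambda(h)$ and the bound $\limfunc{Var}_{h_i=h}\bigl((\widehat{g}_i-\lambda(h_i))\widehat{w}_i\bigr)=O(\frac{1}{T})$ in Assumption \ref{ass_coro_bias}(\ref{corotheo2iii}) are actually consumed; in the paper this work appears explicitly as the $A_1$ term and the bounds $\frac{1}{N}\sum_{i}\|\widetilde{\tau}(\widehat{k}_i)\|^2=o_p(1/T)$. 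The same issue recurs for your Hessian replacement $\widehat{w}_i\to\Omega_i$ and for the centering switch once they sit inside a $\theta$-derivative. So your plan is workable and uses the right ingredients, but as written it establishes only the pointwise-in-$\theta$ expansion of $\Delta L$; either carry out these derivative-level concentration arguments explicitly, or adopt the paper's device of invoking the envelope theorem first, which reduces the whole lemma to objects evaluated at $\theta_0$ and makes the process-level derivative problem disappear.
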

\noindent Corollary \ref{coro_bias} follows, since the bias of the FE score is:
$\frac{\partial}{\partial\theta}\big|_{\theta_0}\,\big[\frac{1}{N}\sum_{i=1}^N\ell_i(\widehat{\alpha}_i(\theta),\theta)-\frac{1}{N}\sum_{i=1}^N\ell_i(\overline{\alpha}(\theta,\xi_{i0}),\theta)\big] = \frac{\partial}{\partial\theta}\big|_{\theta_0}\frac{1}{2N}\sum_{i=1}^N\widehat{\nu}_{i}(\theta)'\mathbb{E}_{\xi_{i0}}[-v_i^\alpha (\overline{\alpha}(\theta,\xi_{i0}),\theta)]\widehat{\nu}_{i}(\theta)+o_p(\frac{1}{T})$, where $\widehat{\nu}_{i}(\theta)=\widehat{\alpha}_i(\theta)-{\mathbb{E}}_{\xi_{i0}}\left(\widehat{\alpha}_i(\theta)\right)$; see, e.g., Arellano and Hahn (2007).

\clearpage

\setcounter{section}{0}\renewcommand{\thesection}{S\arabic{section}}

\setcounter{figure}{0}\renewcommand{\thefigure}{S\arabic{figure}}

\setcounter{table}{0}\renewcommand{\thetable}{S\arabic{table}}

\setcounter{footnote}{0}\renewcommand{\thefootnote}{\arabic{footnote}}

\setcounter{assumption}{0}\renewcommand{\theassumption}{S\arabic{assumption}}

\setcounter{equation}{0}\renewcommand{\theequation}{S\arabic{equation}}

\setcounter{lemma}{0}\renewcommand{\thelemma}{S\arabic{lemma}}

\setcounter{proposition}{0}\renewcommand{\theproposition}{S\arabic{proposition}}

\setcounter{corollary}{0}\renewcommand{\thecorollary}{S\arabic{corollary}}

\setcounter{theorem}{0}\renewcommand{\thetheorem}{S\arabic{theorem}}

\begin{center}
	{\LARGE SUPPLEMENTAL MATERIAL \\``Discretizing Unobserved Heterogeneity''}
\end{center}

\section{Proofs of technical lemmas}

\paragraph{\underline{Lemma \ref{lem_sup1}.}}

From Assumption \ref{ass_regu} $(\ref{ass_regu_iii})$-$(\ref{ass_regu_iv})$ or \ref{ass_regu_TV} $(\ref{ass_regu_iiib})$-$(\ref{ass_regu_ivb})$, both $\frac{\partial \overline{\alpha}^j\left(\theta,{\xi}\right)}{\partial \theta'}$ and $\frac{\partial \overline{\alpha}^j\left(\theta,{\xi}\right)}{\partial {\xi}^{\prime}}$ are uniformly bounded (in probability in the time-varying case). Let $a^j(k,\theta)=\overline{\alpha}^j(\theta,\psi(\widehat{h}(k)))$. We thus have, using Lemmas \ref{theo1} and \ref{lemma_GL}:
\begin{align}
&{\sup_{\theta\in\Theta}} \frac{1}{Np}\sum_{i,j}\left\|a^j(\widehat{k}_i,\theta){-}\overline{\alpha}^{j}(\theta,\xi_{i0})\right\|^2 {=}{\sup_{\theta\in\Theta}}\frac{1}{Np}\sum_{i,j}\left\|\overline{\alpha}^j(\theta,\psi(\widehat{h}(\widehat{k}_i))){-}\overline{\alpha}^j(\theta,\psi(\varphi(\xi_{i0})))\right\|^2\notag\\
&{=}O_p\left(\frac{1}{N}\sum_{i}\|\widehat{h}(
\widehat{k}_i)-\varphi(\xi_{i0})\|^2\right){=}O_p(\delta). \label{Rate_a1}
\end{align}

Let $\theta\in\Theta$. Expanding: $\sum_{i,j}\ell_{ij}(a^j(\widehat{k}_i,\theta),\theta)\leq \sum_{i,j}\ell_{ij}(\widehat{\alpha}^j(\widehat{k}_i,\theta),\theta)$ to second order around $\overline{\alpha}^{j}(\theta,\xi_{i0})$, and using:
\begin{align}{\max}_{i,j}\,{\sup}_{(\alpha,\theta)}\,\| v_{ij}^{\alpha}(\alpha,\theta)\|=O_p(1),\label{eq_bounded_hessian}\end{align}
we have, for some $a_{ij}(\theta)$ between $\widehat{\alpha}^j(\widehat{k}_i,\theta)$ and $\overline{\alpha}^{j}(\theta,\xi_{i0})$:
\begin{align} &\frac{1}{2Np}\sum_{i,j}\left(\widehat{\alpha}^j(\widehat{k}_i,\theta)-\overline{\alpha}^{j}(\theta,\xi_{i0})\right)'[-v_{ij}^{\alpha}(a_{ij}(\theta),\theta)]\left(\widehat{\alpha}^j(\widehat{k}_i,\theta)-\overline{\alpha}^{j}(\theta,\xi_{i0})\right)\notag\\
&\leq \frac{1}{Np}\sum_{i,j}v_{ij}(\overline{\alpha}^{j}(\theta,\xi_{i0}),\theta)'\left(\widehat{\alpha}^j(\widehat{k}_i,\theta)-a^j(\widehat{k}_i,\theta)\right)+O_p(\delta)\notag\\
&=\frac{1}{Np}\sum_{i,j}\overline{v}_j(\widehat{k}_i,\theta)'\left(\widehat{\alpha}^j(\widehat{k}_i,\theta)-a^j(\widehat{k}_i,\theta)\right)+O_p(\delta),\label{eq_score_exp}
\end{align}
where $\overline{v}_j(k,\theta)$ denotes the mean over $i$ of $v_{ij}(\overline{\alpha}^{j}(\theta,\xi_{i0}),\theta)$ in group $\widehat{k}_i=k$, and the $O_p(\delta)$ terms are uniform in $\theta$ by (\ref{Rate_a1}).

Now, by Assumption \ref{ass_regu} (\ref{ass_regu_iia}) or \ref{ass_regu_TV} (\ref{ass_regu_iib}) there exists a constant $\underline{c}>0$ such that:
\begin{equation}
{\min}_{i,j}\,{\inf}_{(\alpha,\theta)}\, \limfunc{mineig}\left[-v_{ij}^{\alpha}(\alpha,\theta)\right]\geq \underline{c}+o_p(1),\label{eq_mineig}
\end{equation} 
where $\limfunc{mineig}(M)$ is the minimum eigenvalue of $M$. Let $A=\frac{1}{Np}\sum_{i,j}\|\widehat{\alpha}^j(\widehat{k}_i,\theta)-\overline{\alpha}^{j}(\theta,\xi_{i0})\|^2$. By (\ref{eq_score_exp}) and the Cauchy Schwarz inequality, we have:\\
$$A\leq O_p\left[\left(\frac{1}{Np}\sum_{i,j}\left\| \overline{v}_j(\widehat{k}_i,\theta)\right\|^2\right)^{\frac{1}{2}}\left(\frac{1}{Np}\sum_{i,j}\left\|\widehat{\alpha}^j(\widehat{k}_i,\theta)-a^j(\widehat{k}_i,\theta)\right\|^2\right)^{\frac{1}{2}}\right]+O_p(\delta).$$ By (\ref{Rate_a1}) and the triangle inequality: $(\frac{1}{Np}\sum_{i,j}\|\widehat{\alpha}^j(\widehat{k}_i,\theta)-a^j(\widehat{k}_i,\theta)\|^2)^{\frac{1}{2}}\leq {A}^{\frac{1}{2}}+O_p({\delta}^{\frac{1}{2}})$. Hence:
$A=O_p\left[ \left(\frac{1}{Np}\sum_{i,j}\|\overline{v}_j(\widehat{k}_i,\theta)\|^2\right)^{\frac{1}{2}}\left(A^{\frac{1}{2}}+O_p({\delta}^{\frac{1}{2}})\right)\right]+O_p(\delta)$,
which implies:
\begin{equation}A= O_p\left(\frac{1}{Np}\sum_{i,j}\|\overline{v}_j(\widehat{k}_i,\theta)\|^2\right)+O_p(\delta).\label{eq_inter}\end{equation}

We are now going to show that, for all $\theta\in\Theta$: 
\begin{align}\label{rate_vbar} &\frac{1}{Np}\sum_{i,j}\left\| \overline{v}_j(\widehat{k}_i,\theta)\right\|^2=O_p\left(\delta\right).
\end{align}
Using (\ref{eq_inter}) and (\ref{rate_vbar}) will then imply (\ref{Rate_alpha_hat_theta}). Under the conditions of Theorem \ref{theo2}, it is easy to see that (\ref{rate_vbar}) holds. We are now going to show (\ref{rate_vbar}) under the conditions of Theorem \ref{theo2_TV}. Let, for all $j,\theta,h,\xi,\lambda$: $\rho_{j}(h,\xi,\lambda,\theta)=\mathbb{E}_{h_i=h,\xi_{i0}=\xi,\lambda_0=\lambda}( v_{ij}(\overline{\alpha}^{j}(\theta,\xi),\theta))$, and, for all $i,j,\theta$: $\zeta_{ij}(\theta)=v_{ij}(\overline{\alpha}^{j}(\theta,\xi_{i0}),\theta)- \rho_{j}(h_i,\xi_{i0},\lambda_0,\theta)$.
By Assumption \ref{ass_regu_TV} (\ref{ass_regu_vb}), and letting $h_i=\varphi(\xi_{i0})+\varepsilon_i$, we can expand $\rho_{j}(h_i,\xi_{i0},\lambda_0,\theta)$ twice around $\varphi(\xi_{i0})$ as:
$\rho_{j}(\varphi(\xi_{i0}),\xi_{i0},\lambda_0,\theta)+\frac{\partial\rho_{j}(\varphi(\xi_{i0}),\xi_{i0},\lambda_0,\theta)}{\partial h'}{\varepsilon}_i+\frac{1}{2}{\varepsilon}_i'\frac{\partial^2\rho_{j}(a_{i\theta}^j,\xi_{i0},\lambda_0,\theta)}{\partial h\partial h'}{\varepsilon}_i$,
where $a_{i\theta}^j$ lies between $h_i$ and $\varphi(\xi_{i0})$. Hence, taking expectations, using that $\mathbb{E}_{\xi_{i0},\lambda_0}\left[\rho_{j}(h_i,\xi_{i0},\lambda_0,\theta)\right]=0$, and using Assumptions \ref{ass_inj} and \ref{ass_regu_TV} (\ref{ass_regu_vb}), we have: $$\frac{1}{Np}\sum_{i,j}\| \rho_{j}(\varphi(\xi_{i0}),\xi_{i0},\lambda_{0},\theta)\|^2{=}\frac{1}{Np}\sum_{i,j}\left\| \frac{\partial\rho_{j}(\varphi(\xi_{i0}),\xi_{i0},\lambda_0,\theta)}{\partial h'}\mathbb{E}_{\xi_{i0},\lambda_0}\left[{\varepsilon}_i\right]\right\|^2{+}o_p\left(\frac{1}{T}\right),$$
which is $O_p(\frac{1}{T})$. Hence: $\frac{1}{Np}\sum_{i,j}\| \rho_{j}(h_i,\xi_{i0},\lambda_{0},\theta)\|^2=O_p(\frac{1}{T})$. It thus follows from the triangle inequality that:
\begin{equation}\frac{1}{Np}\sum_{i,j}\| \overline{v}_j(\widehat{k}_i,\theta)\|^2\leq O_p\left(\frac{1}{T}\right)+\frac{2}{Np}\sum_{i,j} \|\overline{\zeta}_j(\widehat{k}_i,\theta)\|^2,\label{eq_sum_v}\end{equation}
where $\overline{\zeta}_j(k,\theta)$ denotes the mean of $\zeta_{ij}(\theta)$ in group $\widehat{k}_i=k$. Now, using that $\widehat{k}_1,...,\widehat{k}_N$ are functions of $h_1,...,h_N$, we have:
\begin{align*}
&\mathbb{E}\left[\frac{1}{Np}\sum_{i,j} \|\overline{\zeta}_j(\widehat{k}_i,\theta)\|^2\right]\\
& {=}\frac{1}{Np}\sum_{k,j} \mathbb{E}\left[\frac{\sum_{i}\sum_{i'}\boldsymbol{1}\{\widehat{k}_{i}{=}k\}\boldsymbol{1}\{\widehat{k}_{i'}{=}k\}\mathbb{E}_{h_1,...,h_N,\xi_{10},...,\xi_{N0},\lambda_0}\left(\zeta_{ij}(\theta)'\zeta_{i'j}(\theta)\right)}{\sum_{i}\boldsymbol{1}\{\widehat{k}_i=k\}}\right].
\end{align*}
Furthermore, since observations are independent across $i$ given $\lambda_0$:
\begin{align*}&\mathbb{E}_{h_1,...,h_N,\xi_{10},...,\xi_{N0},\lambda_0}\left(\zeta_{i_1,j}(\theta)'\zeta_{i_2,j}(\theta)\right)\\
&=\mathbb{E}_{h_{i_1},\xi_{i_1,0},\lambda_0}\left(\zeta_{i_1,j}(\theta)\right)'\mathbb{E}_{h_{i_2},\xi_{i_2,0},\lambda_0}\left(\zeta_{i_2,j}(\theta)\right)=0\quad \mbox{for all }i_1\neq i_2 \mbox{ and } j.\end{align*}
Hence:
\begin{align*}
&\mathbb{E}\left[\frac{1}{Np}\sum_{i,j} \|\overline{\zeta}_j(\widehat{k}_i,\theta)\|^2\right]=\frac{1}{Np}\sum_{k,j}\mathbb{E}\left[\frac{\sum_{i}\boldsymbol{1}\{\widehat{k}_{i}{=}k\}\mathbb{E}_{h_i,\xi_{i0},\lambda_0}\left(\zeta_{ij}(\theta)'\zeta_{ij}(\theta)\right)}{\sum_{i}\boldsymbol{1}\{\widehat{k}_i=k\}}\right].
\end{align*}
Finally, using that $\mathbb{E}_{h_i,\xi_{i0},\lambda_0}\left(\zeta_{ij}(\theta)\right)=0$, and using part (\ref{ass_regu_vb}) in Assumption \ref{ass_regu_TV}:
$$\mathbb{E}_{h_i=h,\xi_{i0}=\xi,\lambda_0=\lambda}\left(\zeta_{ij}(\theta)'\zeta_{ij}(\theta)\right)=\limfunc{Tr} \left[{\limfunc{Var}}_{h_i=h,\xi_{i0}=\xi,\lambda_0=\lambda}(v_{ij}(\overline{\alpha}^{j}(\theta,\xi_{i0}),\theta))\right]=O\left(1\right),$$
uniformly in $h,\xi,\lambda$.\footnote{Note that the dimension of $v_{ij}$ is fixed throughout, independent of the sample size.} This implies that $\mathbb{E}\left[\frac{1}{Np}\sum_{i,j} \|\overline{\zeta}_j(\widehat{k}_i,\theta)\|^2\right]=O\left(\frac{K}{N}\right)$, and shows (\ref{rate_vbar}) and (\ref{Rate_alpha_hat_theta}).

We are now going to show:
\begin{equation}\label{eq_sup_theta_v}
\sup_{\theta\in\Theta}\,\frac{1}{Np}\sum_{i,j} \| \overline{v}_j(\widehat{k}_i,\theta)\|^2=o_p\left(1\right).\end{equation}
Using a bounding argument similar to the one we used to show (\ref{Rate_alpha_hat_theta}), (\ref{eq_alpha_hat_sup}) will then follow. To see that (\ref{eq_sup_theta_v}) holds, let $Z(\theta)=\frac{1}{Np}\sum_{i,j} \|\overline{v}_j(\widehat{k}_i,\theta)\|^2$. By (\ref{rate_vbar}), $Z(\theta)=O_p(\delta)$ for all $\theta\in\Theta$. Moreover: $\frac{\partial Z(\theta)}{\partial\theta}=\frac{2}{Np}\sum_{i,j} \overline{v}_j^{\theta}(\widehat{k}_i,\theta)\overline{v}_j(\widehat{k}_i,\theta)=O_p\left(\sqrt{ \sup_{\theta\in\Theta}\, Z(\theta)}\right)$ 
uniformly in $\theta$, using the Cauchy Schwarz inequality with either Assumption \ref{ass_regu} (\ref{ass_regu_iia}) or \ref{ass_regu_TV} (\ref{ass_regu_iib}), where $\overline{v}_j^{\theta}(k,\widetilde{\theta})$ is the mean of $\frac{\partial}{\partial \theta}\big|_{\theta=\widetilde{\theta}}\,v_{ij}(\overline{\alpha}^{j}(\theta,\xi_{i0}),\theta)'$ in group $\widehat{k}_i=k$. Since $\Theta$ is compact, it follows that $\sup_{\theta\in\Theta}\, Z(\theta)=o_p(1)$.\footnote{Let $\upsilon>0,\epsilon>0$. There is $M>0$ such that $\Pr\left(\sup_{\theta\in\Theta}\,\left\|\frac{\partial Z(\theta)}{\partial\theta}\right\|>M\sqrt{ \sup_{\theta\in\Theta}\, Z(\theta)}\right)<\frac{\epsilon}{2}$. Take a finite cover of $\Theta=B_1\cup...\cup B_R$, where $B_r$ are balls with centers $\theta_r$ and diameters $\limfunc{diam} B_r\leq \frac{1}{2M}\sqrt{\upsilon}$. Since: $\sup_{\theta\in\Theta}\, Z(\theta)\leq \max_{r}Z(\theta_r)+\sup_\theta\,\left\|\frac{\partial Z(\theta)}{\partial\theta}\right\| \frac{1}{2M}\sqrt{\upsilon}$, and since: $a>\upsilon\Rightarrow a-\sqrt{a}\frac{1}{2}\sqrt{\upsilon}>\frac{\upsilon}{2}$, we have: $\Pr\left(\sup_{\theta\in\Theta}Z(\theta)>\upsilon\right)\leq \frac{\epsilon}{2}+\Pr\left(\max_{r}Z(\theta_r)>\frac{\upsilon}{2}\right)$, which, by (\ref{rate_vbar}), is smaller than $\epsilon$ for $N,T,K$ large enough.}

\paragraph{\underline{Lemma \ref{lem_sup2}.}}

Let us omit references to $\theta_0$ and $\alpha_{i0}^j$ throughout, and let:
	\begin{align*}
	&A=\frac{1}{Np}\sum_{i=1}^N\sum_{j=1}^p\mathbb{E}_{\xi_{i0},\lambda_0}\left(v_{ij}^{\theta}\right)\left[\mathbb{E}_{\xi_{i0},\lambda_0}\left(v_{ij}^{\alpha}\right)\right]^{-1}v_{ij}^{\alpha}\left(\widehat{\alpha}^j(\widehat{k}_i,\theta_0)-\alpha_{i0}^j+(v_{ij}^{\alpha})^{-1}v_{ij}\right),\\
	&B=\frac{1}{Np}\sum_{i=1}^N\sum_{j=1}^p\left(v_{ij}^{\theta}\left(v_{ij}^{\alpha}\right)^{-1}{-}\mathbb{E}_{\xi_{i0},\lambda_0}\left(v_{ij}^{\theta}\right)\left[\mathbb{E}_{\xi_{i0},\lambda_0}\left(v_{ij}^{\alpha}\right)\right]^{-1}\right)v_{ij}^{\alpha}\left(\widehat{\alpha}^j(\widehat{k}_i,\theta_0){-}\alpha_{i0}^j\right).
	\end{align*}

 \underline{We first bound $A$}. Expanding: $\sum_{i}\boldsymbol{1}\{\widehat{k}_i{=}k\}v_{ij}(\widehat{\alpha}^j(k)){=}0$ for all $k,j$, we have, for $a_{ij}$ between $\alpha_{i0}^j$ and $\widehat{\alpha}^j(\widehat{k}_i)$:
\begin{align*}&\sum_{i}\boldsymbol{1}\{\widehat{k}_i=k\}v_{ij}(\alpha_{i0}^j)+\sum_{i}\boldsymbol{1}\{\widehat{k}_i=k\}v_{ij}^{\alpha}(\alpha_{i0}^j)(\widehat{\alpha}^j(\widehat{k}_i)-\alpha_{i0}^j)\\
&+\frac{1}{2}\sum_{i}\boldsymbol{1}\{\widehat{k}_i=k\}v_{ij}^{\alpha\alpha}(a_{ij})\left(\widehat{\alpha}^j(\widehat{k}_i)-\alpha_{i0}^j\right)\otimes \left(\widehat{\alpha}^j(\widehat{k}_i)-\alpha_{i0}^j\right)=0.\end{align*}
It follows that $\widehat{\alpha}^j(\widehat{k}_i)=\widetilde{\alpha}_j(\widehat{k}_i)+\widetilde{v}_j(\widehat{k}_i)+\widetilde{w}_j(\widehat{k}_i)$, where:
\begin{align*}&\widetilde{\alpha}_j(k)=\left(\sum_{i}\boldsymbol{1}\{\widehat{k}_i=k\}(-v_{ij}^{\alpha})\right)^{-1}\left(\sum_{i}\boldsymbol{1}\{\widehat{k}_i=k\}(-v_{ij}^{\alpha})\alpha_{i0}^j\right),\notag\\
&\widetilde{v}_j(k)=\left(\sum_{i}\boldsymbol{1}\{\widehat{k}_i=k\}(-v_{ij}^{\alpha})\right)^{-1}\left(\sum_{i}\boldsymbol{1}\{\widehat{k}_i=k\}v_{ij}\right),\notag\\&\widetilde{w}_j(k)=\frac{1}{2}\left(\sum_{i}\boldsymbol{1}\{\widehat{k}_i=k\}(-v_{ij}^{\alpha})\right)^{-1}\left(\sum_{i}\boldsymbol{1}\{\widehat{k}_i=k\}v_{ij}^{\alpha\alpha}(a_{ij})\left(\widehat{\alpha}^j(\widehat{k}_i)-\alpha_{i0}^j\right)^{\otimes 2}\right),\end{align*}
where $a^{\otimes 2}=a\otimes a$. Hence, we have:
\begin{eqnarray*}
	A{=}\frac{1}{Np}\sum_{i,j}\mathbb{E}_{\xi_{i0},\lambda_0}\left(v_{ij}^{\theta}\right)\left[\mathbb{E}_{\xi_{i0},\lambda_0}\left(v_{ij}^{\alpha}\right)\right]^{-1}v_{ij}^{\alpha}\left(\widetilde{w}_j(\widehat{k}_i){+}\widetilde{\alpha}_j(\widehat{k}_i){-}\alpha_{i0}^j{+}\widetilde{v}_j(\widehat{k}_i){+}(v_{ij}^{\alpha})^{-1}v_{ij}\right).
\end{eqnarray*}

Note first that:
\begin{eqnarray*}
\frac{1}{Np}\sum_{i,j}\mathbb{E}_{\xi_{i0},\lambda_0}\left(v_{ij}^{\theta}\right)\left[\mathbb{E}_{\xi_{i0},\lambda_0}\left(v_{ij}^{\alpha}\right)\right]^{-1}v_{ij}^{\alpha}\widetilde{w}_j(\widehat{k}_i){=}O_p(\frac{1}{Np}\sum_{i,j}\|\widehat{\alpha}^j(\widehat{k}_i){-}\alpha_{i0}^j\|^2)=O_p(\delta),
\end{eqnarray*}
where we have used (\ref{eq_bounded_hessian}), (\ref{Rate_alpha_hat_theta}), and either Assumption \ref{ass_regu} (\ref{ass_regu_iia}) or Assumption \ref{ass_regu_TV} (\ref{ass_regu_iib}).

Next, let $z_{j}(\xi_{i0})'=\mathbb{E}_{\xi_{i0},\lambda_0}\left(v_{ij}^{\theta}\right)\left[\mathbb{E}_{\xi_{i0},\lambda_0}\left(v_{ij}^{\alpha}\right)\right]^{-1}$. We have:
\begin{eqnarray}
&&\frac{1}{Np}\sum_{i,j}\mathbb{E}_{\xi_{i0},\lambda_0}\left(v_{ij}^{\theta}\right)\left[\mathbb{E}_{\xi_{i0},\lambda_0}\left(v_{ij}^{\alpha}\right)\right]^{-1}v_{ij}^{\alpha}\left(\widetilde{\alpha}_j(\widehat{k}_i)-\alpha_{i0}^j\right)\notag\\&&=\frac{1}{Np}\sum_{i,j}\left(z_{j}(\xi_{i0})'-\widetilde{z}_j\left(\widehat{k}_i\right)'\right)v_{ij}^{\alpha}\left(\widetilde{\alpha}_j(\widehat{k}_i)-\alpha_{i0}^j\right),\label{eq_1star}
\end{eqnarray}
where, for all $k,j$:
\begin{equation}\widetilde{z}_j(k)=\left(\sum_{i}\boldsymbol{1}\{\widehat{k}_i=k\}(-v_{ij}^{\alpha})\right)^{-1}\left(\sum_{i}\boldsymbol{1}\{\widehat{k}_i=k\}(-v_{ij}^{\alpha})z_{j}(\xi_{i0})\right).\label{eq_ztilde}
\end{equation}
Now we have, using that: ${\alpha}^j\mapsto \sum_{i}\left({\alpha}^j(\widehat{k}_i)-\alpha_{i0}^j\right)'(-v_{ij}^{\alpha})\left({\alpha}^j(\widehat{k}_i)-\alpha_{i0}^j\right)$ is minimized at ${\alpha}^j=\widetilde{\alpha}_j$, and using (\ref{eq_bounded_hessian}) and (\ref{eq_mineig}):
\begin{align*}
&\frac{1}{Np}\sum_{i,j}\left\|\widetilde{\alpha}_j(\widehat{k}_i)-\alpha_{i0}^j\right\|^2{=}O_p\left(\frac{1}{Np}\sum_{i,j}\left(\widetilde{\alpha}_j(\widehat{k}_i)-\alpha_{i0}^j\right)'(-v_{ij}^{\alpha})\left(\widetilde{\alpha}_j(\widehat{k}_i)-\alpha_{i0}^j\right)\right)\\
&{=} O_p\left(\frac{1}{Np}{\sum_{i,j}}\left(\widehat{\alpha}^j(\widehat{k}_i){-}\alpha_{i0}^j\right)'(-v_{ij}^{\alpha})\left(\widehat{\alpha}^j(\widehat{k}_i){-}\alpha_{i0}^j\right)\right){=} O_p\left(\frac{1}{Np}{\sum_{i,j}}\left\|\widehat{\alpha}^j(\widehat{k}_i){-}\alpha_{i0}^j\right\|^2\right),
\end{align*}
where the last expression is $O_p(\delta)$ by (\ref{Rate_alpha_hat_theta}). Likewise, since by Assumption \ref{ass_regu} (\ref{ass_regu_iv}) or \ref{ass_regu_TV} (\ref{ass_regu_ivb}) $\frac{\partial \limfunc{vec}z_{j}(\xi)}{\partial \xi'}$ is bounded (in probability) uniformly in $j$ and $\xi$, we have:
\begin{align}
&\frac{1}{Np}\sum_{i,j}\left\|\widetilde{z}_j(\widehat{k}_i){-}z_{j}(\xi_{i0})\right\|^2{=}
O_p\left(\frac{1}{Np}\sum_{i,j}\left(\widetilde{z}_j(\widehat{k}_i){-}z_{j}(\xi_{i0})\right)'(-v_{ij}^{\alpha})\left(\widetilde{z}_j(\widehat{k}_i){-}z_{j}(\xi_{i0})\right)\right)\notag\\
&=O_p\left(\frac{1}{Np}\sum_{i,j}\left(z_j\left(\psi\left(\widehat{h}(\widehat{k}_i)\right)\right)-z_j(\xi_{i0})\right)'(-v_{ij}^{\alpha})\left(z_j\left(\psi\left(\widehat{h}(\widehat{k}_i)\right)\right)-z_j(\xi_{i0})\right)\right)\notag\\
&=O_p\left(\frac{1}{Np}\sum_{i,j}\left\|\widehat{h}(\widehat{k}_i)-\varphi(\xi_{i0})\right\|^2\right)=O_p(\delta),\label{z_tilde_dist}
\end{align}
where we have used (\ref{eq_bounded_hessian}), (\ref{eq_mineig}), Lemmas \ref{theo1} and \ref{lemma_GL}, and that $\psi$ is Lipschitz-continuous. Combining results, and using the Cauchy Schwarz inequality in (\ref{eq_1star}), we obtain:
\begin{eqnarray*}
	\frac{1}{Np}\sum_{i,j}\mathbb{E}_{\xi_{i0},\lambda_0}\left(v_{ij}^{\theta}\right)\left[\mathbb{E}_{\xi_{i0},\lambda_0}\left(v_{ij}^{\alpha}\right)\right]^{-1}v_{ij}^{\alpha}\left(\widetilde{\alpha}_j(\widehat{k}_i)-\alpha_{i0}^j\right)=O_p(\delta).
\end{eqnarray*}

The last term in $A$ is:
$$A_3= \frac{1}{Np}\sum_{i,j}\mathbb{E}_{\xi_{i0},\lambda_0}\left(v_{ij}^{\theta}\right)\left[\mathbb{E}_{\xi_{i0},\lambda_0}\left(v_{ij}^{\alpha}\right)\right]^{-1}(-v_{ij}^{\alpha})\left((-v_{ij}^{\alpha})^{-1}v_{ij}-\widetilde{v}_j(\widehat{k}_i)\right).$$
Since $\widetilde{v}_j(k)=(\sum_{i}\boldsymbol{1}\{\widehat{k}_i=k\}(-v_{ij}^{\alpha}))^{-1}(\sum_{i}\boldsymbol{1}\{\widehat{k}_i=k\}(-v_{ij}^{\alpha})(-v_{ij}^{\alpha})^{-1}v_{ij})$, we have:\begin{align}
A_3&{=}\frac{1}{Np}\sum_{i,j}\left(z_{j}(\xi_{i0})'{-}\widetilde{z}_j\left(\widehat{k}_i\right)'\right)(-v_{ij}^{\alpha})(-v_{ij}^{\alpha})^{-1}v_{ij}{=}\frac{1}{Np}\sum_{i,j}\left(z_{j}(\xi_{i0})'{-}\widetilde{z}_j\left(\widehat{k}_i\right)'\right)v_{ij}\notag\\
&{=}\frac{1}{Np}\sum_{i,j}\left(z_{j}(\xi_{i0})'-{z}_j^*\left(\widehat{k}_i\right)'\right)v_{ij}+\frac{1}{Np}\sum_{i,j}\left({z}_j^*\left(\widehat{k}_i\right)'-\widetilde{z}_j\left(\widehat{k}_i\right)'\right)v_{ij},\label{eq_for_A3}
\end{align}
where $\widetilde{z}_j\left(k\right)$ is given by (\ref{eq_ztilde}), and:
\begin{align}&z_j^*(k){=}\left(\sum_{i}\boldsymbol{1}\{\widehat{k}_i{=}k\}\mathbb{E}_{\xi_{i0},\lambda_0}\left(-v_{ij}^{\alpha}\right)\right)^{-1}\left(\sum_{i}\boldsymbol{1}\{\widehat{k}_i{=}k\}\mathbb{E}_{\xi_{i0},\lambda_0}\left(-v_{ij}^{\alpha}\right)z_{j}(\xi_{i0})\right).\label{eq_zbar}\end{align}

Under the conditions of Theorem \ref{theo2}, it is easy to see that $A_3=O_p(\delta)$. We are now going to show that $A_3=O_p(\delta)$ under the conditions of Theorem \ref{theo2_TV}. To see that the first term on the right-hand-side of (\ref{eq_for_A3}) is $O_p(\delta)$, we use an argument similar to the one we used to show (\ref{rate_vbar}). Let $\zeta_{ij}=v_{ij}-\mathbb{E}_{h_i,\xi_{i0},\lambda_0}(v_{ij})$. Following the same steps as the ones leading to (\ref{eq_sum_v}), we obtain: 
\begin{equation}\label{eq_mean_expec}\frac{1}{Np}\sum_{i,j}\left\| \mathbb{E}_{h_i,\xi_{i0},\lambda_0}(v_{ij})\right\|^2=O_p\left(\frac{1}{T}\right).\end{equation}
Moreover, by an argument similar to (\ref{z_tilde_dist}), since $\mathbb{E}_{\xi_{i0},\lambda_0}(-v_{ij}^\alpha)$ is bounded away from zero with probability one, we have:
\begin{equation}\frac{1}{Np}\sum_{i,j}\left\|z_{j}(\xi_{i0})-{z}_j^*(\widehat{k}_i)\right\|^2=O_p(\delta).\label{eq_two_stars}\end{equation}
Let $z'=(z_{1}',...,z_{p}')$, and $z^*(k)'=(z_1^*(k)',...,z_p^*(k)')$. Since $\zeta_{ij}$ are independent across $i$, with zero mean, conditional on $h_1,...,h_N,\xi_{10},...,\xi_{N0},\lambda_0$, we thus have, denoting $\zeta_{i}=(\zeta_{i1}',...,\zeta_{ip}')'$:
\begin{align*}
&\mathbb{E}\left[\left\|\frac{1}{Np}\sum_{i,j}\left(z_{j}(\xi_{i0})'-{z}_j^*\left(\widehat{k}_i\right)'\right)v_{ij}\right\|^2\right]\\
&{\leq} 2O\left(\frac{1}{T}\right)\mathbb{E}\left[\frac{1}{Np}\sum_{i,j}\left\|z_{j}(\xi_{i0}){-}{z}_j^*\left(\widehat{k}_i\right)\right\|^2\right]
\\&\quad \quad \quad \quad \quad \quad {+}2\mathbb{E}\left[\left\|\frac{1}{Np}\sum_{i,j}\left(z_{j}(\xi_{i0})'{-}{z}_j^*\left(\widehat{k}_i\right)'\right)\zeta_{ij}\right\|^2\right]\\
&=O\left(\frac{\delta}{T}\right)+\frac{2}{N^2p^2}\sum_{i}\mathbb{E}\left[\left(z_i'-{z}^*\left(\widehat{k}_i\right)'\right)\mathbb{E}_{h_i,\xi_{i0},\lambda_0}\left[\zeta_i\zeta_i'\right]\left(z_i-{z}^*\left(\widehat{k}_i\right)\right)\right]\\
&=O\left(\frac{\delta}{T}\right)+O\left(\frac{\delta p}{NT}\right)=O(\delta^2),
\end{align*}
where we have used, in turn, the triangle and Cauchy Schwarz inequalities, (\ref{eq_mean_expec}), (\ref{eq_two_stars}), conditional independence of the $\zeta_{i}$ across $i$, part (\ref{ass_regu_vb}) in Assumption \ref{ass_regu_TV}, and (\ref{eq_two_stars}) one more time. Note that, by part (\ref{ass_regu_vb}) in Assumption \ref{ass_regu_TV}, $\|\mathbb{E}_{h_i,\xi_{i0},\lambda_0}\left[\zeta_i\zeta_i'\right]\|\leq  \limfunc{Tr}\mathbb{E}_{h_i,\xi_{i0},\lambda_0}\left[\zeta_i\zeta_i'\right]\leq p\, {\limfunc{max}}_j\, \limfunc{Tr}\mathbb{E}_{h_i,\xi_{i0},\lambda_0}\left[\zeta_{ij}\zeta_{ij}'\right]=O_p(p^2/T)$.

Turning to the second term in (\ref{eq_for_A3}), we have:
\begin{eqnarray*}
	\frac{1}{Np}\sum_{i,j}\left({z}_j^*\left(\widehat{k}_i\right)'-\widetilde{z}_j\left(\widehat{k}_i\right)'\right)v_{ij}=\frac{1}{Np}\sum_{i,j}\left({z}^*_j\left(\widehat{k}_i\right)'-\widetilde{z}_j\left(\widehat{k}_i\right)'\right)\overline{v}_j\left(\widehat{k}_i\right),
\end{eqnarray*}
where by (\ref{rate_vbar}) we have: $\frac{1}{Np}\sum_{i,j}\| \overline{v}_j(\widehat{k}_i)\|^2=O_p(\delta)$. Moreover:
\begin{align*}
\frac{1}{Np}\sum_{i,j}\left\|{z}_j^*\left(\widehat{k}_i\right)-\widetilde{z}_j\left(\widehat{k}_i\right)\right\|^2{\leq}& \frac{2}{Np}\sum_{i,j}\left\|z_{j}(\xi_{i0}){-}{z}_j^*\left(\widehat{k}_i\right)\right\|^2\\&{+}\frac{2}{Np}\sum_{i,j}\left\|z_{j}(\xi_{i0}){-}\widetilde{z}_j\left(\widehat{k}_i\right)\right\|^2,
\end{align*}
where the second term on the right-hand side is $O_p(\delta)$ due to (\ref{z_tilde_dist}), and the first term is $O_p(\delta)$ due to (\ref{eq_two_stars}). This shows that $A_3=O_p(\delta)$, hence that $A=O_p(\delta)$.

Let us now turn to $B$. Letting: $\pi_{ij}'=v_{ij}^{\theta}\left(v_{ij}^{\alpha}\right)^{-1}-\mathbb{E}_{\xi_{i0},\lambda_0}\left(v_{ij}^{\theta}\right)\left[\mathbb{E}_{\xi_{i0},\lambda_0}\left(v_{ij}^{\alpha}\right)\right]^{-1}$, we have:
\begin{eqnarray*}
	B&=&\frac{1}{Np}\sum_{i,j}\pi_{ij}'v_{ij}^{\alpha}\left(\widetilde{w}_j(\widehat{k}_i)+\widetilde{v}_j(\widehat{k}_i)+\widetilde{\alpha}_j(\widehat{k}_i)-\alpha_{i0}^j\right).
\end{eqnarray*}
First, we have: $\frac{1}{Np}\sum_{i,j}\pi_{ij}'v_{ij}^{\alpha}\widetilde{w}_j(\widehat{k}_i)=O_p(\delta)$. Next, we have:
$	\frac{1}{Np}\sum_{i,j}\pi_{ij}'v_{ij}^{\alpha}\widetilde{v}_j(\widehat{k}_i)=\frac{1}{Np}\sum_{i,j}\widetilde{\pi}_j(\widehat{k}_i)'v_{ij}^{\alpha}\widetilde{v}_j(\widehat{k}_i)$, where $\widetilde{\pi}_j(k)$ is defined similarly to $\widetilde{\alpha}_j(k)$. To see that this quantity is $O_p(\delta)$, note that, by the definition of $\widetilde{v}_j(k)$ and using (\ref{eq_mineig}) and (\ref{rate_vbar}):\begin{eqnarray*}
	\frac{1}{Np}\sum_{i,j}\left\|\widetilde{v}_j(\widehat{k}_i)\right\|^2{=}	O_p\left(\frac{1}{Np}\sum_{i,j}\left\|\overline{v}_j(\widehat{k}_i)\right\|^2\right)=O_p(\delta).
\end{eqnarray*}
Moreover, letting $\tau_{ij}=\pi_{ij}'v_{ij}^{\alpha}$, we have:
\begin{eqnarray*}
	\frac{1}{Np}\sum_{i,j}\left\|\widetilde{\pi}_j(\widehat{k}_i)\right\|^2{=}	O_p\left(\frac{1}{Np}\sum_{i,j}\left\| \overline{\tau}_j(\widehat{k}_i)\right\|^2\right).
\end{eqnarray*}
Now, the $\tau_{ij}$ are independent across $i$, with zero conditional mean given $\xi_{i0},\lambda_0$:
$$\mathbb{E}_{\xi_{i0},\lambda_0}\left(\pi_{ij}'v_{ij}^{\alpha}\right)=\mathbb{E}_{\xi_{i0},\lambda_0}\left(\left(v_{ij}^{\theta}\left(v_{ij}^{\alpha}\right)^{-1}-\mathbb{E}_{\xi_{i0},\lambda_0}\left(v_{ij}^{\theta}\right)\left[\mathbb{E}_{\xi_{i0},\lambda_0}\left(v_{ij}^{\alpha}\right)\right]^{-1}\right)v_{ij}^{\alpha}\right)=0.$$
Using an argument similar to the one we used to show (\ref{rate_vbar}), and using Assumption \ref{ass_regu_TV} (\ref{ass_regu_vb}) in the time-varying case, it thus follows that $\frac{1}{Np}\sum_{i,j}\|\widetilde{\pi}_j(\widehat{k}_i)\|^2=O_p(\delta)$. Hence, by the Cauchy Schwarz inequality: $\frac{1}{Np}\sum_{i,j}\pi_{ij}'v_{ij}^{\alpha}\widetilde{v}_j(\widehat{k}_i)=O_p(\delta)$. 

We lastly bound the third term $B_3$ in $B$:
$$\frac{1}{Np}\sum_{i,j}\pi_{ij}'v_{ij}^{\alpha}\left(\widetilde{\alpha}_j(\widehat{k}_i)-\alpha_{i0}^j\right){=}\frac{1}{Np}\sum_{i,j}\pi_{ij}'v_{ij}^{\alpha}\left[\left({\alpha}_j^*(\widehat{k}_i){-}\alpha_{i0}^j\right){+}\left(\widetilde{\alpha}_j(\widehat{k}_i){-}{\alpha}_j^*(\widehat{k}_i)\right)\right],$$
where $\widetilde{\alpha}_j(k)$ and ${\alpha}^*_j(k)$ are given by expressions similar to (\ref{eq_ztilde}) and (\ref{eq_zbar}), with $\alpha_{i0}^j$ in place of $z_{j}(\xi_{i0})$ in those formulas. The first term is $O_p(\delta)$ since, similarly to (\ref{eq_two_stars}): $\frac{1}{Np}\sum_{i,j}\|{\alpha}_j^*(\widehat{k}_i)-\alpha_{i0}^j\|^2=O_p(\delta)$, and the $\tau_{ij}=\pi_{ij}'v_{ij}^{\alpha}$ are conditionally independent across $i$ with zero mean given $\xi_{i0}$ and $\lambda_0$ (using a similar argument to the first term in (\ref{eq_for_A3})). The second term is:
\begin{eqnarray*}
	\frac{1}{Np}\sum_{i,j}\pi_{ij}'v_{ij}^{\alpha}\left(\widetilde{\alpha}_j(\widehat{k}_i)-{\alpha}_j^*(\widehat{k}_i)\right)&=&\frac{1}{Np}\sum_{i,j}\widetilde{\pi}_j(\widehat{k}_i)'v_{ij}^{\alpha}\left(\widetilde{\alpha}_j(\widehat{k}_i)-{\alpha}^*_j(\widehat{k}_i)\right).
\end{eqnarray*}
We have already shown that: $\frac{1}{Np}\sum_{i,j}\|\widetilde{\pi}_j(\widehat{k}_i)\|^2=O_p(\delta)$. Moreover, using similar arguments to the ones we used to bound $\frac{1}{Np}\sum_{i,j}\|{z}_j^*(\widehat{k}_i)-\widetilde{z}_j(\widehat{k}_i)\|^2$ above, we have: $\frac{1}{Np}\sum_{i,j}\|\widetilde{\alpha}_j(\widehat{k}_i)-{\alpha}_j^*(\widehat{k}_i)\|^2=O_p(\delta)$. This shows that $B_3=O_p(\delta)$, hence that $B=O_p(\delta)$.

\paragraph{\underline{Lemma \ref{lem_sup3}.}}

For given $k,j$, $\theta$-differentiating: $\sum_{i}\boldsymbol{1}\{\widehat{k}_i{=}k\}v_{ij}(\widehat{\alpha}^j(k,\theta),\theta){=}0$, and using (\ref{eq_mineig}), we obtain:
\begin{equation}\frac{\partial \widehat{\alpha}^j(k,\theta)}{\partial \theta'}{=}\left(\sum_i\boldsymbol{1}\{\widehat{k}_{i}{=}k\}\left(-v_{ij}^\alpha\left(\widehat{\alpha}^j(\widehat{k}_{i},\theta),\theta\right)\right)\right)^{-1}\sum_i\boldsymbol{1}\{\widehat{k}_{i}{=}k\}v_{ij}^\theta\left(\widehat{\alpha}^j(\widehat{k}_{i},\theta),\theta\right)'.\label{eq_dalpha}\end{equation}

Let us define, at $\theta=\theta_0$ (and omitting $\theta_0$ and $\alpha_{i0}^j$ from the notation):
\begin{align*}\frac{\partial \widetilde{\alpha}^j(k)}{\partial \theta'}&=\left(\sum_{i}\boldsymbol{1}\{\widehat{k}_i{=}k\}(-v_{ij}^{\alpha})\right)^{-1}\sum_{i}\boldsymbol{1}\{\widehat{k}_i{=}k\}(v_{ij}^{\theta})',\\
\frac{\partial\widetilde{\alpha}^j_*(k)}{\partial \theta'}&=\left(\sum_{i}\boldsymbol{1}\{\widehat{k}_i{=}k\}({-}v_{ij}^{\alpha})\right)^{-1}\sum_{i}\boldsymbol{1}\{\widehat{k}_i{=}k\}({-}v_{ij}^{\alpha})\underset{=\frac{\partial \overline{\alpha}^j(\xi_{i0})}{\partial \theta'}}{\underbrace{\left[\mathbb{E}_{\xi_{i0},\lambda_0}(-v_{ij}^{\alpha})\right]^{-1}\mathbb{E}_{\xi_{i0},\lambda_0}(v_{ij}^{\theta})'}}.\end{align*}
Using (\ref{Rate_alpha_hat_theta}) and (\ref{eq_mineig}), we have:
$\frac{1}{Np}\sum_{i,j}\|\frac{\partial \widehat{\alpha}^j(\widehat{k}_i)}{\partial \theta'}-\frac{\partial \widetilde{\alpha}^j(\widehat{k}_i)}{\partial \theta'}\|^2=o_p\left(1\right)$. Moreover:
\begin{eqnarray*}
	\frac{\partial \widetilde{\alpha}^j(k)}{\partial \theta'}-\frac{\partial\widetilde{\alpha}_*^j(k)}{\partial \theta'}
	&=&\left(\frac{\sum_{i}\boldsymbol{1}\{\widehat{k}_i{=}k\}(-v_{ij}^{\alpha})}{\sum_{i}\boldsymbol{1}\{\widehat{k}_i{=}k\}}\right)^{-1}\left(\frac{\sum_{i}\boldsymbol{1}\{\widehat{k}_i{=}k\}\tau_{ij}'}{\sum_{i}\boldsymbol{1}\{\widehat{k}_i{=}k\}}\right),
\end{eqnarray*}
where $\tau_{ij}'{=}(v_{ij}^{\theta})'{-}({-}v_{ij}^{\alpha})\left[\mathbb{E}_{\xi_{i0},\lambda_0}({-}v_{ij}^{\alpha})\right]^{-1}\mathbb{E}_{\xi_{i0},\lambda_0}(v_{ij}^{\theta})'$ are conditionally independent across $i$, with zero mean given $\xi_{i0}$ and $\lambda_0$. Hence, using (\ref{eq_mineig}), and a similar argument to the one we used to show (\ref{rate_vbar}), we have:
$\frac{1}{Np}\sum_{i,j}\|\frac{\partial \widetilde{\alpha}^j(\widehat{k}_i)}{\partial \theta'}-\frac{\partial\widetilde{\alpha}_*^j(\widehat{k}_i)}{\partial \theta'}\|^2=o_p\left(1\right)$. Lastly, using (\ref{eq_mineig}) we have, as in (\ref{z_tilde_dist}): $\frac{1}{Np}\sum_{i,j}\|\frac{\partial \widetilde{\alpha}_*^j(\widehat{k}_i)}{\partial \theta'}-\frac{\partial \overline{\alpha}^j(\xi_{i0})}{\partial \theta'}\|^2=o_p\left(1\right)$. Combining results shows (\ref{eq_der_alpha}).

\paragraph{\underline{Lemma \ref{lem_coro_bias}.}}

In the following we again evaluate all functions at $\theta_0$, and omit $\theta_0$ for the notation. In particular, $\widehat{\alpha}_{i}$ is a shorthand for $\widehat{\alpha}_{i}(\theta_0)$. We will use the notation $\widehat{w}_i=-v_i^{\alpha}(\widehat{\alpha}_i)$. The choice of $K=\widehat{K}$ with $\gamma=o(1)$ implies that:
\begin{equation}\frac{1}{N}\sum_{i} \|h_i-\widehat{h}(\widehat{k}_i)\|^2=o_p\left(\frac{1}{T}\right).\label{eq_kmeans_manygroups}\end{equation} 
We also have: $\frac{1}{N}\sum_{i}\|\widehat{\alpha}(\widehat{k}_i)-\widehat{\alpha}_{i}\|^2=O_p\left(\frac{1}{T}\right)$. Let, for all $k$:
\begin{align}\label{alpha_hat_alpha_tilde}
&\widetilde{\alpha}(k)=\left(\sum_{i}\boldsymbol{1}\{\widehat{k}_i{=}k\}\widehat{w}_i\right)^{-1}\sum_{i}\boldsymbol{1}\{\widehat{k}_i{=}k\}\widehat{w}_i\widehat{\alpha}_i.
\end{align}
Expanding: $\sum_{i}\boldsymbol{1}\{\widehat{k}_i{=}k\}v_i(\widehat{\alpha}(k))=0$
around $\widehat{\alpha}_i$, using that $v_i(\widehat{\alpha}_i)=0$, we obtain:
\begin{align*}
\widehat{\alpha}(k)=\widetilde{\alpha}(k){+}\frac{1}{2}\left[\sum_{i} \boldsymbol{1}\{\widehat{k}_{i}{=}k\}\widehat{w}_i\right]^{-1}\sum_{i} \boldsymbol{1}\{\widehat{k}_{i}{=}k\}v_{i}^{\alpha\alpha}\left(a_{i}(k)\right)\left(\widehat{\alpha}(\widehat{k}_{i})-\widehat{\alpha}_{i}\right)^{\otimes 2},
\end{align*} 
where $a_i(k)$ lies between $\widehat{\alpha}_{i}$ and $\widehat{\alpha}(k)$, and $v_i^{\alpha\alpha}(a_i(k))$ is a matrix of third derivatives with $(\limfunc{dim}\alpha_{i0})^2$ columns.

To see that (\ref{eq_Delta_L}) holds, we rely on the following decomposition:
\begin{align*}
&\frac{\partial}{\partial\theta}\Big|_{\theta_0}\, \Delta L(\theta)=\frac{1}{N}\sum_{i}\frac{\partial \ell_i(\widehat{\alpha}(\widehat{k}_i)) }{\partial \theta}-\frac{1}{N}\sum_{i}\frac{\partial \ell_i(\widehat{\alpha}_{i}) }{\partial \theta}\\
& =
\frac{1}{N}\sum_{i}v_i^{\theta}(\widehat{\alpha}_i)\left(\widehat{\alpha}(\widehat{k}_i)-\widehat{\alpha}_{i}\right)+\frac{1}{2N}\sum_{i}v_i^{\theta\alpha}(a_i)\left(\widehat{\alpha}(\widehat{k}_i)-\widehat{\alpha}_{i}\right)^{\otimes 2}\\
& =
\frac{1}{N}\sum_{i}v_i^{\theta}(\widehat{\alpha}_i)\left(\widetilde{\alpha}(\widehat{k}_i)-\widehat{\alpha}_{i}\right)+\frac{1}{2N}\sum_{i}v_i^{\theta\alpha}(a_i)\left(\widehat{\alpha}(\widehat{k}_i)-\widehat{\alpha}_{i}\right)^{\otimes 2}\\
& +\frac{1}{2N}\sum_{i}v_i^{\theta}(\widehat{\alpha}_i)\left(\mathbb{E}_{\widehat{k}_i}\left[\widehat{w}_{i'}\right]\right)^{-1}\mathbb{E}_{\widehat{k}_i}\left[v_{i'}^{\alpha\alpha}\left(a_{i'}(\widehat{k}_{i'})\right)\left(\widehat{\alpha}(\widehat{k}_{i'})-\widehat{\alpha}_{i'}\right)^{\otimes 2}\right]\\
& =
\underset{= A_1}{\underbrace{\frac{1}{N}\sum_{i}v_i^{\theta}(\widehat{\alpha}_i)\left(\widetilde{\alpha}(\widehat{k}_i)-\widehat{\alpha}_{i}\right)}}+\underset{= A_2}{\underbrace{\frac{1}{2N}\sum_{i}v_i^{\theta\alpha}(a_i)\left(\widetilde{\alpha}(\widehat{k}_i)-\widehat{\alpha}_{i}\right)^{\otimes 2}}}+o_p\left(\frac{1}{T}\right)\\
& \quad \underset{= A_3}{\underbrace{+\frac{1}{2N}\sum_{i}v_i^{\theta}(\widehat{\alpha}_i)\left(\mathbb{E}_{\widehat{k}_i}\left[\widehat{w}_{i'}\right]\right)^{-1}\mathbb{E}_{\widehat{k}_i}\left[v_{i'}^{\alpha\alpha}\left(a_{i'}(\widehat{k}_{i'})\right)\left(\widetilde{\alpha}(\widehat{k}_{i'})-\widehat{\alpha}_{i'}\right)^{\otimes 2}\right]}},
\end{align*}
where $a_i$ and $a_i(\widehat{k}_i)$ lie between $\widehat{\alpha}_{i}$ and $\widehat{\alpha}(\widehat{k}_i)$, and $\mathbb{E}_{k}$ denotes a mean in group $\widehat{k}_i=k$. Let $\gamma(h)=\{\mathbb{E}_{h_i=h}\left(\widehat{w}_i\right)\}^{-1}\mathbb{E}_{h_i=h}\left(\widehat{w}_i\widehat{\alpha}_i\right)$, and $\nu_i=\widehat{\alpha}_i-\gamma(h_i)$. Let $\widehat{g}_i=v_i^{\theta}(\widehat{\alpha}_i)(\widehat{w}_i)^{-1}$, $\lambda(h)=\mathbb{E}_{h_i=h}\left(\widehat{g}_i\widehat{w}_i\right)\{\mathbb{E}_{h_i=h}\left(\widehat{w}_i\right)\}^{-1}$, and $\tau_i=\widehat{g}_i'-\lambda(h_i)'$. Using (\ref{eq_kmeans_manygroups}) we can show, using that $\gamma$ is Lipschitz-continuous, that $
\frac{1}{N}\sum_{i}\|{\gamma}(h_i)-\widetilde{\gamma}(\widehat{k}_i)\|^2=o_p\left(\frac{1}{T}\right)$. Moreover, we have: $\mathbb{E}\left[\widehat{w}_i\nu_i\,|\, h_1,...,h_N\right]{=}\mathbb{E}_{h_i}\left[\widehat{w}_i\widehat{\alpha}_i\right]-\mathbb{E}_{h_i}\left[\widehat{w}_i\widehat{\alpha}_i\right]{=}0$. Similar arguments to the proof of Lemma \ref{lem_sup1} give:
$
\frac{1}{N}\sum_{i}\|\mathbb{E}_{\widehat{k}_i}\left[\widehat{w}_{i'}\nu_{i'}\right]\|^2=O_p(\frac{K}{NT}){=}o_p\left(\frac{1}{T}\right)$. Hence: $
\frac{1}{N}\sum_{i}\|\widetilde{\nu}(\widehat{k}_i)\|^2=\frac{1}{N}\sum_{i}\|\left(\mathbb{E}_{\widehat{k}_i}\left[\widehat{w}_{i'}\right]\right)^{-1}\mathbb{E}_{\widehat{k}_i}\left[\widehat{w}_{i'}\nu_{i'}\right]\|^2=o_p\left(\frac{1}{T}\right)$. Likewise, we have: $
\frac{1}{N}\sum_{i}\|{\lambda}(h_i)-\widetilde{\lambda}(\widehat{k}_i)\|^2=o_p\left(\frac{1}{T}\right)$, and:
$
\frac{1}{N}\sum_{i}\|\widetilde{\tau}(\widehat{k}_i)\|^2=o_p\left(\frac{1}{T}\right)$.\footnote{Here $\widetilde{\gamma}(k)$, $\widetilde{\lambda}(k)$, $\widetilde{\nu}(k)$, and $\widetilde{\tau}(k)$ are defined similarly to $\widetilde{\alpha}(k)$ in (\ref{alpha_hat_alpha_tilde}), with $\gamma(h_i)$, $\lambda(h_i)$, $\nu_i$, and $\tau_i$, respectively, replacing $\widehat{\alpha}_i$ in that formula.}

Let us now expand the three terms $A_1,A_2,A_3$ in the above decomposition: 
\begin{align*}
&A_1=\frac{1}{N}\sum_{i}\widehat{g}_i \widehat{w}_i\left(\widetilde{\alpha}(\widehat{k}_i){-}\widehat{\alpha}_{i}\right)=\frac{1}{N}\sum_{i}\left(\widehat{g}_i{-}\widetilde{g}(\widehat{k}_i)\right) \widehat{w}_i\left(\widetilde{\alpha}(\widehat{k}_i){-}\widehat{\alpha}_{i}\right)\\
&=-\frac{1}{N}\sum_{i}\left(\lambda(h_i)-\widetilde{\lambda}(\widehat{k}_i)+\tau_i'-\widetilde{\tau}(\widehat{k}_i)'\right) \widehat{w}_i\left(\gamma(h_i)-\widetilde{\gamma}(\widehat{k}_i)+\nu_i-\widetilde{\nu}(\widehat{k}_i)\right)\\
&=-\frac{1}{N}\sum_{i}\tau_i' \widehat{w}_i\nu_i+o_p\left(\frac{1}{T}\right)=-\frac{1}{N}\sum_{i}\tau_i' \mathbb{E}_{\xi_{i0}}(-{v}_i^{\alpha}(\alpha_{i0}))\nu_i+o_p\left(\frac{1}{T}\right),\\
&A_2=\frac{1}{2N}\sum_{i}\mathbb{E}_{\xi_{i0}}\left(v_i^{\theta\alpha}(\alpha_{i0})\right)\left(\widetilde{\alpha}(\widehat{k}_i)-\widehat{\alpha}_{i}\right)^{\otimes 2}+o_p\left(\frac{1}{T}\right)
\\
&=\frac{1}{2N}\sum_{i}\mathbb{E}_{\xi_{i0}}\left(v_i^{\theta\alpha}(\alpha_{i0})\right)\left(\widetilde{\gamma}(\widehat{k}_i)-\gamma(h_i)+\widetilde{\nu}(\widehat{k}_i)-\nu_i\right)^{\otimes 2}+o_p\left(\frac{1}{T}\right)\\
&=\frac{1}{2N}\sum_{i}\mathbb{E}_{\xi_{i0}}\left(v_i^{\theta\alpha}(\alpha_{i0})\right)\nu_i^{\otimes 2}+o_p\left(\frac{1}{T}\right),\\
&A_3=\frac{1}{2N}\sum_{i}\mathbb{E}_{\xi_{i0}}\left(v_i^{\theta}(\alpha_{i0})\right)\left[\mathbb{E}_{\xi_{i0}}(-{v}_i^{\alpha}(\alpha_{i0}))\right]^{-1}\mathbb{E}_{\xi_{i0}}\left[v_i^{\alpha\alpha}\left(\alpha_{i0}\right)\right]\nu_i^{\otimes 2}+o_p\left(\frac{1}{T}\right).
\end{align*}

Combining, we get:
\begin{align*}
&\frac{\partial}{\partial\theta}\Big|_{\theta_0}\, \Delta L(\theta)=-\frac{1}{N}\sum_{i}\tau_i' \mathbb{E}_{\xi_{i0}}(-{v}_i^{\alpha}(\alpha_{i0}))\nu_i+o_p\left(\frac{1}{T}\right)\\
& +\frac{1}{2N}\sum_{i}\left[\mathbb{E}_{\xi_{i0}}\left(v_i^{\theta\alpha}(\alpha_{i0})\right){+}\mathbb{E}_{\xi_{i0}}\left(v_i^{\theta}(\alpha_{i0})\right)\left[\mathbb{E}_{\xi_{i0}}(-{v}_i^{\alpha}(\alpha_{i0}))\right]^{-1}\mathbb{E}_{\xi_{i0}}\left[v_i^{\alpha\alpha}\left(\alpha_{i0}\right)\right]\right]\nu_i^{\otimes 2}.
\end{align*}
Now,  $\frac{\partial\widehat{\alpha}_i(\theta_0)}{\partial\theta'}=\widehat{g}_i'$, and: \begin{align*}&\frac{\partial}{\partial\theta'}\Big|_{\theta_0}\limfunc{vec}\mathbb{E}_{\xi_{i0}}\left[-v_i^\alpha \left(\overline{\alpha}(\theta,\xi_{i0}),\theta\right)\right]\\
&= -\left(\mathbb{E}_{\xi_{i0}}\left(v_i^{\theta\alpha}(\alpha_{i0})\right)+\mathbb{E}_{\xi_{i0}}\left(v_i^{\theta}(\alpha_{i0})\right)\left[\mathbb{E}_{\xi_{i0}}(-{v}_i^{\alpha}(\alpha_{i0}))\right]^{-1}\mathbb{E}_{\xi_{i0}}\left[v_i^{\alpha\alpha}\left(\alpha_{i0}\right)\right]\right)'.\end{align*}
Let $\omega_i=\{\mathbb{E}_{h_i}\left(\widehat{w}_i\right)\}^{-1} \widehat{w}_i,$ and $\widetilde{\nu}_i(\theta)=\widehat{\alpha}_i(\theta)-{\mathbb{E}}_{h_i}\left(\omega_i\widehat{\alpha}_i(\theta)\right)$. Combining the above with the expression of the bias of the FE score, we obtain:
\begin{align}\label{eq_deltaL}
&	\frac{\partial}{\partial\theta}\Big|_{\theta_0} \Delta L(\theta){=}-\frac{\partial}{\partial\theta}\Big|_{\theta_0} \frac{1}{2N}\sum_{i}\widetilde{\nu}_i(\theta)'\mathbb{E}_{\xi_{i0}}\left[-v_i^\alpha \left(\overline{\alpha}(\theta,\xi_{i0}),\theta\right)\right] \widetilde{\nu}_i(\theta){+}o_p\left(\frac{1}{T}\right).
\end{align}

Lastly, let $\widehat{\alpha}_i(\theta)=\mathbb{E}_{h_i}(\widehat{\alpha}_i(\theta))+\nu_i(\theta)$, and $\omega_i=\mathbb{E}_{h_i}(\omega_i)+\eta_i=1+\eta_i$. We have:
$\widetilde{\nu}_i(\theta)=\nu_i(\theta)-\mathbb{E}_{h_i}(\eta_i\nu_i(\theta))$, from which it follows that: $\frac{1}{N}\sum_{i}\|\widetilde{\nu}_i(\theta_0)-\nu_i(\theta_0)\|^2=o_p\left(1/T\right)$. Likewise:	$\frac{1}{N}\sum_{i}\|\frac{\partial\widetilde{\nu}_i(\theta_0)}{\partial\theta'}-\frac{\partial\nu_i(\theta_0)}{\partial\theta'}\|^2=o_p\left(1/T\right)$. Hence, (\ref{eq_deltaL}) implies (\ref{eq_Delta_L}).

\section{Complements and extensions}

\subsection{Average effects}

\noindent Let $m_i(\alpha_i,\theta)=\frac{1}{T}\sum_{t=1}^Tm\left(X_{it},\alpha_{i},\theta\right)$ in the time-invariant case, and $m_i(\alpha_i,\theta)=\frac{1}{T}\sum_{t=1}^Tm\left(X_{it},\alpha_{it},\theta\right)$ in the time-varying case. Let $\widehat{M}=\frac{1}{N}\sum_{i}m_{i}\left(\widehat{\alpha}(\widehat{k}_i),\widehat{\theta}\right)$ be the GFE estimator of $M_0=\frac{1}{N}\sum_{i} m_{i}\left(\alpha_{i0},\theta_0\right)$. We use a common notation as in the proofs of Theorems \ref{theo2} and \ref{theo2_TV}, and denote $m_{ij}(\alpha_i^j,\theta)=m_i(\alpha_i,\theta)$ in the time-invariant case, and  $m_{ij}(\alpha_i^j,\theta)=m\left(X_{it},\alpha_{it},\theta\right)$ in the time-varying case. 

\begin{assumption}{(average effects)} \label{ass_average}
\begin{enumerate}[itemsep=-3pt,label=(\roman*),ref=\roman*,topsep=0pt]
		\item $m_{ij}(\alpha,\theta)$ is twice differentiable in both its arguments, for all $i,j$. \label{ass_average_i}
		\item ${\max}_{i,j}\,{\sup}_{\alpha,\theta}\,\| m_{ij}(\alpha,\theta)\|=O_p(1)$, and similarly for the first two derivatives of $m_{ij}$; $\max_{j} \, \sup_{\widetilde{\xi},\lambda}\, \|\frac{\partial}{\partial\xi'}\big|_{\xi=\widetilde{\xi}}\,\mathbb{E}_{\xi_{i0}=\xi,\lambda_0=\lambda}(\frac{\partial m_{ij}(\alpha_{i0}^j,\theta_0)}{\partial\alpha})\|=O(1)$; and, letting $\tau^m_{ij}=\frac{\partial m_{ij}(\alpha_{i0}^j,\theta_0)}{\partial \alpha'}-\mathbb{E}_{\xi_{i0},\lambda_0}[\frac{\partial m_{ij}(\alpha_{i0}^j,\theta_0)}{\partial \alpha'}]\mathbb{E}_{\xi_{i0},\lambda_0}[v_{ij}^{\alpha}(\alpha_{i0}^j,\theta_0)]v_{ij}^{\alpha}(\alpha_{i0}^j,\theta_0)$, the function $\mathbb{E}_{h_i=h,\xi_{i0}=\xi,\lambda_0=\lambda}(\limfunc{vec}\tau_{ij}^m)$ is twice differentiable with respect to $h$, with first and second derivatives that are uniformly bounded in $j$, $\xi$, $\lambda$, and $h$, and   $\|{\limfunc{Var}}_{h_i=h,\xi_{i0}=\xi,\lambda_0=\lambda}(\limfunc{vec}\tau_{ij}^m)\|=O(\frac{p}{T})$, uniformly in $j$, $\xi$, $\lambda$, and $h$. \label{ass_average_ii}
	\end{enumerate}
\end{assumption}

Let $s_i$ and $H$ as in Theorem \ref{theo2} or \ref{theo2_TV}, and let $\overline{s}=\frac{1}{N}\sum_{i}s_{i}$. Define:
\begin{align*}
s_i^m{=}&\frac{1}{p}\sum_{j}\Bigg\{\mathbb{E}_{\xi_{i0},\lambda_0}\left(\frac{\partial m_{ij}}{\partial\alpha'}\right)\left[\mathbb{E}_{\xi_{i0},\lambda_0}\left({-}\frac{\partial^2 \ell_{ij}}{\partial\alpha\partial\alpha'}\right)\right]^{-1}\frac{\partial\ell_{ij} }{\partial \alpha}{+}\mathbb{E}_{\xi_{i0},\lambda_0}\left(\frac{\partial m_{ij}}{\partial\theta'}\right)H^{-1} \overline{s}\notag\\&+ \mathbb{E}_{\xi_{i0},\lambda_0}\left(\frac{\partial m_{ij}}{\partial\alpha'}\right)\left[\mathbb{E}_{\xi_{i0},\lambda_0}\left({-}\frac{\partial^2 \ell_{ij}}{\partial\alpha\partial\alpha'}\right)\right]^{-1}\mathbb{E}_{\xi_{i0},\lambda_0}\left(\frac{\partial^2 \ell_{ij}}{\partial\alpha\partial\theta'}\right)H^{-1}\overline{s}\Bigg\}.
\end{align*}

\begin{corollary}\label{coro_average}
	
	Let the conditions of Theorem \ref{theo2} or \ref{theo2_TV} hold, and let Assumption \ref{ass_average} hold. Then, as $N,T,K$ tend to infinity such that $Kp/(NT)$ tends to zero:
	\begin{eqnarray*}
		\widehat{M}&=&M_0+ \frac{1}{N}\sum_{i}s_i^m +O_p\left(\frac{1}{T}\right)+O_p\left(\frac{Kp}{NT}\right)+O_p\left(K^{-\frac{2}{d}}\right)+o_p\left(\frac{1}{\sqrt{NT}}\right).
	\end{eqnarray*}
\end{corollary}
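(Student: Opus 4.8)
The plan is to run the argument inside the common notation of the proofs of Theorems~\ref{theo2} and \ref{theo2_TV}, so that the time-invariant ($p=1$, $\delta=1/T+K^{-2/d}$) and time-varying ($p=T$, $\delta=1/T+K/N+K^{-2/d}$) cases are handled at once, with all derivatives of $m_{ij}$ and $\ell_{ij}$ evaluated at $(\alpha_{i0}^j,\theta_0)$. Writing $\widehat{M}-M_0=\frac{1}{Np}\sum_{i,j}[m_{ij}(\widehat\alpha^j(\widehat k_i,\widehat\theta),\widehat\theta)-m_{ij}(\alpha_{i0}^j,\theta_0)]$, I would first expand $m_{ij}$ to first order around $(\alpha_{i0}^j,\theta_0)$. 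The quadratic remainder is controlled by the boundedness of the second derivatives of $m_{ij}$ in Assumption~\ref{ass_average}, the expansion $\widehat\theta-\theta_0=H^{-1}\overline s+O_p(\delta)+o_p(1/\sqrt{NT})$ from Theorems~\ref{theo2}/\ref{theo2_TV}, and the rate $\frac{1}{Np}\sum_{i,j}\|\widehat\alpha^j(\widehat k_i,\widehat\theta)-\alpha_{i0}^j\|^2=O_p(\delta)$, which follows from Lemma~\ref{lem_sup1} together with the Lipschitz dependence of $\overline\alpha^j(\cdot,\xi)$ on $\theta$; hence this remainder is $O_p(\delta)$. I would then split the surviving linear term into a ``$\theta$-channel'' and an ``$\alpha$-channel'' $\frac{1}{Np}\sum_{i,j}\frac{\partial m_{ij}}{\partial\alpha'}(\widehat\alpha^j(\widehat k_i,\theta_0)-\alpha_{i0}^j)$, the $j$-average being collected into $s_i^m$.

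For the $\theta$-channel I would write $\widehat\alpha^j(\widehat k_i,\widehat\theta)-\widehat\alpha^j(\widehat k_i,\theta_0)=\frac{\partial\widehat\alpha^j(\widehat k_i,\theta_0)}{\partial\theta'}(\widehat\theta-\theta_0)+O_p(\|\widehat\theta-\theta_0\|^2)$ and replace $\frac{\partial\widehat\alpha^j(\widehat k_i,\theta_0)}{\partial\theta'}$ by $\frac{\partial\overline\alpha^j(\theta_0,\xi_{i0})}{\partial\theta'}=[\mathbb{E}_{\xi_{i0},\lambda_0}(-v_{ij}^\alpha)]^{-1}\mathbb{E}_{\xi_{i0},\lambda_0}(v_{ij}^\theta)'$ via Lemma~\ref{lem_sup3}. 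Substituting $\widehat\theta-\theta_0=H^{-1}\overline s+O_p(\delta)+o_p(1/\sqrt{NT})$ and replacing the weights $\frac{\partial m_{ij}}{\partial\theta'}$ and $\frac{\partial m_{ij}}{\partial\alpha'}$ by their conditional means $\mathbb{E}_{\xi_{i0},\lambda_0}(\cdot)$ (the deviations are conditionally mean zero and bounded, so after multiplication by the $O_p(1/\sqrt{NT})$ factor $H^{-1}\overline s$ they are $o_p(1/\sqrt{NT})$), this channel delivers precisely the direct term $\mathbb{E}_{\xi_{i0},\lambda_0}(\frac{\partial m_{ij}}{\partial\theta'})H^{-1}\overline s$ and the indirect term $\mathbb{E}_{\xi_{i0},\lambda_0}(\frac{\partial m_{ij}}{\partial\alpha'})[\mathbb{E}_{\xi_{i0},\lambda_0}(-v_{ij}^\alpha)]^{-1}\mathbb{E}_{\xi_{i0},\lambda_0}(\frac{\partial^2\ell_{ij}}{\partial\alpha\partial\theta'})H^{-1}\overline s$ of $s_i^m$, up to $O_p(\delta)+o_p(1/\sqrt{NT})$.

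For the $\alpha$-channel I would reuse the decomposition $\widehat\alpha^j(\widehat k_i,\theta_0)=\widetilde\alpha_j(\widehat k_i)+\widetilde v_j(\widehat k_i)+\widetilde w_j(\widehat k_i)$ from the proof of Lemma~\ref{lem_sup2}. The cubic piece $\widetilde w_j$ and the within-group heterogeneity piece $\widetilde\alpha_j(\widehat k_i)-\alpha_{i0}^j$ are dispatched verbatim as there, each being $O_p(\delta)$ after Cauchy--Schwarz and the kmeans rate of Lemmas~\ref{theo1}--\ref{lemma_GL}. The essential contribution is the score piece $\frac{1}{Np}\sum_{i,j}\frac{\partial m_{ij}}{\partial\alpha'}\widetilde v_j(\widehat k_i)$: decomposing $\frac{\partial m_{ij}}{\partial\alpha'}$ into its best conditional linear predictor in $v_{ij}^\alpha$ plus the residual $\tau_{ij}^m$ of Assumption~\ref{ass_average} (so that $\mathbb{E}_{\xi_{i0},\lambda_0}(\tau_{ij}^m)=0$), the predictor part recombines through the group-reshuffling identity (the conditional-mean weight is constant within a group up to the approximation error, while the group-mean Hessian inside $\widetilde v_j$ cancels $-v_{ij}^\alpha$ exactly as in Lemma~\ref{lem_sup2}) into $\frac{1}{Np}\sum_{i,j}\mathbb{E}_{\xi_{i0},\lambda_0}(\frac{\partial m_{ij}}{\partial\alpha'})[\mathbb{E}_{\xi_{i0},\lambda_0}(-v_{ij}^\alpha)]^{-1}v_{ij}$, the first term of $s_i^m$, with the substitution errors bounded by $O_p(\delta)$ through the $z_j-z_j^*$ and $z_j^*-\widetilde z_j$ estimates reproduced from Lemma~\ref{lem_sup2}.

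The main obstacle is the residual cross-term $\frac{1}{Np}\sum_{i,j}\tau_{ij}^m\,\widetilde v_j(\widehat k_i)$. Since $\widetilde v_j(\widehat k_i)$ is constant within a group, this equals $\frac{1}{Np}\sum_{i,j}\overline{\tau^m}_j(\widehat k_i)\,\widetilde v_j(\widehat k_i)$, which by Cauchy--Schwarz is at most $(\frac{1}{Np}\sum_{i,j}\|\overline{\tau^m}_j(\widehat k_i)\|^2)^{1/2}(\frac{1}{Np}\sum_{i,j}\|\widetilde v_j(\widehat k_i)\|^2)^{1/2}$. The second factor is $O_p(\delta^{1/2})$ by (\ref{rate_vbar}); for the first I would follow the conditional-independence computation of Lemma~\ref{lem_sup1}, using $\mathbb{E}_{\xi_{i0},\lambda_0}(\tau_{ij}^m)=0$ and independence across $i$ given $\lambda_0$ to kill the off-diagonal terms, together with the bound $\|\limfunc{Var}_{h_i,\xi_{i0},\lambda_0}(\limfunc{vec}\tau_{ij}^m)\|=O(p/T)$ from Assumption~\ref{ass_average} to obtain $\frac{1}{Np}\sum_{i,j}\|\overline{\tau^m}_j(\widehat k_i)\|^2=O_p(Kp/(NT))$ (summing a group-mean variance over $K$ groups and $p$ coordinates). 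The product is then $O_p((Kp/(NT))^{1/2}\delta^{1/2})=O_p(Kp/(NT))+O_p(\delta)$ by the arithmetic--geometric mean inequality, which is the source of the new $O_p(Kp/(NT))$ term; it coincides with $O_p(K/N)$ when $p=T$ and is dominated by $O_p(1/T)$ when $p=1$, explaining the hypothesis $Kp/(NT)\to 0$. Assembling the three channels with the accumulated remainders $O_p(\delta)+O_p(Kp/(NT))+o_p(1/\sqrt{NT})$ and rewriting $\delta$ yields the stated expansion $\widehat{M}=M_0+\frac{1}{N}\sum_i s_i^m+O_p(1/T)+O_p(Kp/(NT))+O_p(K^{-2/d})+o_p(1/\sqrt{NT})$.
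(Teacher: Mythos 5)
Your proposal is correct and takes essentially the same route as the paper's proof: a first-order expansion of $m_{ij}$ at $(\alpha_{i0}^j,\theta_0)$ with $O_p(\delta)$ remainder, a Lemma \ref{lem_sup2}-type argument for the $\alpha$-channel with $\partial m_{ij}/\partial\alpha'$ playing the role of $v_{ij}^{\theta}$ and $\tau_{ij}^m$ the role of $\pi_{ij}'v_{ij}^{\alpha}$ (the source of the first term of $s_i^m$ and of the $O_p(Kp/(NT))$ term), and Lemma \ref{lem_sup3} combined with the influence-function expansion of $\widehat{\theta}$ for the $\theta$-channel (the remaining two terms of $s_i^m$), which is equivalent to the paper's passage through $\overline{\alpha}^j(\widehat{\theta},\xi_{i0})$. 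One small imprecision: since the group indicators are functions of $h_1,\dots,h_N$, the conditional-independence computation must condition on the $h_i$'s, and the $(h_i,\xi_{i0},\lambda_0)$-conditional mean of $\tau_{ij}^m$ is generally nonzero (only its $(\xi_{i0},\lambda_0)$-conditional mean vanishes), so your intermediate bound should read $\frac{1}{Np}\sum_{i,j}\|\overline{\tau^m}_j(\widehat{k}_i)\|^2=O_p(1/T)+O_p(Kp/(NT))$ --- the extra $O_p(1/T)$ piece being controlled by the smoothness-in-$h$ requirement of Assumption \ref{ass_average} exactly as for $\rho_j$ in the proof of (\ref{rate_vbar}) --- which is harmless for the final statement because $1/T$ is already part of $\delta$.
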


\begin{proof}

We have, by a Taylor expansion:
\begin{align*}
	&\widehat{M}-M_0=\frac{1}{Np}\sum_{i,j}m_{ij}\left(\widehat{\alpha}^j(\widehat{k}_i,\widehat{\theta}),\widehat{\theta}\right)-\frac{1}{Np}\sum_{i,j}m_{ij}\left(\alpha_{i0}^j,\theta_0\right)\\&{=}\frac{1}{Np}\sum_{i,j}\frac{\partial m_{ij}\left(\alpha^j_{i0},\theta_0\right)}{\partial \alpha'}\left(\widehat{\alpha}^j(\widehat{k}_i,\widehat{\theta}){-}\alpha^j_{i0}\right){+}\frac{1}{Np}\sum_{i,j}\frac{\partial m_{ij}\left(\alpha^j_{i0},\theta_0\right)}{\partial \theta'}\left(\widehat{\theta}{-}\theta_0\right) {+}O_p\left(\delta\right),
\end{align*}
where $\delta$ is defined as in the proofs of Theorems \ref{theo2} and \ref{theo2_TV}.

Using similar arguments to the ones we used to establish Lemma \ref{lem_sup2}, under Assumption \ref{ass_average} we have (recall that $\overline{\alpha}^j({\theta}_0,\xi_{i0})=\alpha_{i0}^j$):
\begin{align*}&\frac{1}{Np}\sum_{i,j}\frac{\partial m_{ij}\left(\alpha_{i0}^j,\theta_0\right)}{\partial \alpha'}\left(\widehat{\alpha}^j(\widehat{k}_i,{\theta}_0)-\overline{\alpha}^j({\theta}_0,\xi_{i0})\right)\notag\\
&+\frac{1}{Np}\sum_{i,j}\mathbb{E}_{\xi_{i0},\lambda_0}\left[\frac{\partial m_{ij}\left(\alpha^j_{i0},\theta_0\right)}{\partial \alpha'}\right]\mathbb{E}_{\xi_{i0},\lambda_0}\left[v_{ij}^{\alpha}(\alpha_{i0}^j,\theta_0)\right]^{-1}v_{ij}(\alpha_{i0}^j,\theta_0)=O_p(\delta).\end{align*}
Moreover, using (\ref{eq_der_alpha}) and Assumption \ref{ass_average} we obtain:
\begin{align*}& \frac{1}{Np}\sum_{i,j}\frac{\partial m_{ij}\left(\alpha_{i0}^j,\theta_0\right)}{\partial \alpha'}\left\{\left(\widehat{\alpha}^j(\widehat{k}_i,\widehat{\theta})-\overline{\alpha}^j(\widehat{\theta},\xi_{i0})\right)-\left(\widehat{\alpha}^j(\widehat{k}_i,{\theta}_0)-\overline{\alpha}^j({\theta}_0,\xi_{i0})\right)\right\}\\
& =o_p\left(\|\widehat{\theta}-\theta_0\|\right)+O_p(\delta)=o_p\left(\frac{1}{\sqrt{NT}}\right)+O_p(\delta).\end{align*}

Combining, we obtain:
\begin{align*} &\widehat{M}-M_0=\frac{1}{Np}\sum_{i,j}\frac{\partial m_{ij}\left(\alpha^j_{i0},\theta_0\right)}{\partial \alpha'}\left(\widehat{\alpha}^j(\widehat{k}_i,\widehat{\theta})-\widehat{\alpha}^j(\widehat{k}_i,\theta_0)\right)\\
&{+}\frac{1}{Np}\sum_{i,j}\frac{\partial m_{ij}\left(\alpha^j_{i0},\theta_0\right)}{\partial \alpha'}\left(\widehat{\alpha}^j(\widehat{k}_i,\theta_0){-}\alpha_{i0}^j\right){+}\frac{1}{Np}\sum_{i,j}\frac{\partial m_{ij}\left(\alpha^j_{i0},\theta_0\right)}{\partial \theta'}\left(\widehat{\theta}{-}\theta_0\right) {+}O_p\left(\delta\right)\\
&=\frac{1}{Np}\sum_{i,j}\frac{\partial m_{ij}\left(\alpha^j_{i0},\theta_0\right)}{\partial \alpha'}\left(\overline{\alpha}^j(\widehat{\theta},\xi_{i0})-\overline{\alpha}^j({\theta}_0,\xi_{i0})\right)\\
&+\frac{1}{Np}\sum_{i,j}\mathbb{E}_{\xi_{i0},\lambda_0}\left[\frac{\partial m_{ij}\left(\alpha^j_{i0},\theta_0\right)}{\partial \alpha'}\right]\mathbb{E}_{\xi_{i0},\lambda_0}\left[-v_{ij}^{\alpha}(\alpha_{i0}^j,\theta_0)\right]^{-1}v_{ij}(\alpha_{i0}^j,\theta_0)\\
&+\frac{1}{Np}\sum_{i,j}\frac{\partial m_{ij}\left(\alpha^j_{i0},\theta_0\right)}{\partial \theta'}\left(\widehat{\theta}-\theta_0\right) +O_p\left(\delta\right)+o_p\left(\frac{1}{\sqrt{NT}}\right).\end{align*}

The result comes from expanding $\overline{\alpha}^j(\widehat{\theta},\xi_{i0})$ around $\theta_0$, and then substituting $\widehat{\theta}-\theta_0$ by its influence function.\end{proof}

\subsection{Two-way GFE}

\noindent We have the following lemma, whose proof is analogous to that of Lemma \ref{theo1}. 

\begin{lemma}\label{theo1_2way} Suppose that there exist random vectors $h_i=\frac{1}{T}\sum_{t}h(Y_{it},X_{it})$ and $w_t=\frac{1}{N}\sum_{i}w(Y_{it},X_{it})$, with fixed dimensions, and Lipschitz-continuous functions $\varphi$ and $\phi$, such that $h_i=\varphi(\xi_{i0})+o_p(1)$, $\frac{1}{N}\sum_{i}\|h_i-\varphi(\xi_{i0})\|^2=O_p\left(1/T\right)$, $w_t=\phi(\lambda_{t0})+o_p(1)$, and $\frac{1}{T}\sum_{t}\|w_t-\phi(\lambda_{t0})\|^2=O_p\left(1/N\right)$ as $N,T$ tend to infinity. Then we have, as $N,T,K$ tend to infinity: $\frac{1}{N}\sum_{i} \|\widehat{h}(\widehat{k}_i)-\varphi(\xi_{i0})\|^2=O_p\left(\frac{1}{T}\right)+O_p\left(B_{\xi}(K)\right)$, and, as $N,T,L$ tend to infinity: $\frac{1}{T}\sum_{t} \|\widehat{w}(\widehat{l}_t)-\phi(\lambda_{t0})\|^2=O_p\left(\frac{1}{N}\right)+O_p\left(B_{\lambda}(L)\right)$,	where $B_{\lambda}(L)$ is defined analogously to $B_{\xi}(K)$.
	
\end{lemma}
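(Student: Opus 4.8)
The plan is to recognize that the statement consists of two parallel claims --- one for the individual classification (\ref{eq_firststep}) driven by the time-averages $h_i$, and one for the time classification (\ref{eq_firststep2}) driven by the cross-sectional averages $w_t$ --- each of which is an instance of the argument already used to prove Lemma \ref{theo1}. The first claim is in fact literally Lemma \ref{theo1}: the two maintained hypotheses $h_i=\varphi(\xi_{i0})+o_p(1)$ and $\frac{1}{N}\sum_i\|h_i-\varphi(\xi_{i0})\|^2=O_p(1/T)$, together with Lipschitz-continuity of $\varphi$, are exactly the ingredients that the proof of Lemma \ref{theo1} uses (note that proof never invokes the injectivity map $\psi$ from Assumption \ref{ass_inj}, only the mean-square rate and the Lipschitz bound). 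Hence $\frac{1}{N}\sum_i\|\widehat{h}(\widehat{k}_i)-\varphi(\xi_{i0})\|^2=O_p(1/T)+O_p(B_\xi(K))$ follows directly.

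The second claim I would establish by replaying the same argument with the roles of the cross-section and the time series exchanged: $i\mapsto t$, $N\mapsto T$, $K\mapsto L$, $\xi_{i0}\mapsto\lambda_{t0}$, $\varphi\mapsto\phi$, $h_i\mapsto w_t$, and $\widehat{k}_i\mapsto\widehat{l}_t$. Concretely, I would set $B_{\phi(\lambda)}(L)={\limfunc{min}}_{(\widetilde{w},\{l_t\})}\frac{1}{T}\sum_t\|\phi(\lambda_{t0})-\widetilde{w}(l_t)\|^2$ and let $(\underline{w},\{\underline{l}_t\})$ attain this minimum. By optimality of the kmeans solution in (\ref{eq_firststep2}), $\sum_t\|w_t-\widehat{w}(\widehat{l}_t)\|^2\le\sum_t\|w_t-\underline{w}(\underline{l}_t)\|^2$ almost surely, so applying the triangle inequality twice, exactly as in Lemma \ref{theo1}, gives
\begin{equation*}
\frac{1}{T}\sum_{t}\big\|\phi(\lambda_{t0})-\widehat{w}(\widehat{l}_t)\big\|^2\le 4\,B_{\phi(\lambda)}(L)+\frac{6}{T}\sum_{t}\big\|w_t-\phi(\lambda_{t0})\big\|^2.
\end{equation*}
The maintained rate $\frac{1}{T}\sum_t\|w_t-\phi(\lambda_{t0})\|^2=O_p(1/N)$ controls the second term, and Lipschitz-continuity of $\phi$ (with constant $\tau_\phi$, say) gives $B_{\phi(\lambda)}(L)\le\tau_\phi^2\,B_\lambda(L)$, where $B_\lambda(L)={\limfunc{min}}_{(\widetilde{\lambda},\{l_t\})}\frac{1}{T}\sum_t\|\lambda_{t0}-\widetilde{\lambda}(l_t)\|^2$ is the time-side approximation error defined analogously to (\ref{eq_approx_error}). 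Combining yields $\frac{1}{T}\sum_t\|\widehat{w}(\widehat{l}_t)-\phi(\lambda_{t0})\|^2=O_p(1/N)+O_p(B_\lambda(L))$.

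I do not anticipate a substantive obstacle here, since the two classification problems (\ref{eq_firststep}) and (\ref{eq_firststep2}) are separate optimization problems operating on disjoint objects, so there is no cross-term or interaction to control; the argument decouples cleanly into the two symmetric halves. The only point that warrants a careful check is that the averaging rate is assigned correctly under the role reversal: on the individual side the $O_p(1/T)$ arises because $h_i$ is a time-average over $T$ periods, whereas on the time side the analogous noise level is $O_p(1/N)$ because $w_t$ is a cross-sectional average over $N$ individuals. Once this correspondence is fixed, the mechanical bound above goes through verbatim, and the stated double conclusion follows.
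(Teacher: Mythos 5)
Your proposal is correct and takes essentially the same approach as the paper: the paper's own proof of this lemma consists of the single remark that the argument is analogous to that of Lemma \ref{theo1}, and your write-up makes that analogy explicit. You correctly note that the proof of Lemma \ref{theo1} uses only the mean-square rate and the Lipschitz continuity of $\varphi$ (never the injectivity map $\psi$), and the time-side claim follows by the symmetric substitution $i\mapsto t$, $N\mapsto T$, $K\mapsto L$, $\varphi\mapsto\phi$, $\widehat{k}_i\mapsto\widehat{l}_t$, with the cross-sectional noise rate $O_p(1/N)$ replacing $O_p(1/T)$, exactly as you argue.
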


For all $\theta$, $\xi$, and $\lambda$, let $\overline{\alpha}(\theta,\xi,\lambda)=\limfunc{argmax}_{\alpha}\,\mathbb{E}_{\xi_{i0}=\xi,\,\lambda_{t0}=\lambda}(\ell_{it}(\alpha,\theta))$. In addition, let $\xi_0=(\xi_{10}',...,\xi_{N0}')'$.

\begin{assumption}{(regularity, two-way)} \label{ass_regu_2way}

\begin{enumerate}[itemsep=-3pt,label=(\roman*),ref=\roman*,topsep=0pt]
		\item $(Y_{it}',X_{it}')'$, $i=1,..,N$, $t=1,...,T$, are i.i.d. given $\xi_0$ and $\lambda_0$, $\xi_{i0}$ are i.i.d., and $\lambda_{t0}$ are i.i.d.; $\ell_{it}(\alpha,\theta)$ is three times differentiable in $(\theta,\alpha)$; $\Theta$ is compact, the spaces for $\xi_{i0}$ and $\lambda_{t0}$ are compact, and $\theta_0$ belongs to the interior of $\Theta$.\label{ass_regu_2way_i}

		\item $N,T,K,L$ tend jointly to infinity; $\sup_{\xi,\lambda,\alpha,\theta}\, |\mathbb{E}_{\xi_{i0}=\xi,\lambda_{t0}=\lambda}(\ell_{it}(\alpha,\theta))|=O(1)$, and similarly for the first three derivatives of $\ell_{it}$; the minimum (resp., maximum) eigenvalue of $(-\frac{\partial^2 \ell_{it}(\alpha,\theta)}{\partial\alpha\partial{\alpha}^{\prime}})$ is bounded away from zero (resp., infinity) with probability one uniformly in $i,t,\alpha,\theta$, and the third derivatives of $\ell_{it}(\alpha,\theta)$ are $O_p(1)$, uniformly in $i,t,\alpha,\theta$; $\frac{1}{NT}\sum_{i,t}[\ell_{it}(\alpha_{it0},\theta_0)-\mathbb{E}_{\xi_{i0},\lambda_{t0}}(\ell_{it}(\alpha_{{it}0},\theta_0))]^2=O_p(1)$, and similarly for the first three derivatives of $\ell_{it}$.\label{ass_regu_2way_ii}

		\item 
		$\inf_{\xi,\lambda,\theta}\, \mathbb{E}_{\xi_{i0}=\xi,\, \lambda_{t0}=\lambda}(-\frac{\partial^2 \ell_{it}(\overline{\alpha}(\theta,\xi,\lambda),\theta)}{\partial\alpha\partial{\alpha}^{\prime}})>0$; $\mathbb{E}\left[\frac{1}{NT}\sum_{i,t}\ell_{it}(\overline{\alpha}(\theta,\xi_{i0},\lambda_{t0}),\theta)\right]$ has a unique maximum at $\theta_0$ on $\Theta$, and its second derivative is $-H<0$.  \label{ass_regu_2way_iii}

		\item $\frac{\partial}{\partial\xi'}\big|_{\widetilde{\xi}}\,\mathbb{E}_{\xi_{i0}=\xi, \lambda_{t0}=\lambda}(\limfunc{vec}\frac{\partial^2 \ell_{it}(\alpha,\theta_0)}{\partial \theta\partial{\alpha}^{\prime}}){=}O(1)$;	$\frac{\partial}{\partial\lambda'}\big|_{\widetilde{\lambda}}\,\mathbb{E}_{\xi_{i0}=\xi, \lambda_{t0}=\lambda}(\limfunc{vec}\frac{\partial^2 \ell_{it}(\alpha,\theta_0)}{\partial \theta\partial{\alpha}^{\prime}}){=}O(1)$;\\ $\frac{\partial}{\partial\xi'}\big|_{\widetilde{\xi}}\,\mathbb{E}_{\xi_{i0}=\xi, \lambda_{t0}=\lambda}(\limfunc{vec}\frac{\partial^2 \ell_{it}(\alpha,\theta_0)}{\partial \alpha\partial{\alpha}^{\prime}}){=}O(1)$; $\frac{\partial}{\partial\lambda'}\big|_{\widetilde{\lambda}}\,\mathbb{E}_{\xi_{i0}=\xi, \lambda_{t0}=\lambda}(\limfunc{vec}\frac{\partial^2 \ell_{it}(\alpha,\theta_0)}{\partial \alpha\partial{\alpha}^{\prime}}){=}O(1)$;\\ $	\frac{\partial}{\partial\xi'}\big|_{\widetilde{\xi}}\mathbb{E}_{\xi_{i0}=\xi, \lambda_{t0}=\lambda}(\frac{\partial \ell_{it}(\overline{\alpha}(\theta,\xi,\lambda),\theta)}{\partial\alpha}){=}O(1)${;}\,\,\, $\frac{\partial}{\partial\lambda'}\big|_{\widetilde{\lambda}}\mathbb{E}_{\xi_{i0}=\xi, \lambda_{t0}=\lambda}(\frac{\partial \ell_{it}(\overline{\alpha}(\theta,\xi,\lambda),\theta)}{\partial\alpha}){=}O(1)${,}\\ uniformly in $\xi,\widetilde{\xi},\lambda,\widetilde{\lambda},\alpha,\theta$.
		
		\label{ass_regu_2way_iv}

		\item ${\mathbb{E}_{h_i{=}h,\xi_{i0}{=}\xi, w_t{=}w,\lambda_{t0}{=}\lambda}}(\frac{\partial \ell_{it}(\overline{\alpha}(\theta,\xi,\lambda),\theta)}{\partial\alpha})$, ${\mathbb{E}_{h_i{=}h,\xi_{i0}{=}\xi, w_t{=}w,\lambda_{t0}{=}\lambda}}(\limfunc{vec}\frac{\partial}{\partial \theta'}\big|_{\theta_0}\frac{\partial \ell_{it}(\overline{\alpha}(\theta,\xi,\lambda),\theta)}{\partial\alpha})$ are twice differentiable with respect to $h$ and $w$, with first and second derivatives that are uniformly bounded in $h\in {\cal{H}}$, $w\in{\cal{W}}$, $\xi$, $\lambda$, and $\theta\in \Theta$, where ${\cal{H}}$ and ${\cal{W}}$ are the supports of $h_i$ and $w_t$; $\|{\limfunc{Var}}_{h_i=h,\xi_{i0}=\xi, w_t=w,\lambda_{t0}=\lambda}(\frac{\partial\ell_{it} (\overline{\alpha}(\theta,\xi,\lambda),\theta)}{\partial\alpha})\|$ and $\|{\limfunc{Var}}_{h_i=h,\xi_{i0}=\xi, w_t=w,\lambda_{t0}=\lambda}(\limfunc{vec}\frac{\partial}{\partial \theta'}\big|_{\theta_0}\frac{\partial \ell_{it}(\overline{\alpha}(\theta,\xi,\lambda),\theta)}{\partial\alpha})\|$ are $O(1)$,  uniformly in $h$, $w$, $\xi$, $\lambda$, $\theta$.\label{ass_regu_2way_v}

	\end{enumerate} 
	
\end{assumption}

\begin{theorem}$\quad$\label{theo2_2way}Let the conditions in Lemma \ref{theo1_2way} hold. Suppose that $B_{\xi}(K)=O_p(K^{-\frac{2}{d}})$ and $B_{\lambda}(L)=O_p(L^{-\frac{2}{d_{\lambda}}})$. Suppose that $\alpha$ and $\mu$ are Lipschitz-continuous in both arguments, and that there exist two Lipschitz-continuous functions $\psi$ and $\varPsi$ such that $\xi_{i0}=\psi(\varphi(\xi_{i0}))$ and $\lambda_{t0}=\varPsi(\phi(\lambda_{t0}))$. Lastly, let Assumption \ref{ass_regu_2way} hold. Then, as $N,T,K,L$ tend to infinity such that $KL/(NT)$ tends to zero, we have:
	\begin{align*}
	\widehat{\theta}=\theta_0{+} H^{-1}\frac{1}{N}\sum_{i}s_{i} {+}O_p\left(\frac{1}{T}{+}\frac{1}{N}{+}\frac{KL}{NT}\right){+}O_p\left(K^{-\frac{2}{d}}{+}L^{-\frac{2}{d_{\lambda}}}\right){+}o_p\left(\frac{1}{\sqrt{NT}}\right).
	\end{align*} 	
\end{theorem}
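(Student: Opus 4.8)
The plan is to follow the proof template of Theorems \ref{theo2} and \ref{theo2_TV}, treating the clustering of time periods into $L$ groups as a second, symmetric layer of discretization alongside the clustering of individuals into $K$ groups. Write $\widehat{\alpha}(k,l,\theta)=\limfunc{argmax}_{\alpha}\sum_{i=1}^N\sum_{t=1}^T\boldsymbol{1}\{\widehat{k}_i=k\}\boldsymbol{1}\{\widehat{l}_t=l\}\,\ell_{it}(\alpha,\theta)$ for the group-$(k,l)$ effect, and let $\overline{\alpha}(\theta,\xi,\lambda)$ be as in Assumption \ref{ass_regu_2way}. Set $\delta=\frac{1}{T}+\frac{1}{N}+\frac{KL}{NT}+K^{-\frac{2}{d}}+L^{-\frac{2}{d_{\lambda}}}$. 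The key conceptual point is that the error of the estimated effects now has three sources rather than two: the pure approximation error of discretizing $\xi_{i0}$ into $K$ groups and $\lambda_{t0}$ into $L$ groups (contributing $K^{-2/d}+L^{-2/d_{\lambda}}$ through Lemma \ref{theo1_2way} and the Lipschitz maps $\psi,\varPsi$), the estimation error in the moments $h_i$ and $w_t$ (contributing $1/T$ and $1/N$), and the incidental-parameter noise from fitting $KL$ group-time effects with $NT$ observations (contributing $KL/(NT)$, since each group $(k,l)$ pools roughly $NT/(KL)$ cells).

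The skeleton then proceeds exactly as in the body of the paper. First I would establish the two-way analog of Lemma \ref{lem_sup1}, namely $\frac{1}{NT}\sum_{i,t}\|\widehat{\alpha}(\widehat{k}_i,\widehat{l}_t,\theta)-\overline{\alpha}(\theta,\xi_{i0},\lambda_{t0})\|^2=O_p(\delta)$ pointwise in $\theta$, and $o_p(1)$ uniformly in $\theta$; this uses the Lipschitz continuity of $\overline{\alpha},\psi,\varPsi$, the uniform concavity in Assumption \ref{ass_regu_2way}(\ref{ass_regu_2way_ii}), and the central score bound $\frac{1}{NT}\sum_{i,t}\|\overline{v}(\widehat{k}_i,\widehat{l}_t,\theta)\|^2=O_p(\delta)$, where $\overline{v}(k,l,\theta)$ is the within-group-$(k,l)$ mean of the scores evaluated at $\overline{\alpha}(\theta,\xi_{i0},\lambda_{t0})$. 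Consistency of $\widehat{\theta}$ follows by the same Taylor argument used after Lemma \ref{lem_sup1}. Next I would prove the two-way analogs of the two central expansions (\ref{maineq1}) and (\ref{maineq2}): that the GFE score and its $\theta$-derivative coincide, up to $O_p(\delta)$ and $o_p(1)$, with those of the target log-likelihood $\frac{1}{NT}\sum_{i,t}\ell_{it}(\overline{\alpha}(\theta,\xi_{i0},\lambda_{t0}),\theta)$. This reuses the arguments of Lemmas \ref{lem_sup2} and \ref{lem_sup3} almost verbatim, with the double label $(k,l)$ replacing $k$ and cells $(i,t)$ replacing the $(i,j)$ terms; the identity $\frac{\partial\overline{\alpha}(\theta_0,\xi,\lambda)}{\partial\theta'}=[\mathbb{E}_{\xi,\lambda}(-v_{it}^{\alpha})]^{-1}\mathbb{E}_{\xi,\lambda}(v_{it}^{\theta})'$ again drives the cancellations. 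A final Taylor expansion of the GFE score, together with these two facts and the identification of $s_i$ and $H$ from the target log-likelihood, gives the stated expansion once $KL/(NT)\to 0$.

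The main obstacle is the two-way score bound $\frac{1}{NT}\sum_{i,t}\|\overline{v}(\widehat{k}_i,\widehat{l}_t,\theta)\|^2=O_p(\delta)$, and specifically its incidental-parameter piece $O_p(KL/(NT))$. In the one-way argument leading to (\ref{rate_vbar}), the centered scores $\zeta_{ij}$ were \emph{exactly} uncorrelated across individuals conditional on the moments and types, because the labels $\widehat{k}_i$ were functions of individual-specific $h_i$ and the data were independent across $i$. Here both labels $\widehat{k}_i$ and $\widehat{l}_t$ enter and are coupled: $h_i$ aggregates cell $(i,t)$ over $t$ while $w_t$ aggregates it over $i$, so conditioning on all the moments does not leave the cell scores exactly uncorrelated. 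The plan is to center on $\zeta_{it}=v_{it}-\mathbb{E}_{h_i,\xi_{i0},w_t,\lambda_{t0}}(v_{it})$ and to split the bound into a conditional-mean part and a residual part. For the conditional-mean part I would expand $\mathbb{E}_{h_i,\xi_{i0},w_t,\lambda_{t0}}(v_{it})$ to second order in $\varepsilon_i^h=h_i-\varphi(\xi_{i0})$ and $\varepsilon_t^w=w_t-\phi(\lambda_{t0})$, using Assumption \ref{ass_regu_2way}(\ref{ass_regu_2way_v}) and that the score has zero mean given the types; this yields the $O_p(1/T+1/N)$ contributions. For the residual part, the delicate step is to show that, under the static, fully independent design of Assumption \ref{ass_regu_2way}(\ref{ass_regu_2way_i}), the conditional covariance of $\zeta_{it}$ and $\zeta_{i't'}$ is negligible: it is of order $1/T$ when the two cells share a row, $1/N$ when they share a column, and $1/(NT)$ when they share neither, the last reflecting the weak coupling that a single shared cell induces through $h$ and $w$. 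Summing these over all within-group pairs shows every off-diagonal contribution is dominated by the diagonal, which alone produces $O_p(KL/(NT))$ after averaging over the $KL$ groups. Making this covariance accounting precise --- tracking exactly which cell pairs contribute and quantifying the weak two-way coupling --- is the crux of the proof, and it is precisely the static independence in Assumption \ref{ass_regu_2way}(\ref{ass_regu_2way_i}) that keeps it manageable.
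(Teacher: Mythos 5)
Your proposal follows the paper's own proof almost step for step: the same $\delta=\frac{1}{T}+\frac{1}{N}+\frac{KL}{NT}+K^{-\frac{2}{d}}+L^{-\frac{2}{d_{\lambda}}}$, the same reduction of consistency to the two-way score bound $\frac{1}{NT}\sum_{i,t}\|\overline{v}(\widehat{k}_i,\widehat{l}_t,\theta)\|^2=O_p(\delta)$ in (\ref{rate_vbar_2way}), the same splitting of $v_{it}$ into the conditional mean $\rho(h_i,\xi_{i0},w_t,\lambda_{t0},\theta)$ (expanded in $\varepsilon_i^h,\varepsilon_t^w$ to yield $O_p(1/T+1/N)$) plus a residual $\zeta_{it}$, and the same closing steps via two-way analogs of Lemmas \ref{lem_sup2} and \ref{lem_sup3} and a final Taylor expansion of the score. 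The one place where you genuinely depart from the paper is the residual part of the score bound, and there your treatment is in fact the more careful one. The paper kills the cross terms in $\mathbb{E}[\|\overline{\zeta}(k,l,\theta)\|^2]$ outright, asserting that the $\zeta_{it}$ are independent across cells with zero mean conditional on $h_1,\dots,h_N$, $w_1,\dots,w_T$, $\xi_0$, $\lambda_0$, on the grounds that observations are independent across $i$ and $t$ given the types. As you observe, that assertion is too strong: $\zeta_{it}$ is centered only at its mean given $(h_i,\xi_{i0},w_t,\lambda_{t0})$, and conditioning on row and column averages couples the cells, so neither the zero conditional mean nor the conditional uncorrelatedness holds exactly (a $2\times 2$ array of i.i.d.\ Gaussians with $h_i,w_t$ the row and column means already gives $\mathbb{E}[\zeta_{11}'\zeta_{22}]\neq 0$). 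Your covariance accounting is the right repair, and your claimed orders are correct: pairs sharing a row contribute $O(1/T)$, pairs sharing a column $O(1/N)$, disjoint pairs $O(1/(NT))$, so within a group of roughly $(N/K)(T/L)$ cells the off-diagonal contributions to $\|\overline{\zeta}(k,l,\theta)\|^2$ are $O(K/(NT))$, $O(L/(NT))$ and $O(1/(NT))$ respectively, all dominated by the diagonal $O(KL/(NT))$ term, so (\ref{rate_vbar_2way}) and hence the theorem survive. The only caveat is that making those off-diagonal orders rigorous requires some regularity of \emph{joint} conditional moments across cells, slightly beyond what Assumption \ref{ass_regu_2way}(\ref{ass_regu_2way_v}) literally provides for single cells; the paper's shortcut sidesteps this by (implicitly) treating the neglected terms as exactly zero, which your accounting shows is harmless in order of magnitude but not literally true.
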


\begin{proof}

The proof closely follows the steps of that of Theorem \ref{theo2_TV}. Here we simply highlight the main differences. Let $\delta=\frac{1}{T}+\frac{1}{N}+\frac{KL}{NT}+K^{-\frac{2}{d}}+L^{-\frac{2}{d_{\lambda}}}$. To show \underline{consistency}, a key step is to show, for all $\theta\in\Theta$: 
\begin{align}\label{rate_vbar_2way} &\frac{1}{NT}\sum_{i,t}\left\| \overline{v}(\widehat{k}_i,\widehat{l}_t,\theta)\right\|^2=O_p\left(\delta\right), 
\end{align}
where $\overline{v}(k,l,\theta)$ denotes the mean of $v_{it}(\overline{\alpha}(\theta,\xi_{i0},\lambda_{t0}),\theta)$ in the intersection of groups $\widehat{k}_i{=}k$ and $\widehat{l}_t{=}l$. Let: $\rho(h,\xi,w,\lambda,\theta)=\mathbb{E}_{h_i=h,\xi_{i0}=\xi, w_t=w,\lambda_{t0}=\lambda}( v_{it}(\overline{\alpha}(\theta,\xi,\lambda),\theta))$, and let, for all $i,t,\theta$: $\zeta_{it}(\theta)=v_{it}(\overline{\alpha}(\theta,\xi_{i0},\lambda_{t0}),\theta)- \rho(h_i,\xi_{i0},w_t,\lambda_{t0},\theta)$. Proceeding as in the proof of Lemma \ref{lem_sup1} we have:
$$\frac{1}{NT}\sum_{i,t}\| \rho(h_i,\xi_{i0},w_t,\lambda_{t0},\theta)\|^2=O_p\left(\frac{1}{T}\right)+O_p\left(\frac{1}{N}\right).$$
We thus only need to bound:
\begin{align*}
&\mathbb{E}\left[\frac{1}{NT}\sum_{i,t} \|\overline{\zeta}(\widehat{k}_i,\widehat{l}_t,\theta)\|^2\right]{=}\frac{1}{NT}\sum_{k,\ell} \mathbb{E}_{k\ell}\left[\mathbb{E}_{h_i,\xi_{i0},w_t,\lambda_{t0}}\left(\zeta_{it}(\theta)'\zeta_{it}(\theta)\right)\right],
\end{align*}
where we have used that observations are independent across $i$ and $t$ given $\xi_0$ and $\lambda_0$, and $\mathbb{E}_{k\ell}$ denotes a mean in groups $\widehat{k}_i=k$ and $\widehat{l}_t=l$. To bound this quantity, we use part $(\ref{ass_regu_2way_v})$ in Assumption \ref{ass_regu_2way}. We thus obtain (\ref{rate_vbar_2way}).

Similarly to the proof of Lemma \ref{lem_sup2}, \underline{we then show}:
\begin{equation}\label{eq_score_toshow_2way}
\frac{1}{NT}\sum_{i,t}\left\{v_{it}^{\theta}\left(\widehat{\alpha}(\widehat{k}_i,\widehat{l}_t){-}\alpha_{it0}\right){+}\mathbb{E}_{\xi_{i0}, \lambda_{t0}}\left(v_{it}^{\theta}\right)\left[\mathbb{E}_{\xi_{i0}, \lambda_{t0}}\left(v_{it}^{\alpha}\right)\right]^{-1}v_{it}\right\}{=}O_p\left(\delta\right),
\end{equation}
where we omit references to $\theta_0$ and $\alpha_{it0}$. \underline{The first key term} is:
$$A_3=\frac{1}{NT}\sum_{i,t}\mathbb{E}_{\xi_{i0}, \lambda_{t0}}\left(v_{it}^{\theta}\right)\left[\mathbb{E}_{\xi_{i0}, \lambda_{t0}}\left(v_{it}^{\alpha}\right)\right]^{-1}(-v_{it}^{\alpha})\left((-v_{it}^{\alpha})^{-1}v_{it}-\widetilde{v}(\widehat{k}_i,\widehat{l}_t)\right),$$
where $\widetilde{v}$ is defined analogously to the proof of Lemma \ref{lem_sup2}. To show that $A_3=O_p(\delta)$, we use that the $\zeta_{it}(\theta_0)$ are independent across $i$ and $t$, with zero mean conditional on $h_1,...,h_N$, $w_1,...,w_T$, $\xi_{0}$, and $\lambda_{0}$. 

Let $\pi_{it}'=v_{it}^{\theta}\left(v_{it}^{\alpha}\right)^{-1}-\mathbb{E}_{\xi_{i0},\lambda_{t0}}\left(v_{it}^{\theta}\right)\left[\mathbb{E}_{\xi_{i0},\lambda_{t0}}\left(v_{it}^{\alpha}\right)\right]^{-1}$. \underline{The second key term} is:
\begin{align*}
B_3&=\frac{1}{NT}\sum_{i,t}\pi_{it}'v_{it}^{\alpha}\left(\widetilde{\alpha}(\widehat{k}_i,\widehat{l}_t)-\alpha_{it0}\right)\\&=\frac{1}{NT}\sum_{i,t}\pi_{it}'v_{it}^{\alpha}\left({\alpha}^*(\widehat{k}_i,\widehat{l}_t)-\alpha_{it0}\right)+\frac{1}{NT}\sum_{i,t}\pi_{it}'v_{it}^{\alpha}\left(\widetilde{\alpha}(\widehat{k}_i,\widehat{l}_t)-{\alpha}^*(\widehat{k}_i,\widehat{l}_t)\right),
\end{align*}
where $\widetilde{\alpha}(k,l)$ and ${\alpha}^*(k,l)$ are defined analogously to the proof of Lemma \ref{lem_sup2}. To show that $B_3=O_p(\delta)$, we use that $\tau_{it}=\pi_{it}'v_{it}^{\alpha}$ are independent across $i$ and $t$ with zero mean given $\xi_{0},\lambda_{0}$. 

\underline{The final step}, as in the proof of Lemma \ref{lem_sup3}, is to show that:
\begin{equation}\frac{1}{NT}\sum_{i,t}\left\|\frac{\partial \widehat{\alpha}(\widehat{k}_i,\widehat{l}_t,\theta_0)}{\partial \theta'}-\frac{\partial \overline{\alpha}(\theta_0,\xi_{i0},\lambda_{t0})}{\partial \theta'}\right\|^2=o_p\left(1\right).\label{eq_der_alpha_2way}\end{equation}
The proof of (\ref{eq_der_alpha_2way}) follows similar arguments to the proof of Lemma \ref{lem_sup3}. \end{proof}

\subsection{GFE based on conditional moments}

	\begin{assumption}{(heterogeneity, conditional case)}\label{ass_alphacond}
		
		\noindent There exist vectors $\xi_{i0}$ of fixed dimension $d$, and $\nu_{i0}$ of dimension $d_{\nu}$, and functions $\alpha$ and $\mu$ Lipschitz-continuous in $\xi$, such that $\alpha_{i0}={\alpha}(\xi_{i0})$ and $\mu_{i0}={\mu}(\xi_{i0},\nu_{i0})$. 
		
	\end{assumption}
	
	Differently from Assumption \ref{ass_alpha}, here $\mu_{i0}$ depends on an additional heterogeneity component $\nu_{i0}$, and by Assumption \ref{ass_inj} the moment $h_i$ is only injective for $\xi_{i0}$.
	
		\begin{assumption}{(regularity, conditional case)} \label{ass_regucond}

	\begin{enumerate}[itemsep=-3pt,label=(\roman*),ref=\roman*,topsep=0pt]
				\item $(Y_i',X_i',\xi_{i0}',\nu_{i0}',h_i')'$ are i.i.d.; $(Y_{it}',X_{it}')'$ are stationary for all $i$; $\ell_{it}(\alpha,\theta)$ is three times differentiable in both its arguments for all $i,t$; and $\Theta$ is compact, the space for $\alpha_{i0}$ is compact, and $\theta_0$ belongs to the interior of $\Theta$.\label{ass_regucond_i}

				\item $N,T,K$ tend jointly to infinity; $\sup_{\xi,\nu,\alpha,\theta}\, |\mathbb{E}_{\xi_{i0}=\xi,\nu_{i0}=\nu}(\ell_{it}(\alpha,\theta))|=O(1)$, and similarly for the first three derivatives of $\ell_{it}$; $ \inf_{\xi,\nu,\alpha,\theta}\, \mathbb{E}_{\xi_{i0}=\xi,\nu_{i0}=\nu}(-\frac{\partial^2 \ell_{it}(\alpha,\theta)}{\partial\alpha\partial{{\alpha}^\prime}})$ is positive definite; and $\max_{i}\,\sup_{\alpha,\theta}\,\left|\ell_{i}(\alpha,\theta)-\mathbb{E}_{\xi_{i0},\nu_{i0}}\left(\ell_{i}(\alpha,\theta)\right)\right|=o_p\left(1\right)$, and similarly for the first three derivatives of $\ell_{i}$.
				\label{ass_regucond_iia}

				\item 
				$ \inf_{\xi,\nu,\theta}\, \mathbb{E}_{\xi_{i0}=\xi,\nu_{i0}=\nu}(-\frac{\partial^2 \ell_{it}(\overline{\alpha}(\theta,\xi),\theta)}{\partial\alpha\partial{\alpha}^{\prime}})>0$; $\mathbb{E}[ \frac{1}{T}\sum_{t=1}^T\ell_{it}(\overline{\alpha}(\theta,\xi_{i0}),\theta)]$ has a unique maximum at $\theta_0$ on $\Theta$, and its matrix of second derivatives is $-H^{\rm cond}<0$; and		$\sup_{\theta}\frac{1}{NT}\sum_{i,t}\|\frac{\partial^2\ell_{it}(\overline{\alpha}(\theta,\xi_{i0}),\theta)}{\partial \theta\partial \alpha'}\|^2=O_p(1)$.

				\label{ass_regucond_iii}

				\item
				
				$\sup_{\widetilde{\xi},\alpha} \|\frac{\partial}{\partial\xi'}\big|_{\xi=\widetilde{\xi}}\mathbb{E}_{\xi_{i0}=\xi}(\limfunc{vec}\frac{\partial^2 \ell_{it}(\alpha,\theta_0)}{\partial \theta\partial{\alpha}^{\prime}})\|$; $ \sup_{\widetilde{\xi},\alpha} \|\frac{\partial}{\partial\xi'}\big|_{\xi=\widetilde{\xi}}\mathbb{E}_{\xi_{i0}=\xi}(\limfunc{vec}\frac{\partial^2 \ell_{it}(\alpha,\theta_0)}{\partial \alpha\partial{\alpha}^{\prime}})\|$; and $\sup_{\widetilde{\xi},\theta} \|\frac{\partial}{\partial\xi'}\big|_{\xi=\widetilde{\xi}}\mathbb{E}_{\xi_{i0}=\xi}(\frac{\partial \ell_{it}(\overline{\alpha}(\theta,\widetilde{\xi}),\theta)}{\partial\alpha})\|$ are $O(1)$.
				\label{ass_regucond_iv}

				\item 
				
				$\mathbb{E}_{h_i=h,\xi_{i0}=\xi}(\frac{\partial \ell_{it}(\overline{\alpha}(\theta,\xi),\theta)}{\partial\alpha})$ is twice differentiable with respect to $h$ and $\xi$, with first and second derivatives that are uniformly bounded in $\xi$, $h\in{\cal{H}}$, and $\theta\in \Theta$; and  $\|{\limfunc{Var}}_{h_i=h,\xi_{i0}=\xi}(\frac{\partial\ell_{i}(\overline{\alpha}(\theta,\xi),\theta)}{\partial\alpha})\|=O(1)$, uniformly in $\xi$, $h$ and $\theta$.\label{ass_regucond_v}
				
			\end{enumerate} 
			
		\end{assumption}

	\begin{corollary}$\quad$\label{theo2cond} Let the conditions of Lemmas \ref{theo1} and \ref{lemma_GL} hold. Let Assumptions \ref{ass_inj}, \ref{ass_alphacond}, and \ref{ass_regucond} hold. Let $K$ be given by (\ref{choice_K_eq}), with $\gamma=O(1)$. Then, as $N,T,K$ tend to infinity such that $T^{1+\frac{d}{2}}=O(N)$ we have:
		\begin{eqnarray}\label{2stepthetacond}
		\widehat{\theta}&=&\theta_0+ O_p\left(\frac{1}{T}\right)+O_p\left(\frac{1}{\sqrt{NT}}\right).
		\end{eqnarray}
		
	\end{corollary}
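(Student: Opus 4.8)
The plan is to follow the architecture of the proof of Theorem \ref{theo2_TV} rather than that of Theorem \ref{theo2}: although heterogeneity is time-invariant here (so $p=1$), the nuisance component $\nu_{i0}$ plays a role analogous to the factor process $\lambda_0$, in that it enters the covariates and hence the score, yet the moments $h_i$ are uninformative about it. I would set $\delta=\frac1T+\frac KN+K^{-\frac2d}$ and use $\overline{\alpha}(\theta,\xi)=\limfunc{argmax}_{\alpha}\,\mathbb{E}_{\xi_{i0}=\xi}(\ell_i(\alpha,\theta))$ as in Assumption \ref{ass_regucond}, noting the structural fact that drives everything: since the conditional density of $Y_i$ given $X_i$ is correctly specified with heterogeneity $\alpha_{i0}=\alpha(\xi_{i0})$ depending on $\xi_{i0}$ alone, we have $\overline{\alpha}(\theta_0,\xi_{i0})=\alpha_{i0}$, and at the truth the score $v_i(\alpha_{i0},\theta_0)$ is conditionally mean-zero given $(\xi_{i0},\nu_{i0})$. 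Crucially, both the kmeans classification and the approximation error involve only $\xi_{i0}$: by Lemmas \ref{theo1} and \ref{lemma_GL} together with the choice $\gamma=O(1)$ in (\ref{choice_K_eq}) --- which targets the dimension $d$ of $\xi_{i0}$ and not $d_\nu$ --- one has $\frac1N\sum_i\|\widehat{h}(\widehat{k}_i)-\varphi(\xi_{i0})\|^2=O_p(1/T)$ and $K\asymp T^{d/2}$.

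The core of the argument is the kmeans-to-score rate $\frac1N\sum_i\|\overline{v}(\widehat{k}_i,\theta)\|^2=O_p(\delta)$ for each $\theta$, the exact analog of (\ref{rate_vbar}), where $\overline{v}(k,\theta)$ is the within-group mean of $v_i(\overline{\alpha}(\theta,\xi_{i0}),\theta)$. I would prove it by the decomposition used in the proof of Lemma \ref{lem_sup1}: write $v_i=\rho_i+\zeta_i$ with $\rho_i=\mathbb{E}_{h_i,\xi_{i0}}(v_i)$. Because $\overline{\alpha}(\theta,\xi)$ integrates $\nu_{i0}$ out, $\mathbb{E}_{\xi_{i0}}(v_i)=0$, so a second-order Taylor expansion of $\rho_i$ about $\varphi(\xi_{i0})$ combined with Assumption \ref{ass_inj} gives $\frac1N\sum_i\|\rho_i\|^2=O_p(1/T)$. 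For the residual, the decisive observation is that $\zeta_i$ is mean-zero given $h_i$ alone --- by iterated expectations, even though it is \emph{not} mean-zero given $(h_i,\xi_{i0},\nu_{i0})$ --- while the data are i.i.d. across $i$; since the partition $\{\widehat{k}_i\}$ is a function of $h_1,\dots,h_N$ only, every cross-term in $\mathbb{E}[\frac1N\sum_i\|\overline{\zeta}(\widehat{k}_i,\theta)\|^2]$ vanishes and only the diagonal survives, which Assumption \ref{ass_regucond} (\ref{ass_regucond_v}) bounds by $O_p(K/N)$. This is precisely where $\nu_{i0}$ bites: within a group $\xi_{i0}$ is nearly constant but $\nu_{i0}$ is dispersed, so off the truth the conditional score mean varies with $\nu_{i0}$ and the within-group variance is $O(1)$ rather than shrinking in $T$.

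Given this rate, the remaining steps transcribe those of Theorem \ref{theo2_TV} with $p=1$. I would establish the analog of Lemma \ref{lem_sup1} (hence uniform consistency, which only requires $K/N\to 0$), then the two key expansions (\ref{maineq1}) and (\ref{maineq2}) via the analogs of Lemmas \ref{lem_sup2} and \ref{lem_sup3}; in each, the only nonstandard terms are again controlled by the ``$\zeta_i$ mean-zero given $h_i$'' device together with the conditional mean and variance bounds of Assumption \ref{ass_regucond} (\ref{ass_regucond_v}), yielding $O_p(\delta)$ and $o_p(1)$ respectively. Combining as in the final step of Theorems \ref{theo2}--\ref{theo2_TV} gives $\widehat{\theta}=\theta_0+(H^{\mathrm{cond}})^{-1}\frac1N\sum_i s_i+O_p(\delta)+o_p(1/\sqrt{NT})$, with $s_i$ as in (\ref{eq_score}) and $H^{\mathrm{cond}}$ as in Assumption \ref{ass_regucond} (\ref{ass_regucond_iii}). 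Finally I would insert the bandwidth: $\gamma=O(1)$ forces $K^{-2/d}=O_p(1/T)$ and $K\asymp T^{d/2}$, so under $T^{1+\frac d2}=O(N)$ we get $K/N\asymp T^{d/2}/N=O(1/T)$, whence $\delta=O_p(1/T)$; since $\frac1N\sum_i s_i=O_p(1/\sqrt{NT})$, the expansion collapses to (\ref{2stepthetacond}).

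The main obstacle is isolating and taming the new $O_p(K/N)$ contribution created by the nuisance heterogeneity. The delicate point is that the natural conditioning set for a mean-zero score, $(h_i,\xi_{i0},\nu_{i0})$, is finer than the set that kills the within-group cross-covariances, which is $\{h_i\}$ acting through the partition; the resolution is to condition the \emph{residual} $\zeta_i$ on $h_i$ only while expanding the \emph{mean} $\rho_i$ around $\varphi(\xi_{i0})$, and to accept an $O(1)$ (rather than $O(1/T)$) within-group variance. Showing that this term is the single new ingredient relative to Theorem \ref{theo2}, that the dimension governing it is $d$ and not $d_\nu$, and that it is exactly neutralized by $T^{1+\frac d2}=O(N)$, is the crux and is what distinguishes the conditional-moment argument from the baseline time-invariant case.
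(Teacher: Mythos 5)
Your proposal is correct and takes essentially the same route as the paper's own proof: the same $\delta=\frac{1}{T}+\frac{K}{N}+K^{-\frac{2}{d}}$, the same key step (the within-group score rate, proved by splitting $v_i$ into a conditional mean $\rho_i$ given $(h_i,\xi_{i0})$ --- Taylor-expanded around $\varphi(\xi_{i0})$ to get $O_p(1/T)$ --- plus a residual $\zeta_i$ that is conditionally mean-zero while the partition is measurable in the $h_i$'s, so that only the diagonal survives and part (\ref{ass_regucond_v}) of Assumption \ref{ass_regucond} delivers the $O_p(K/N)$ term), and the same closing arithmetic $K$ of order $T^{\frac{d}{2}}$, hence $K/N=O(1/T)$ under $T^{1+\frac{d}{2}}=O(N)$. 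The one substantive difference is that you assert the intermediate step as a full influence-function expansion with $s_i$ and $H^{\rm cond}$, whereas the paper deliberately proves only the weaker statement that the profile score at $\theta_0$ is $O_p(\delta)$, because in the conditional case the true influence function contains an additional score term reflecting within-group variation in $\nu_{i0}$ (see the linear-model expansion (\ref{eq1})); your stronger-looking claim survives only because that extra term has mean $O(1/T)$ and standard deviation $O(1/\sqrt{NT})$ and is therefore swallowed by your $O_p(\delta)$ slack, so it does not create a gap for the corollary as stated.
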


	\begin{proof}
		Let $\delta=\frac{1}{T}+\frac{K}{N}+K^{-\frac{2}{d}}$.\footnote{Note that if $K=\widehat{K}$ is given by (\ref{choice_K_eq}) with $\gamma{=}O(1)$, then $K{=}O(T^{\frac{d}{2}})$ and $\delta{=}O(\frac{1}{T}{+}\frac{T^{\frac{d}{2}}}{N})$, so if $T^{1+\frac{d}{2}}=O(N)$ then $\delta{=}O(\frac{1}{T})$.} 
	To show \underline{consistency}, the key step is to show: 
		\begin{align}\label{rate_vbar2} &\frac{1}{N}\sum_{i}\left\| \overline{v}(\widehat{k}_i,\theta)\right\|^2=O_p\left(\delta\right),\quad \forall\theta\in\Theta.
		\end{align}
		Let, for all $\theta,h,\xi$: $\rho(h,\xi,\theta)=\mathbb{E}_{h_i=h,\xi_{i0}=\xi}( v_i(\overline{\alpha}(\theta,\xi),\theta))$, and let, for all $i,\theta$: $\zeta_{i}(\theta)=v_i(\overline{\alpha}(\theta,\xi_{i0}),\theta)- \rho(h_i,\xi_{i0},\theta)$. One can show, using similar techniques to the proof of Lemma \ref{lem_sup1}, that: $\frac{1}{N}\sum_{i} \|\overline{\zeta}(\widehat{k}_i,\theta)\|^2=O_p(\frac{K}{N})$, and that this implies (\ref{rate_vbar2}).\footnote{Note that, in the case of Theorem \ref{theo2} (i.e., in the absence of additional heterogeneity $\nu_{i0}$), the left-hand side in (\ref{rate_vbar2}) is $O_p(\frac{1}{T})$.}
		
	\underline{We then show}: $\frac{1}{N}\sum_{i}\frac{\partial \ell_{i}(\widehat{\alpha}(\widehat{k}_i,\theta_0),\theta_0) }{\partial \theta}=O_p(\delta)$, which will follow from: \begin{equation}\label{eq_score_toshow2}
	\frac{1}{N}\sum_{i}v_i^{\theta}\left(\widehat{\alpha}(\widehat{k}_i)-\alpha_{i0}\right)=O_p\left(\delta\right),
	\end{equation}
	where from now on we omit references to $\theta_0$ and $\alpha_{i0}$. We have:
	\begin{eqnarray*}
	\frac{1}{N}\sum_{i}v_i^{\theta}\left(\widehat{\alpha}(\widehat{k}_i)-\alpha_{i0}\right)=\frac{1}{N}\sum_{i}v_i^{\theta}\left(\widetilde{\alpha}(\widehat{k}_i)-\alpha_{i0}+\widetilde{v}(\widehat{k}_i)\right)+O_p\left(\delta\right),
	\end{eqnarray*}
	where $\widetilde{\alpha}(k)$ and $\widetilde{v}(k)$ are as in the proof of Lemma \ref{lem_sup2}; that is, denoting $w_i=(-v_i^{\alpha})$, we have $\widetilde{\alpha}(k)=\overline{w}(k)^{-1}\overline{ w \alpha}(k)$ and $\widetilde{v}(k)=\overline{w}(k)^{-1}\overline{ v}(k)$.

Let $\gamma_v(h_i)=\mathbb{E}_{h_i}(v_i)$, $\zeta_i^v=v_i-\gamma_v(h_i)$, $\gamma_w(h_i)=\mathbb{E}_{h_i}(w_i)$, $\zeta_i^w=w_i-\gamma_w(h_i)$, $\gamma_{v^{\theta}}(h_i)=\mathbb{E}_{h_i}(v_i^{\theta})$, and $\zeta_i^{v^{\theta}}=v_i^{\theta}-\gamma_{v^{\theta}}(h_i)$. \underline{First}, we have:
\begin{align*}
&\frac{1}{N}\sum_{i}v_i^{\theta}\widetilde{v}(\widehat{k}_i)=\frac{1}{N}\sum_{i}v_i^{\theta}\overline{w}(\widehat{k}_i)^{-1}\overline{v}(\widehat{k}_i)=\frac{1}{N}\sum_{i}\overline{v}^{\theta}(\widehat{k}_i)\overline{w}(\widehat{k}_i)^{-1}	v_i\\&{=}\frac{1}{N}\sum_{i}(\overline{\gamma}_{v^{\theta}}(\widehat{k}_i)+\overline{\zeta}^{v^{\theta}}(\widehat{k}_i))(\overline{\gamma}_w(\widehat{k}_i)+\overline{\zeta}^w(\widehat{k}_i))^{-1}	v_i{=}\frac{1}{N}\sum_{i}\overline{\gamma}_{v^{\theta}}(\widehat{k}_i)\overline{\gamma}_w(\widehat{k}_i)^{-1}	v_i+O_p(\delta),
\end{align*}
where for example $\overline{\gamma}_w(k)$ is the mean of $\gamma_w(h_i)$ in group $\widehat{k}_i=k$, and we have used that $\frac{1}{N}\sum_{i}\|\overline{\zeta}^{v^{\theta}}(\widehat{k}_i)\|^2=O_p(K/N)$, $\frac{1}{N}\sum_{i}\|\overline{\zeta}^{w}(\widehat{k}_i)\|^2=O_p(K/N)$, and $\frac{1}{N}\sum_{i}\|v_i\|^2=O_p(1/T)$. Moreover:
\begin{align*}
\frac{1}{N}\sum_{i}\overline{\gamma}_{v^{\theta}}(\widehat{k}_i)\overline{\gamma}_w(\widehat{k}_i)^{-1}	v_i&=\frac{1}{N}\sum_{i}\overline{\gamma}_{v^{\theta}}(\widehat{k}_i)\overline{\gamma}_w(\widehat{k}_i)^{-1}	\gamma_v(h_i)+O_p(\delta),
\end{align*}
where we have used that $\frac{1}{N}\sum_{i}\|\overline{\zeta}^{v}(\widehat{k}_i)\|^2=O_p(K/(NT))$. Lastly, we have:
\begin{align*}
\frac{1}{N}\sum_{i}\overline{\gamma}_{v^{\theta}}(\widehat{k}_i)\overline{\gamma}_w(\widehat{k}_i)^{-1}	\gamma_v(h_i)&=\frac{1}{N}\sum_{i}{\gamma}_{v^{\theta}}(h_i)	{\gamma}_w(h_i)^{-1}\gamma_v(h_i)\\&+\frac{1}{N}\sum_{i}\left[\overline{\gamma}_{v^{\theta}}(\widehat{k}_i)\overline{\gamma}_w(\widehat{k}_i)^{-1}-{\gamma}_{v^{\theta}}(h_i){\gamma}_w(h_i)^{-1}\right]	\gamma_v(h_i),
\end{align*}
where
the first term is $O_p(\delta)$ since it is a mean of i.i.d. terms with mean $O(1/T)$ and variance $O(1/T)$, and the second term is $O_p(\delta)$ since $\frac{1}{N}\sum_{i} \|h_i-\overline{h}(\widehat{k}_i)\|^2=O_p(\delta)$ and the $\gamma$ functions are Lipschitz-continuous. 

\underline{Second}, let $v_i^{\theta}w_i^{-1}=\eta(h_i,\xi_{i0})+e_i$, where $\mathbb{E}_{h_i=h,\xi_{i0}=\xi}(e_iw_i)=0$. We have:
\begin{align*}
&\frac{1}{N}{\sum_{i}}v_i^{\theta}\left(\widetilde{\alpha}(\widehat{k}_i){-}\alpha_{i0}\right){=}\frac{1}{N}{\sum_{i}}\eta(h_i,\xi_{i0})w_i\left(\widetilde{\alpha}(\widehat{k}_i){-}\alpha_{i0}\right){+}\frac{1}{N}{\sum_{i}}e_iw_i\left(\widetilde{\alpha}(\widehat{k}_i){-}\alpha_{i0}\right),
\end{align*}
where the first term is $O_p(\delta)$ since $\frac{1}{N}\sum_{i} \|h_i-\overline{h}(\widehat{k}_i)\|^2=O_p(\delta)$, $\frac{1}{N}\sum_{i} \|\xi_{i0}-\overline{\xi}(\widehat{k}_i)\|^2=O_p(\delta)$, $\frac{1}{N}\sum_{i} \|\widetilde{\alpha}(\widehat{k}_i)-\alpha_{i0}\|^2=O_p(\delta)$, $\eta$ is Lipschitz-continuous, and $w_i$ is uniformly bounded (as in the proof of Lemma \ref{lem_sup2}), and the second term is:
\begin{align*}
\frac{1}{N}\sum_{i}e_iw_i\left(\widetilde{\alpha}(\widehat{k}_i)-\alpha_{i0}\right)&{=}\frac{1}{N}\sum_{i}e_iw_i\left(\widetilde{\alpha}(\widehat{k}_i){-}\overline{\alpha}(\widehat{k}_i)\right)+\frac{1}{N}\sum_{i}e_iw_i\left(\overline{\alpha}(\widehat{k}_i){-}\alpha_{i0}\right)\\
&{=}\frac{1}{N}\sum_{i}\overline{e w}(\widehat{k}_i)\left(\widetilde{\alpha}(\widehat{k}_i)-\overline{\alpha}(\widehat{k}_i)\right)+O_p(\delta)=O_p(\delta),
\end{align*}
where we have used that the $(e_iw_i)$'s have zero mean given $h_1,...,h_N,\xi_{10},...,\xi_{N0}$ with bounded conditional variance, and $\frac{1}{N}\sum_{i} \|\overline{e w}(\widehat{k}_i)\|^2=O_p(K/N)=O_p(\delta)$.

\underline{Finally, to show}: $\frac{1}{N}\sum_{i}\frac{\partial^2}{\partial \theta\partial \theta'}\big|_{\theta_0}\,  (\ell_{i}(\widehat{\alpha}(\widehat{k}_i,\theta),\theta)-\ell_{i}(\overline{\alpha}(\theta,\xi_{i0}),\theta))=o_p(1)$, we use similar arguments to the proof of Lemma \ref{lem_sup3}.\footnote{Although the arguments are as in the proof of  Lemma \ref{lem_sup3}, the target log-likelihood is different since here $\overline{\alpha}(\theta,\xi_{i0})$ only depends on $\xi_{i0}$, not on $(\xi_{i0}',\nu_{i0}')'$. In particular, the matrix $H^{\rm cond}$ in Assumption \ref{ass_regucond} differs from the matrix $H$ in Assumption \ref{ass_regu}; see (\ref{eq1}) for an example.}
\end{proof}

\paragraph{Example: a linear homoskedastic model.} Consider the model $Y_{it}=X_{it}\theta_0+\alpha_{i0}+U_{it}$, where $X_{it}$ are scalar and $U_{it}$ are i.i.d. with mean zero and variance $\sigma^2$ given $X_{i1},...,X_{iT},\alpha_{i0}$. Let $\widehat{\theta}$ be the GFE estimator based on a moment $h_i=\varphi(\alpha_{i0})+\varepsilon_i$ that satisfies Assumptions \ref{ass_alpha} and \ref{ass_inj} for $\xi_{i0}=\alpha_{i0}$; that is, $h_i$ is only informative about $\alpha_{i0}$, but not about the heterogeneity in $X_{it}$. Let $\zeta_i^X=\overline{X}_i-\mathbb{E}_{h_i}(\overline{X}_i)$, $\zeta_i^{\alpha}={\alpha}_{i0}-\mathbb{E}_{h_i}(\alpha_{i0})$, and $\zeta_i^U=\overline{U}_i-\mathbb{E}_{h_i}(\overline{U}_i)$. We assume that $K$ is large enough for the approximation error to be of smaller order, and that $K/N$ tends to zero, as in Corollary \ref{coro_bias}. Under appropriate conditions in the regression model, using similar arguments to the proof of Corollary \ref{theo2cond} (though with no need for any restriction on the relative rates of $N$ and $T$), one can show that $\widehat{\theta}$ admits the following expansion:
	\begin{align}&\widehat{\theta}{=}\theta_0{+}\frac{\frac{1}{N}\sum_{i}\zeta^X_i(\zeta^\alpha_i+\zeta^U_i)+\frac{1}{NT}\sum_{i,t}(X_{it}-\overline{X}_i)(U_{it}-\overline{U}_i)}{\mathbb{E}[(X_{it}-\overline{X}_i)^2]+\limfunc{Var}(\zeta^X_i)}{+}o_p\left(\frac{1}{T}\right){+}o_p\left(\frac{1}{\sqrt{NT}}\right).\label{eq1}
	\end{align}
Notice two differences between (\ref{eq1}) and the expansion of the FE estimator: the presence of $\limfunc{Var}(\zeta^X_i)$ in the denominator, and the presence of $\frac{1}{N}\sum_{i}\zeta^X_i(\zeta^\alpha_i+\zeta^U_i)$ in the numerator. In addition, notice that (\ref{eq1}) simplifies to the expression in Corollary \ref{coro_bias} in the absence of additional heterogeneity $\nu_{i0}$.

%

\section{Simulations}

\paragraph{Model of wages and participation (see (\ref{eq_ex_prob2})).} We model the initial condition as: $Y_{i0}=\boldsymbol{1}\left\{u(\alpha_{i0})\geq c(1;
\theta_0)+U_{i0}\right\}$, with $U_{i0}$ standard normal, independent of $\alpha_{i0}$. We set $c(0;\theta_0)=0$ and $c(1;\theta_0)=-1$. We set $\alpha_{i0}$ and $V_{it}$ to be independent standard normals. In the simulations based on models (\ref{eq_ex_prob2}) and (\ref{eq_ex_prob}) we weight the moments by the share of between-i variance to total variance.\footnote{Specifically, we demean and rescale $h_i$ so that all its components $h_{i\ell}$ have zero mean and unit variance, and multiply each component $h_{i\ell}$ by: $\limfunc{max}\left(\frac{\sum_{i}h_{i\ell}^2-\frac{1}{T^2}\sum_{i,t} (h_{it\ell}-h_{i\ell})^2}{\sum_{i}h_{i\ell}^2},0\right)$. Using equal weights instead has small effects in these simulations, however we observed that this particular weighting can improve performance when some moments are substantially less informative about the heterogeneity than others. 
} To compute the variance $\widehat{V}_h$ to set the number of groups in this dynamic model, we use a Newey-West expression with one lag. Lastly, for kmeans computation we use Lloyd's algorithm with 100 random starting values. Table \ref{Tab_Wages_1000} shows additional simulation results for this model.

\paragraph{Probit model with time-varying heterogeneity (see (\ref{eq_ex_prob})).}
The $U_{it}$'s are standard normal independent of the $X_{it}$'s and the $\alpha_{it0}$'s. The data generating process (DGP) for the scalar covariate is: $X_{it}=\mu_{it0}+V_{it}$, where $V_{it}$ are i.i.d. standard normal independent of the $U_{it}$'s, $\alpha_{it0}$'s, and $\mu_{it0}$'s, and $\mu_{it0}=\alpha_{it0}$. We set $\theta_0=1$, and set $\xi_{i0}$ and $\lambda_{t0}$ to be i.i.d. Gamma(1,1) draws, independent of each other. Table \ref{Tab_TV_1000} shows additional simulation results for this model, including for the two-way GFE estimator based on both the cross-sectional moments $(\frac{1}{N}\sum_{i}Y_{it},\frac{1}{N}\sum_{i}X_{it})'$, and the individual-specific moments $(\overline{Y}_i,\overline{X}_i)'$.

\paragraph{Conditional moments: an example.} Consider the following probit model:	$Y_{it}=\boldsymbol{1}\{X_{it}'\theta_0+\alpha_{i0}+U_{it}\geq 0\}$,
where the $U_{it}$ are i.i.d. standard normal independent of the $X_{it}$'s and $\alpha_{i0}$, and $\theta_0$ is a vector of ones. The DGP for the $k$-th covariate is: $X_{itk}=\boldsymbol{1}\{\mu_{i0k}+V_{itk}>0\}$, where $V_{itk}$ are i.i.d. standard normal independent of the $U_{it}$'s, $\alpha_{i0}$, and the $\mu_{i0k}$'s, and $\alpha_{i0}$ and the $\mu_{i0k}$'s follow independent standard normals. We vary the number of covariates between $1$ and $3$, so the total dimension of heterogeneity varies between $2$ and $4$. In this model, we expect the bias of FE to be moderate given the time horizon we consider ($T=20$), since $\alpha_{i0}$ is scalar and FE is a conditional approach. The question we ask here is how much the use of conditional moments can help reduce the bias of GFE due to the presence of additional heterogeneity in the covariates and the increased dimensionality of heterogeneity (see Subsection \ref{subsec_1st_ext}).

\underline{Consider first} using $h_i=(\overline{Y}_i,\overline{X}_i')'$ as moments. In Table \ref{Tab_probit_3cov} we show the biases, standard deviations, and root mean squared errors of FE and GFE among 1000 simulations, for N=1000 and T=20. In the top panel we report GFE estimates as a function of the number of groups $K$. We see that, while the bias of GFE remains moderate with one covariate, the bias increases substantially with the dimension of heterogeneity, in agreement with our theory. By comparison, the bias of FE in the bottom panel is indeed quite small, and it only increases moderately with the number of covariates.

The situation is rather different when using \underline{conditional moments} in GFE. In the middle panel in Table \ref{Tab_probit_3cov} we show simulation results for GFE based on covariates-specific conditional means $\overline{Y}_i(x)=\sum_{t=1}^T\boldsymbol{1}\{X_{it}=x\}Y_{it}/\sum_{t=1}^T\boldsymbol{1}\{X_{it}=x\}$. Importantly, in large samples these moments are only informative about $\alpha_{i0}$, not $\mu_{i0}$. We see that the bias of GFE with conditional moments increases only moderately with the number of covariates, and that FE and GFE with conditional moments have comparable --- and quite small --- biases.

Regarding implementation, note that, for a given $i$, all moments $\overline{Y}_i(x)$ may not be available since $i$'s covariates may never take the value $x$ in the sample. In Table \ref{Tab_probit_3cov}, whenever $\overline{Y}_i(x)$ is not available, we set the moment to an imputed value, the overall conditional mean $\overline{Y}(x)=\sum_{i,t}\boldsymbol{1}\{X_{it}=x\}Y_{it}/\sum_{i,t}\boldsymbol{1}\{X_{it}=x\}$. The imputation does not affect the theory, provided the event that any of the $\overline{Y}_i(x)$'s is not available tends to zero with probability approaching one in large samples.\footnote{To provide intuition in a simple case, suppose that $X_{it}$ are binary, i.i.d. over time given $\mu_{i0}$, with $\Pr(X_{it}=1\,|\, \mu_{i0}=\mu)\in(\epsilon,1-\epsilon)$ for all $\mu$, for some $\epsilon>0$. Then $\Pr(\exists i\, :\, X_{i1}=...=X_{iT}=0)\leq N (1-\epsilon)^T$, which tends to zero whenever $(\ln N)/T\rightarrow 0$.} Moreover, we have obtained similar results using an alternative conditional first step implementation that does not rely on imputations.\footnote{This implementation is as follows. Let $I_i(x)$ be the indicator that there exists a $t$ such that $X_{it}=x$, and let $x_1,...,x_M$ denote the points of support of $X_{it}$. In the first step, we use a Lloyd's-like algorithm to minimize the function $\sum_{i=1}^N\sum_{m=1}^MI_i(x_m)\left(\overline{Y}_i(x_m)-g(x_m,k_i)\right)^2$, 
	with respect to $k_1,...,k_N$ and $g(x_1,1)$, ..., $g(x_M,K)$.}

\begin{table}[h!]
	
	\caption{Model (\ref{eq_ex_prob2}) of wages and participation\label{Tab_Wages_1000}}
	
	\begin{center}
		\resizebox{.8\textwidth}{!}{
			\begin{tabular}{c r r r r r r r r r r} 
\toprule 
T & \multicolumn{1}{c}{ Bias } & \multicolumn{1}{c}{ std } & \multicolumn{1}{c}{ RMSE } & \multicolumn{1}{c}{ se/std } & \multicolumn{1}{c}{ Bias } & \multicolumn{1}{c}{ std } & \multicolumn{1}{c}{ RMSE } & \multicolumn{1}{c}{ se/std }\\[2pt]
 & \multicolumn{ 4 }{c}{ GFE, $\eta=1$} & \multicolumn{ 4 }{c}{ FE, $\eta=1$}\\[-3pt]
 \cmidrule(lr){2-5}  \cmidrule(lr){6-9}
5 &     -0.570 &      0.058 &      0.573 &      1.082 &     -0.835 &      0.064 &      0.837 &      1.066\\
10 &     -0.207 &      0.040 &      0.211 &      1.003 &     -0.418 &      0.040 &      0.420 &      1.041\\
20 &     -0.088 &      0.027 &      0.092 &      0.993 &     -0.209 &      0.026 &      0.211 &      1.064\\
30 &     -0.055 &      0.023 &      0.060 &      0.960 &     -0.140 &      0.023 &      0.142 &      0.991\\
40 &     -0.040 &      0.019 &      0.044 &      1.000 &     -0.105 &      0.019 &      0.106 &      1.034\\
50 &     -0.031 &      0.017 &      0.036 &      0.982 &     -0.084 &      0.017 &      0.086 &      1.022\\[0pt]
 & \multicolumn{ 4 }{c}{ GFE, $\eta=2$} & \multicolumn{ 4 }{c}{ FE, $\eta=2$}\\[-3pt]
 \cmidrule(lr){2-5}  \cmidrule(lr){6-9}
5 &     -0.519 &      0.063 &      0.523 &      1.052 &     -0.876 &      0.068 &      0.879 &      1.063\\
10 &     -0.163 &      0.043 &      0.169 &      0.985 &     -0.442 &      0.041 &      0.444 &      1.070\\
20 &     -0.049 &      0.031 &      0.058 &      0.929 &     -0.225 &      0.028 &      0.227 &      1.042\\
30 &     -0.032 &      0.024 &      0.040 &      0.964 &     -0.153 &      0.022 &      0.154 &      1.068\\
40 &     -0.019 &      0.020 &      0.028 &      0.981 &     -0.113 &      0.019 &      0.115 &      1.045\\
50 &     -0.015 &      0.019 &      0.024 &      0.944 &     -0.091 &      0.018 &      0.093 &      1.000\\[0pt]
\bottomrule 
\end{tabular}

		}
	\end{center}
	
	{\footnotesize\textit{Notes: $1000$ simulations, $N=1000$. ``RMSE'' is root mean squared error, ``se'' is the average of standard error estimates across simulations, ``std'' is the standard deviation of the estimator across simulations. $\eta$ is the risk aversion parameter.}}

\end{table}

	\begin{table}
		
		\caption{Probit model  (\ref{eq_ex_prob}) with time-varying heterogeneity\label{Tab_TV_1000}}
		
		\begin{center}
			\begin{sideways}
				\resizebox{1.4\textwidth}{!}{
					\begin{tabular}{c r r r r r r r r r r r r r r r r} 
\toprule 
T & \multicolumn{1}{c}{ Bias } & \multicolumn{1}{c}{ std } & \multicolumn{1}{c}{ RMSE } & \multicolumn{1}{c}{ se/std } & \multicolumn{1}{c}{ Bias } & \multicolumn{1}{c}{ std } & \multicolumn{1}{c}{ RMSE } & \multicolumn{1}{c}{ se/std } & \multicolumn{1}{c}{ Bias } & \multicolumn{1}{c}{ std } & \multicolumn{1}{c}{ RMSE } & \multicolumn{1}{c}{ se/std } & \multicolumn{1}{c}{ Bias } & \multicolumn{1}{c}{ std } & \multicolumn{1}{c}{ RMSE } & \multicolumn{1}{c}{ se/std }\\[2pt]
 & \multicolumn{ 4 }{c}{ 2-way GFE, $\sigma{=}-1$0} & \multicolumn{ 4 }{c}{ GFE, $\sigma{=}-1$0} & \multicolumn{ 4 }{c}{ FE, $\sigma{=}-1$0} & \multicolumn{ 4 }{c}{ IFE, $\sigma{=}-1$0}\\[-3pt]
 \cmidrule(lr){2-5}  \cmidrule(lr){6-9}  \cmidrule(lr){10-13}  \cmidrule(lr){14-17}
5 &     -0.045 &      0.035 &      0.057 &      0.927 &     -0.044 &      0.035 &      0.056 &      0.926 &      0.442 &      0.071 &      0.448 &      0.706 &      0.116 &      0.064 &      0.133 &      0.473\\
10 &     -0.016 &      0.024 &      0.028 &      0.939 &     -0.014 &      0.024 &      0.028 &      0.939 &      0.198 &      0.036 &      0.201 &      0.762 &      0.100 &      0.036 &      0.107 &      0.488\\
20 &     -0.003 &      0.016 &      0.016 &      1.014 &     -0.000 &      0.016 &      0.016 &      1.014 &      0.098 &      0.019 &      0.100 &      0.911 &      0.087 &      0.020 &      0.089 &      0.596\\
30 &     -0.000 &      0.013 &      0.013 &      1.009 &      0.003 &      0.013 &      0.013 &      1.013 &      0.069 &      0.014 &      0.070 &      0.966 &      0.059 &      0.014 &      0.061 &      0.675\\
40 &      0.001 &      0.011 &      0.011 &      1.021 &      0.005 &      0.011 &      0.012 &      1.016 &      0.055 &      0.012 &      0.057 &      0.949 &      0.044 &      0.012 &      0.046 &      0.676\\
50 &      0.001 &      0.010 &      0.010 &      0.995 &      0.006 &      0.010 &      0.012 &      0.994 &      0.048 &      0.011 &      0.049 &      0.947 &      0.036 &      0.011 &      0.037 &      0.677\\[0pt]
 & \multicolumn{ 4 }{c}{ 2-way GFE, $\sigma{=}0$} & \multicolumn{ 4 }{c}{ GFE, $\sigma{=}0$} & \multicolumn{ 4 }{c}{ FE, $\sigma{=}0$} & \multicolumn{ 4 }{c}{ IFE, $\sigma{=}0$}\\[-3pt]
 \cmidrule(lr){2-5}  \cmidrule(lr){6-9}  \cmidrule(lr){10-13}  \cmidrule(lr){14-17}
5 &     -0.044 &      0.042 &      0.060 &      0.883 &     -0.043 &      0.042 &      0.060 &      0.882 &      0.488 &      0.091 &      0.497 &      0.654 &      0.152 &      0.089 &      0.176 &      0.398\\
10 &     -0.022 &      0.026 &      0.035 &      0.951 &     -0.021 &      0.026 &      0.034 &      0.949 &      0.226 &      0.045 &      0.231 &      0.710 &      0.118 &      0.040 &      0.125 &      0.494\\
20 &     -0.009 &      0.018 &      0.020 &      0.969 &     -0.006 &      0.018 &      0.019 &      0.964 &      0.108 &      0.023 &      0.110 &      0.843 &      0.099 &      0.023 &      0.101 &      0.571\\
30 &     -0.004 &      0.014 &      0.015 &      1.001 &      0.001 &      0.014 &      0.014 &      1.000 &      0.072 &      0.017 &      0.074 &      0.918 &      0.068 &      0.017 &      0.070 &      0.615\\
40 &     -0.002 &      0.013 &      0.013 &      0.989 &      0.004 &      0.013 &      0.013 &      0.985 &      0.056 &      0.014 &      0.058 &      0.922 &      0.051 &      0.014 &      0.052 &      0.639\\
50 &     -0.001 &      0.011 &      0.012 &      0.965 &      0.005 &      0.012 &      0.013 &      0.961 &      0.046 &      0.012 &      0.047 &      0.926 &      0.040 &      0.012 &      0.042 &      0.643\\[0pt]
 & \multicolumn{ 4 }{c}{ 2-way GFE, $\sigma{=}1$} & \multicolumn{ 4 }{c}{ GFE, $\sigma{=}1$} & \multicolumn{ 4 }{c}{ FE, $\sigma{=}1$} & \multicolumn{ 4 }{c}{ IFE, $\sigma{=}1$}\\[-3pt]
 \cmidrule(lr){2-5}  \cmidrule(lr){6-9}  \cmidrule(lr){10-13}  \cmidrule(lr){14-17}
5 &     -0.049 &      0.056 &      0.074 &      0.754 &     -0.048 &      0.056 &      0.074 &      0.754 &      0.565 &      0.146 &      0.583 &      0.506 &      0.207 &      0.117 &      0.238 &      0.359\\
10 &     -0.032 &      0.029 &      0.043 &      0.981 &     -0.030 &      0.029 &      0.042 &      0.979 &      0.251 &      0.062 &      0.258 &      0.603 &      0.141 &      0.043 &      0.147 &      0.513\\
20 &     -0.014 &      0.020 &      0.024 &      0.986 &     -0.010 &      0.020 &      0.022 &      0.983 &      0.114 &      0.027 &      0.117 &      0.825 &      0.125 &      0.029 &      0.128 &      0.514\\
30 &     -0.007 &      0.016 &      0.017 &      0.996 &     -0.001 &      0.016 &      0.016 &      0.992 &      0.074 &      0.019 &      0.077 &      0.889 &      0.085 &      0.021 &      0.088 &      0.561\\
40 &     -0.005 &      0.014 &      0.015 &      0.985 &      0.001 &      0.014 &      0.014 &      0.980 &      0.055 &      0.016 &      0.057 &      0.915 &      0.063 &      0.016 &      0.065 &      0.611\\
50 &     -0.003 &      0.012 &      0.012 &      1.027 &      0.004 &      0.012 &      0.013 &      1.025 &      0.044 &      0.014 &      0.046 &      0.951 &      0.050 &      0.014 &      0.052 &      0.632\\[0pt]
 & \multicolumn{ 4 }{c}{ 2-way GFE, $\sigma{=}1$0} & \multicolumn{ 4 }{c}{ GFE, $\sigma{=}1$0} & \multicolumn{ 4 }{c}{ FE, $\sigma{=}1$0} & \multicolumn{ 4 }{c}{ IFE, $\sigma{=}1$0}\\[-3pt]
 \cmidrule(lr){2-5}  \cmidrule(lr){6-9}  \cmidrule(lr){10-13}  \cmidrule(lr){14-17}
5 &     -0.016 &      0.075 &      0.076 &      0.691 &     -0.015 &      0.075 &      0.076 &      0.692 &      0.706 &      0.262 &      0.753 &      0.386 &      0.300 &      0.255 &      0.394 &      0.218\\
10 &     -0.013 &      0.035 &      0.037 &      0.946 &     -0.010 &      0.035 &      0.036 &      0.947 &      0.323 &      0.097 &      0.337 &      0.486 &      0.183 &      0.060 &      0.192 &      0.458\\
20 &     -0.002 &      0.024 &      0.024 &      0.975 &      0.003 &      0.024 &      0.024 &      0.967 &      0.150 &      0.037 &      0.154 &      0.719 &      0.168 &      0.036 &      0.172 &      0.474\\
30 &      0.002 &      0.019 &      0.019 &      0.991 &      0.008 &      0.019 &      0.021 &      0.983 &      0.100 &      0.025 &      0.104 &      0.814 &      0.121 &      0.030 &      0.125 &      0.456\\
40 &      0.003 &      0.016 &      0.016 &      0.989 &      0.010 &      0.016 &      0.019 &      0.985 &      0.076 &      0.020 &      0.079 &      0.851 &      0.091 &      0.021 &      0.093 &      0.536\\
50 &      0.002 &      0.014 &      0.015 &      0.995 &      0.010 &      0.014 &      0.018 &      0.996 &      0.061 &      0.017 &      0.063 &      0.911 &      0.073 &      0.017 &      0.075 &      0.592\\[0pt]
\bottomrule 
\end{tabular}

				}
			\end{sideways}
		\end{center}
		
		{\footnotesize\textit{Notes: See notes to Table \ref{Tab_Wages_1000}. IFE is interacted fixed-effects with one factor. $\sigma$ is the substitution parameter.}}

	\end{table}

\clearpage

%
%

	\begin{table}
		\caption{Probit model with binary covariates\label{Tab_probit_3cov}}		
		\begin{center}
			\begin{tabular}{c r r r r r r r r r} 
\toprule 
K & \multicolumn{1}{c}{ Bias } & \multicolumn{1}{c}{ std } & \multicolumn{1}{c}{ RMSE } & \multicolumn{1}{c}{ Bias } & \multicolumn{1}{c}{ std } & \multicolumn{1}{c}{ RMSE } & \multicolumn{1}{c}{ Bias } & \multicolumn{1}{c}{ std } & \multicolumn{1}{c}{ RMSE }\\[2pt]
 & \multicolumn{ 3 }{c}{ GFE, 1 covariate} & \multicolumn{ 3 }{c}{ GFE, 2 covariates} & \multicolumn{ 3 }{c}{ GFE, 3 covariates}\\[-3pt]
 \cmidrule(lr){2-4}  \cmidrule(lr){5-7}  \cmidrule(lr){8-10}
5 &     -0.189 &      0.029 &      0.191 &     -0.293 &      0.031 &      0.295 &     -0.362 &      0.042 &      0.365\\
10 &     -0.083 &      0.027 &      0.088 &     -0.205 &      0.032 &      0.207 &     -0.275 &      0.035 &      0.278\\
20 &     -0.017 &      0.029 &      0.033 &     -0.118 &      0.030 &      0.122 &     -0.206 &      0.033 &      0.209\\
30 &      0.006 &      0.029 &      0.030 &     -0.081 &      0.030 &      0.086 &     -0.166 &      0.032 &      0.169\\
40 &      0.018 &      0.029 &      0.035 &     -0.056 &      0.030 &      0.064 &     -0.136 &      0.033 &      0.140\\
50 &      0.026 &      0.030 &      0.039 &     -0.040 &      0.031 &      0.051 &     -0.116 &      0.033 &      0.120\\[4pt]
 & \multicolumn{ 3 }{c}{ Cond. GFE, 1 covariate} & \multicolumn{ 3 }{c}{ Cond. GFE, 2 covariates} & \multicolumn{ 3 }{c}{ Cond. GFE, 3 covariates}\\[-3pt]
 \cmidrule(lr){2-4}  \cmidrule(lr){5-7}  \cmidrule(lr){8-10}
5 &     -0.060 &      0.035 &      0.069 &     -0.085 &      0.037 &      0.093 &     -0.111 &      0.039 &      0.117\\
10 &     -0.045 &      0.033 &      0.056 &     -0.073 &      0.043 &      0.085 &     -0.100 &      0.044 &      0.109\\
20 &     -0.015 &      0.034 &      0.038 &     -0.046 &      0.037 &      0.059 &     -0.075 &      0.045 &      0.087\\
30 &      0.008 &      0.036 &      0.036 &     -0.031 &      0.036 &      0.047 &     -0.061 &      0.043 &      0.075\\
40 &      0.025 &      0.035 &      0.043 &     -0.020 &      0.037 &      0.041 &     -0.050 &      0.042 &      0.065\\
50 &      0.034 &      0.035 &      0.049 &     -0.012 &      0.036 &      0.038 &     -0.040 &      0.041 &      0.057\\[4pt]
 & \multicolumn{ 3 }{c}{ FE, 1 covariate} & \multicolumn{ 3 }{c}{ FE, 2 covariates} & \multicolumn{ 3 }{c}{ FE, 3 covariates}\\[-3pt]
 \cmidrule(lr){2-4}  \cmidrule(lr){5-7}  \cmidrule(lr){8-10}
- &      0.062 &      0.031 &      0.069 &      0.074 &      0.034 &      0.081 &      0.088 &      0.039 &      0.097\\[0pt]
\bottomrule 
\end{tabular}

		\end{center}
		{\footnotesize\textit{Notes: $1000$ simulations, $N=1000$, $T=20$. In the top panel we show GFE estimates based on unconditional moments for different $K$ values, in the middle panel we show GFE estimates based on conditional moments for different $K$ values, in the bottom row we show FE estimates.}}
		
	\end{table}

\end{document}